\documentclass[10pt,twocolumn,twoside]{IEEEtran}

\usepackage{amsmath}
\usepackage{amssymb}
\usepackage{amsthm}
\usepackage{graphicx}
\usepackage{epstopdf}
\usepackage{bm}
\usepackage{cite}
\usepackage{subfigure}
\usepackage{color}

\newtheorem{theorem}{Theorem}[section]
\newtheorem{lemma}[theorem]{Lemma}
\newtheorem{corollary}[theorem]{Corollary}

\newtheorem{remark}[theorem]{Remark}
\newtheorem{example}[theorem]{Example}

\allowdisplaybreaks

\begin{document}
\title{Sensor Scheduling in Variance Based Event Triggered Estimation with Packet Drops} 

\author{Alex S. Leong, Subhrakanti Dey, and Daniel E. Quevedo
 \thanks{A preliminary version of parts of this work was presented at the 2015 European Control Conference, Linz, Austria \cite{LeongDeyQuevedo_ECC}.}
 \thanks{A. Leong and D. Quevedo are with the Department of Electrical Engineering (EIM-E), Paderborn University, 33098 Paderborn, Germany. 
   E-mail: {\tt alex.leong@upb.de, dquevedo@ieee.org}. S. Dey is with the Department of Engineering Science, Uppsala University, Uppsala, Sweden. E-mail: {\tt Subhra.Dey@signal.uu.se.} }
\thanks{This work was supported by the Australian Research Council under grants  DE120102012 and DP120101122.}  
 }

\maketitle

\begin{abstract}
This paper considers a remote state estimation problem with multiple sensors observing a dynamical 
process, where sensors transmit local state estimates over an independent and identically distributed (i.i.d.) packet dropping channel to a remote estimator. At every discrete time instant, the remote estimator decides whether each sensor should transmit or not, with each sensor transmission incurring a fixed energy cost. The channel is shared such that collisions will occur if more than one sensor transmits at a time. Performance is quantified via an optimization problem that minimizes a convex combination of the expected estimation error covariance at the remote estimator and expected energy usage across the sensors. For transmission schedules dependent only on the estimation error covariance at the remote estimator, this work establishes structural results on the optimal scheduling which show that 1) for unstable systems, if the error covariance is large then a sensor will always be scheduled to transmit, and 2) there is a threshold-type behaviour in switching from one sensor transmitting to another. Specializing to the single sensor case, these structural results demonstrate that a threshold policy (i.e. transmit if the error covariance exceeds a certain threshold and don't transmit otherwise) is optimal. We also consider the situation where sensors transmit measurements instead of state estimates, and establish structural results including the optimality of threshold policies for the single sensor, scalar case. These results provide a theoretical justification for the use of such threshold policies in variance based event triggered estimation. Numerical studies confirm the qualitative behaviour predicted by our structural results.  An extension of the structural results to Markovian packet drops is also outlined.
\end{abstract}

\section{Introduction}
The concept of event triggered estimation of dynamical systems, where sensor measurements or state estimates are sent to a remote estimator/controller only when certain events occur, has gained significant recent attention. By transmitting only when necessary, as dictated by performance objectives,  e.g., such as when the estimation quality at the remote estimator has deteriorated sufficiently, potential savings in energy usage can be achieved, which are important in networked estimation and control applications. 

\emph{Related Work}: Event triggered estimation has been investigated in e.g. \cite{XuHespanha_comm_logic,ImerBasar,CogillLallHespanha,LiLemmonWang,TrimpeDAndrea_IFAC,WeimerAraujoJohansson,SijsLazar,TrimpeDAndrea_journal,XiaGuptaAntsaklis,WuJiaJohanssonShi,Han_event_trigger,Trimpe_CDC},  while event triggered control has also been studied in  e.g. \cite{AstromBernhardsson,Tabuada,RabiJohansson,RameshSandbergJohansson,Quevedo_event_trigger}. 
Many rules for deciding when a sensor should transmit have been proposed in the literature, such as if the estimation error \cite{ImerBasar,LiLemmonWang,WeimerAraujoJohansson,XiaGuptaAntsaklis}, error in predicted output \cite{TrimpeDAndrea_IFAC,Trimpe_CDC}, other functions of the estimation error \cite{CogillLallHespanha,WuJiaJohanssonShi,Han_event_trigger},  or the error covariance \cite{TrimpeDAndrea_journal}, exceeds a given threshold. These transmission policies often lead to energy savings. However, the motivation for using these rules are usually based on heuristics. 
Another gap in  current literature on event triggered estimation is that mostly the idealized case, where all  transmissions (when scheduled) are  received at the remote estimator, is considered. Packet drops \cite{Sinopoli}, which are unavoidable when using a wireless communication medium, are neglected in these works, save for some works in event triggered control \cite{RabiJohansson,Quevedo_event_trigger}.

In a different line of research, sensor scheduling problems, where one wants to determine a schedule such that at each time instant, one or more sensors are chosen to transmit in order to minimize an expected error covariance performance measure, have been extensively studied, see  e.g. \cite{GuptaChungHassibiMurray,ShiChengChen,MoGarone,ShiZhang,Huber}. However, these schedules are often constructed ahead of time in an offline manner and do not take into account random packet drops or  variations in the state estimates, i.e. are not event triggered. Covariance based switching for scheduling between two sensors was investigated in \cite{SandbergRabiSkoglundJohansson}.  Structural results were derived for infinite horizon sensor scheduling problems in \cite{MoGaroneSinopoli,ZhaoZhangHu}, which showed that optimal schedules are independent of initial conditions and can be approximated arbitrarily closely with  periodic schedules of finite length, with \cite{MoGaroneSinopoli} also extending these results to networks with packet drops.

\emph{Summary of Contributions}: 
In this paper, we consider a multi-sensor event triggered estimation problem with i.i.d. packet drops, and derive structural properties on the optimal transmission schedule. In particular, the main contributions of this paper are:
\begin{itemize}
\item In contrast to previous works on event-triggered estimation, we allow for the more practical situation where sensor transmissions  experience random packet drops.  
\item Rather than specifying the form of the transmission schedule a priori, in this work the transmission decisions are determined by solving an optimization problem that minimizes a convex combination of the expected error covariance and expected energy usage.
\item We derive \emph{structural results} on the form of the subsequent optimal transmission schedule. For transmission schedules which decide whether to transmit local state estimates based only on knowledge of the error covariance at the remote estimator, our analysis  shows that 1) for unstable systems, if the error covariance is large, then a sensor will always be scheduled to transmit, and 2) there is a threshold-type behaviour in switching from one sensor transmitting to another. 
\item Specializing these structural results to the single sensor case shows that a \emph{threshold policy}, where the sensor transmits if the error covariance exceeds a threshold and does not transmit otherwise, is optimal. This result has also been proved different techniques in our conference contribution \cite{LeongDeyQuevedo_ECC}, and, in a related setup, in \cite{MoSinopoliShiGarone}. For noiseless measurements and no packet drops, similar structural results were derived using majorization theory for scalar \cite{LipsaMartins} and vector \cite{NayyarBasar} systems respectively. 
\item In the situation where sensor measurements (rather than local estimates) are transmitted, related structural results are derived, in particular  the optimality of threshold policies in the single sensor, scalar case. These structural results provide a theoretical justification for the use of such variance based threshold policies in event triggered estimation.
 However, for vector systems, we provide counterexamples to show that in general threshold-type policies are not optimal.  
\item The structural results are extended to Markovian packet drops, where we show that for a single sensor there exist in general two different thresholds, depending on whether packets were dropped or received at the previous time instant.
\end{itemize}

The remainder of this paper is organized as follows. 
 Section \ref{model_sec} presents the system model, while the optimization problems are formulated in Section \ref{optimization_prob_sec}. Structural results on the optimal transmission scheduling are derived in Sections \ref{finite_horizon_structural_sec} and \ref{infinite_horizon_structural_sec}. The special case of a single sensor is then studied in Section \ref{single_sensor_sec}. The situation where sensor measurements are transmitted is studied in Section \ref{tx_meas_sec}. 
Numerical studies, including comparisons of our approach with schemes where transmission decisions are made using current sensor measurements, are presented in Section \ref{numerical_sec}. 
An extension of our structural results to Markovian packet drops is outlined in Section \ref{markovian_sec}. 

\section{System model and Remote Estimation Schemes}
\label{model_sec}
A diagram of the system model is shown in Fig. \ref{system_model}.
Consider a discrete time process 
\begin{equation}
\label{state_eqn}
x_{k+1}=A x_k + w_k
\end{equation}
where $x_k \in \mathbb{R}^n$ and $w_k$ is i.i.d. Gaussian with zero mean and covariance $Q$. 
There are $M$ sensors, with each sensor having measurements
\begin{equation}
\label{measurement_eqn}
y_{m,k} = C_m x_k + v_{m,k}, \quad m \in \{1,\dots,M\}
\end{equation}
where $y_{m,k} \in \mathbb{R}^{n_m}$ and $v_{m,k}$ is Gaussian with zero mean and covariance $R_m$. 
\begin{figure}[tb!]
\centering 
\includegraphics[scale=0.4]{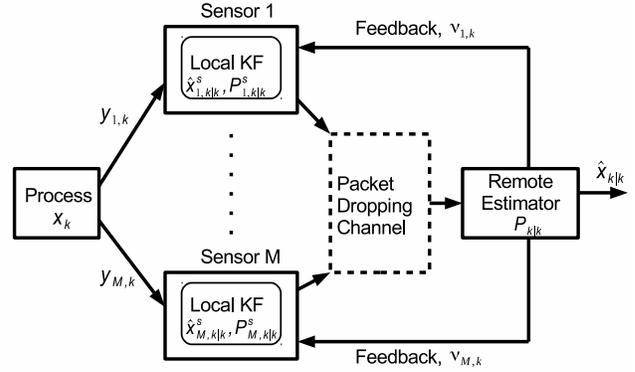} 
\caption{System model}
\label{system_model}
\end{figure}
The noise processes $\{w_k\}, \{v_{m,k}\}, m=1,\dots,M$ are assumed to be mutually independent. 

Each sensor has some computational capability and can run a local Kalman filter. 
The local state estimates and error covariances
\begin{equation*}
\begin{split}
\hat{x}_{m,k|k-1}^s & \triangleq \mathbb{E}[x_k|y_{m,0},\dots,y_{m,k-1}] \\
\hat{x}_{m,k|k}^s & \triangleq \mathbb{E}[x_k|y_{m,0},\dots,y_{m,k}] \\
P_{m,k|k-1}^s & \triangleq  \mathbb{E}[(x_k-\hat{x}_{m,k|k-1}^s)(x_k-\hat{x}_{m,k|k-1}^s)^T\\ & \quad\quad\quad|y_{m,0},\dots,y_{m,k-1}]\\
P_{m,k|k}^s & \triangleq  \mathbb{E}[(x_k-\hat{x}_{m,k|k}^s)(x_k-\hat{x}_{m,k|k}^s)^T|y_{m,0},\dots,y_{m,k}]
\end{split}
\end{equation*}
can be computed using the standard  Kalman filtering equations at sensors $m=1,\dots,M$. For the results in Sections \ref{model_sec}-\ref{structural_properties_sec}, we will assume that each pair $(A,C_m)$ is detectable and the pair $(A,Q^{1/2})$ is stabilizable. In Section \ref{tx_meas_sec} we will relax this assumption when we consider transmission of sensor measurements, and consequently only detectability of the overall system is required. 
Let $\bar{P}_m^s$ be the steady state value of $P_{m,k|k-1}^s$, and $\bar{P}_m$ be the steady state value of $P_{m,k|k}^s$, as $k \rightarrow \infty$, which both exist due to the detectability assumptions. 

 Let $\nu_{m,k} \in \{0,1\}, m=1,\dots,M$ be  decision variables such that $\nu_{m,k}=1$ if and only if $\hat{x}_{m,k|k}^s$ is to be transmitted to the remote estimator at time $k$. 
 Transmitting state estimates when there are packet drops generally gives better estimation performance than transmitting measurements \cite{XuHespanha,Schenato}, and in the case of a single sensor is the best non-causal strategy \cite{GuptaHassibiMurray}. 
 We will focus on the situation where $\nu_{m,k}$ are computed at the remote estimator at time $k-1$ and communicated to the sensors without error via feedback links before transmission at the next time instant $k$,\footnote{This requires synchronization between each sensor and the remote estimator, though not between individual sensors. Note that in wireless communications,
online computation of powers at the base station which is then fed back to the mobile transmitters is commonly done in practice \cite{Molisch}, at time scales on the order of milli-seconds.} see Section \ref{imperfect_feedback_sec} on how to take into account losses in the feedback links. 
Since our interest lies in decision making at the remote estimator, we shall assume that the decisions $\nu_{m,k}$ do not depend on the current value of $x_k$ (or functions of $x_k$ such as measurements and local state estimates). In particular, in this paper we will assume that $\nu_{m,k}$ depends only on the error covariance at the remote estimator, similar to the variance based triggering schemes of \cite{TrimpeDAndrea_journal}, see Section \ref{optimization_prob_sec}. 
 
At time instances when $\nu_{m,k}=1$, sensor $ m$ transmits its local state estimate $\hat{x}_{m,k|k}^s$ over a packet dropping channel.
Let $\gamma_{m,k}, m=1,\dots,M$ be random variables such that $\gamma_{m,k}=1$ if the transmission from sensor $m$ at time $k$ is successfully received by the remote estimator, and $\gamma_{m,k}=0$ otherwise. It is assumed that the channel is shared such that if more than one sensor transmits at any time, then collisions will occur. Thus, $\gamma_{m,k}=0$ and $\gamma_{n,k} = 0$ with probability one if both $\nu_{m,k} = \nu_{n,k}=1$.
We will assume that $\{\gamma_{m,k}\}$ are i.i.d. Bernoulli with 
$$\mathbb{P}(\gamma_{m,k}=1) = \lambda_m, \quad m=1,\dots,M.$$
See Section \ref{markovian_sec} for  some results with Markovian packet drops.

\subsection{Optimal Remote Estimator}
At instances where $\nu_{m,k}=1$, it is assumed that the remote estimator knows whether the transmission was successful or not, i.e., the remote estimator knows the value $\gamma_{m,k}$. While if  $\nu_{m,k}=0$, since sensor $m$ is not scheduled to transmit at this time, the corresponding $\gamma_{m,k}$ is assumed to be of no use to the remote estimator. We can define 
\begin{equation*}
\begin{split}
\mathcal{I}_k \triangleq & \{\nu_{1,0},\dots,\nu_{1,k}, \nu_{1,0} \gamma_{1,0},\dots,\nu_{1,k} \gamma_{1,k},\\ &\quad \nu_{1,0} \gamma_{1,0} \hat{x}_{1,0|0}^s,\dots,\nu_{1,k} \gamma_{1,k} \hat{x}_{1,k|k}^s, \dots\dots,
\\ & \quad \nu_{M,0},\dots,\nu_{M,k},\nu_{M,0} \gamma_{M,0},\dots,\nu_{M,k} \gamma_{M,k}, \\ &\quad \nu_{M,0} \gamma_{M,0} \hat{x}_{M,0|0}^s,\dots,\nu_{M,k} \gamma_{M,k} \hat{x}_{M,k|k}^s\}
\end{split}
\end{equation*}
as the information set available to the remote estimator at time $k$. 
 Denote the state estimates and error covariances at the remote estimator by:
 \begin{equation*}
\begin{split}
\hat{x}_{k|k} & \triangleq \mathbb{E}[x_k| \mathcal{I}_k] \\
\hat{x}_{k+1|k} & \triangleq \mathbb{E}[x_{k+1}| \mathcal{I}_k] \\
P_{k|k} & \triangleq  \mathbb{E}[(x_k-\hat{x}_{k|k})(x_k-\hat{x}_{k|k})^T|\mathcal{I}_k] \\
P_{k+1|k} & \triangleq  \mathbb{E}[(x_{k+1}-\hat{x}_{k+1|k})(x_{k+1}-\hat{x}_{k+1|k})^T|\mathcal{I}_k].
\end{split}
\end{equation*}

If a sensor $\breve{m} \in \{1,\dots,M\}$ has been scheduled by the remote estimator to transmit at time $k$,\footnote{Since collisions occur if more than one sensor transmits at the same time, we clearly should not schedule more than one sensor to transmit at a time.} 
then the state estimates and error covariances at the remote estimator
are updated as follows:
\begin{align}
\label{multi_sensor_optimal_estimator}
\hat{x}_{k+1|k} & = A \hat{x}_{k|k}  \nonumber \\
\hat{x}_{k|k} & = \hat{x}_{k|k-1} + \gamma_{\breve{m},k} K_{\breve{m},k} (\hat{x}_{\breve{m},k|k}^s - \hat{x}_{k|k-1}) \nonumber \\
P_{k+1|k} & = A P_{k|k} A^T + Q  \nonumber \\
P_{k|k} & = (I - \gamma_{\breve{m},k} K_{\breve{m},k}) P_{k|k-1} (I - \gamma_{\breve{m},k} K_{\breve{m},k})^T  \nonumber\\ & \!\!\!\!\!\!\!\!  + \gamma_{\breve{m},k}  (I\!-\!\gamma_{\breve{m},k} K_{\breve{m},k}) P_{0\breve{m},k} (I \!-\! K^s_{\breve{m},k} C_{\breve{m}})^T K_{\breve{m},k}^T   \nonumber \\
& \!\!\!\!\!\!\!\! + \gamma_{\breve{m},k}  K_{\breve{m},k} (I \!-\! K^s_{\breve{m},k} C_{\breve{m}}) P_{0\breve{m},k}^T (I\!-\!\gamma_{\breve{m},k} K_{\breve{m},k})^T  \nonumber \\ & \!\!\!\!\!\!\!\! + \gamma_{\breve{m},k}  K_{\breve{m},k} (I\!-\!K^s_{\breve{m},k} C_{\breve{m}}) P_{\breve{m},k|k-1}^s (I\!-\!K^s_{\breve{m},k} C_{\breve{m}})^T K_{\breve{m},k}^T  \nonumber \\
& \!\!\!\!\!\!\!\! + \gamma_{\breve{m},k}  K_{\breve{m},k} K^s_{\breve{m},k} R_{\breve{m}} K^{s T}_{\breve{m},k} K_{\breve{m},k}^T   \nonumber \\
P_{0\breve{m},k+1} & = A (I\!-\!\gamma_{\breve{m},k} K_{\breve{m},k}) P_{0\breve{m},k} (I\!-\!K^s_{\breve{m},k} C_{\breve{m}})^T A^T  \nonumber \\ & \!\!\!\!\!\!\!\!+\! \gamma_{\breve{m},k} A K_{\breve{m},k} (I\!-\!K^s_{\breve{m},k} C_{\breve{m}}) P_{\breve{m},k|k-1}^s (I - K^s_{\breve{m},k} C_{\breve{m}})^T A^T  \nonumber \\ 
&\!\!\!\!\!\!\!\! + Q + \gamma_{\breve{m},k} A K_{\breve{m},k} K^s_{\breve{m},k} R_{\breve{m}} K^{s T}_{\breve{m},k} A^T  \nonumber \\
P_{0m,k+1} & = A (I\!-\!\gamma_{\breve{m},k} K_{\breve{m},k}) P_{0m,k} (I\!-\!K^s_{m,k} C_m)^T A^T  \nonumber \\ & \!\!\!\!\!\!\!\!+\! \gamma_{\breve{m},k} A K_{\breve{m},k} (I\!-\!K^s_{\breve{m},k} C_{\breve{m}}) P_{\breve{m}m,k} (I \!-\! K^s_{m,k} C_m)^T A^T 
 \nonumber \\ &  \!\!\!\!\!\!\!\! +Q, \quad m \neq \breve{m}  \nonumber \\
P_{mn,k+1} & = A(I-K^s_{m,k} C_m) P_{mn,k} (I-K^s_{n,k} C_{n})^T A^T + Q,  \nonumber \\ & \quad\quad m, n > 0, m \neq n
%
\end{align}
where $K^s_{m,k}\triangleq P_{m,k|k-1}^s C_m^T (C_m P_{m,k|k-1}^s C_m^T + R_m)^{-1}$ is the local Kalman filter gain of sensor $m$ at time $k$, $K_{\breve{m},k} \! = I$ if $P_{k|k-1} \!- \! P_{0\breve{m},k}(I\!-\!K^s_{\breve{m},k} C_{\breve{m}})^T \! - \! (I \!-\!K^s_{\breve{m},k} C_{\breve{m}}) P_{0\breve{m},k}^T  + (I-K^s_{\breve{m},k} C_{\breve{m}}) P_{\breve{m},k|k-1}^s (I-K^s_{\breve{m},k} C_{\breve{m}})^T  + K^s_{\breve{m},k} R_{\breve{m}} K^{sT}_{\breve{m},k} = P_{k|k-1} - P_{0\breve{m},k} (I - K^s_{\breve{m},k} C_{\breve{m}})^T$, and $K_{\breve{m},k} \! = \Big(P_{k|k-1} - P_{0\breve{m},k} (I - K^s_{\breve{m},k} C_{\breve{m}})^T \Big) \Big(P_{k|k-1} - P_{0\breve{m},k}(I-K^s_{\breve{m},k} C_{\breve{m}})^T - (I-K^s_{\breve{m},k} C_{\breve{m}}) P_{0\breve{m},k}^T  + (I-K^s_{\breve{m},k} C_{\breve{m}}) P_{\breve{m},k|k-1}^s (I-K^s_{\breve{m},k} C_{\breve{m}})^T + K^s_{\breve{m},k} R_{\breve{m}} K^{sT}_{\breve{m},k} \Big)^{-1}$ otherwise.
  The last three equations in (\ref{multi_sensor_optimal_estimator}) compute the quantities:
\begin{equation*}
\begin{split}
P_{0m,k} & \triangleq \mathbb{E}[(x_k-\hat{x}_{k|k-1})(x_k-\hat{x}_{m,k|k-1}^s)^T|\mathcal{I}_k]\\
P_{m0,k} & \triangleq \mathbb{E}[(x_k-\hat{x}_{m,k|k-1}^s)(x_k-\hat{x}_{k|k-1})^T|\mathcal{I}_k] \\
P_{mn,k} & \triangleq \mathbb{E}[(x_k-\hat{x}_{m,k|k-1}^s)(x_k-\hat{x}_{n,k|k-1}^s)^T|\mathcal{I}_k] \\
\end{split}
\end{equation*}
for $ m,n=1,\dots,M,$ where we note that $P_{0n,k} = P_{n0,k}^T$, and  $P_{nn,k} = P_{n,k|k-1}^s$. 

If no sensors are scheduled to transmit at time $k$, then the state estimates and error covariances are simply updated by:
\begin{equation}
\label{remote_estimator_no_tx} 
\begin{split}
\hat{x}_{k+1|k} & = A \hat{x}_{k|k}, \quad
\hat{x}_{k|k}  = \hat{x}_{k|k-1}  \\
P_{k+1|k} & = A P_{k|k} A^T + Q, \quad
P_{k|k}  =  P_{k|k-1}, \\
P_{0m,k+1} & = A  P_{0m,k} (I\!-\!K^s_{m,k} C_m)^T A^T + Q, \quad m=1,\dots,M
\end{split}
\end{equation}
The derivation of the optimal estimator equations (\ref{multi_sensor_optimal_estimator})-(\ref{remote_estimator_no_tx}) can be found in Appendix \ref{optimal_estimator_derivation}.

\begin{remark}
In (\ref{multi_sensor_optimal_estimator}), the terms $P_{0m,k+1}$ and $P_{mn,k+1}$ for $m, n \neq \breve{m}$ also need to be computed, since the scheduled sensor $\breve{m}$ will in general change over time. 
\end{remark}


\subsection{Suboptimal Remote Estimator}
\label{subopt_estimator_sec}
The estimator equations (\ref{multi_sensor_optimal_estimator}) are optimal, but difficult to analyze and derive structural results for. A suboptimal estimator that often performs well is a constant gain  estimator, which has the form (\ref{multi_sensor_optimal_estimator}) but with $K_{\breve{m},k} $ replaced by the constant gain $K_{\breve{m}} $ 
whenever sensor $\breve{m} \in \{1,\dots,M\}$ is scheduled to transmit. Suppose the constant gains $K_m, m=1,\dots,M$ are chosen using a similar procedure to \cite{Schenato},
 where $(P, P_{0m}, K_m)$ is a fixed point of the following set of equations:
\begin{align}
\label{fixed_point_equations}
P & = \lambda_m A (I - K_{m}) P (I -  K_{m})^T A^T   + (1-\lambda_m) A P A^T \nonumber\\ & \quad + \lambda_m  A (I- K_{m}) P_{0m} (I - K^s_m C_m)^T K_{m}^T  A^T \nonumber\\ 
& \quad + \lambda_{m}  A K_m (I - K^s_m C_m) P_{0m} (I- K_{m})^T A^T   \nonumber \\ 
& \quad + \lambda_{m}  A K_m (I - K^s_m C_m) \bar{P}_{m}^s (I- K^s_m C_m)^T K_m^T A^T   \nonumber \\ 
& \quad + \lambda_m A K_{m} K^s_m R_m K^{sT}_m K_{m}^T  A^T + Q\nonumber\\
P_{0m} & = A (I-\lambda_{m} K_{m}) P_{0m} (I-K^s_m C_m)^T A^T \nonumber\\ & \quad + \lambda_{m} A K_{m} (I-K^s_m C_m) \bar{P}_m^s (I - K^s_m C_m)^T A^T \nonumber\\ &\quad + Q + \lambda_{m} A K_{m} K^s_m R_m K^{sT}_m A^T \nonumber\\
K_{m} & = \Big(P - P_{0m} (I - K^s_m C_m)^T\Big)   \Big(P - P_{0m}(I-K^s_m C_m)^T \nonumber\\ & \quad - (I\!-\!K^s_m C_m) P_{0m}^T  \!+\! (I\!-\!K^s_m C_m) \bar{P}_m^s (I\!-\!K^s_m C_m)^T \nonumber\\ & \quad + K^s_m R_m K^{sT}_m \Big)^{-1},
\end{align}
with $K_m^s \triangleq \bar{P}_m^s C_m^T (C_m \bar{P}_m^s C_m^T + R_m)^{-1}$ being the steady state local Kalman gain of sensor $m$. The equations (\ref{fixed_point_equations}) are obtained by averaging over $\gamma_{\breve{m},k}$ in the recursion for $P_{k|k}$ (as well as the associated quantities $P_{0\breve{m},k}$ and $K_{\breve{m},k}$) in (\ref{multi_sensor_optimal_estimator}), and taking the steady state.

Then we have the following result:
\begin{theorem}
\label{fixed_point_theorem}
Suppose that $A$ is either (i) stable, or (ii) unstable but with  $\lambda_m > 1 - \frac{1}{\max_{i} |\sigma_i(A)|^2}, m = 1,\dots,M$, where $\sigma_i(A)$ is an eigenvalue of $A$. Then for each $m \in \{1,\dots,M\}$, amongst all possible constant gains $K_m$ satisfying $\max_{i} |\sigma_i(A(I-\lambda_m K_m))| < 1$,  there is a unique fixed point $(K_m, P_{0m}, P)$ to the set of equations (\ref{fixed_point_equations}) with 
$K_m=I, P_{0m} = \bar{P}_m^s$, and  $P$ being the unique solution to the equation
$P = (1-\lambda_m) A P A^T + Q + \lambda_m A (I-K^s_m C_m) \bar{P}_m^s (I-K^s_m C_m)^T A^T + \lambda_m A K^s_m R_m K^{sT}_m A^T.$
\end{theorem}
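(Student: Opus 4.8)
The plan is to treat the three fixed-point equations in (\ref{fixed_point_equations}) in turn, exploiting the standard steady-state identities of the local Kalman filter. Throughout I would use that, since $(A,C_m)$ is detectable and $(A,Q^{1/2})$ stabilizable, the steady-state predictor closed loop $A(I-K^s_m C_m)$ is stable, and that $\bar P^s_m$ and $\bar P_m$ satisfy $\bar P^s_m = A \bar P_m A^T + Q$, the Joseph form $\bar P_m = (I-K^s_m C_m)\bar P^s_m (I-K^s_m C_m)^T + K^s_m R_m K^{sT}_m$, and the identity $\bar P^s_m (I-K^s_m C_m)^T = (I-K^s_m C_m)\bar P^s_m = \bar P_m$ (using symmetry of $\bar P^s_m$).

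The key observation, and what I expect to be the crux, is that $P_{0m}=\bar P^s_m$ solves the middle equation of (\ref{fixed_point_equations}) \emph{for every} gain $K_m$. Substituting $P_{0m}=\bar P^s_m$ into the right-hand side and repeatedly using $\bar P^s_m(I-K^s_m C_m)^T=\bar P_m$ together with the Joseph form, all terms carrying $K_m$ cancel in pairs, leaving exactly $A\bar P_m A^T + Q = \bar P^s_m$. Next I would note that the middle equation is a linear Stein/Sylvester equation $P_{0m} = A(I-\lambda_m K_m)\,P_{0m}\,\big(A(I-K^s_m C_m)\big)^T + \text{(const)}$, whose associated operator has eigenvalues $\sigma_i\!\big(A(I-\lambda_m K_m)\big)\,\sigma_j\!\big(A(I-K^s_m C_m)\big)$, each of modulus strictly below $1$ because $A(I-K^s_m C_m)$ is stable and the admissible gains satisfy $\max_i|\sigma_i(A(I-\lambda_m K_m))|<1$. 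Hence the solution is unique, and combined with the gain-independent invariance just established, this forces $P_{0m}=\bar P^s_m$ at any admissible fixed point.

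Given $P_{0m}=\bar P^s_m$, I would substitute into the gain equation (third line of (\ref{fixed_point_equations})). Using the same identities, both the leading factor $P-P_{0m}(I-K^s_m C_m)^T$ and the matrix being inverted collapse to $P-\bar P_m$, so numerator and denominator coincide; by the convention defining the gain (namely $K_m=I$ precisely when those two matrices are equal, cf. the definition of $K_{\breve m,k}$ following (\ref{multi_sensor_optimal_estimator})) this yields $K_m=I$. Substituting $K_m=I$ into the first equation annihilates every term carrying the factor $(I-K_m)$, reducing it to the stated equation $P=(1-\lambda_m)APA^T+Q+\lambda_m A(I-K^s_m C_m)\bar P^s_m(I-K^s_m C_m)^T A^T + \lambda_m A K^s_m R_m K^{sT}_m A^T$.

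Finally I would show this last equation has a unique solution and that the triple is genuinely a fixed point. The equation is a Lyapunov (Stein) equation $P=(1-\lambda_m)APA^T+(\text{const})$ whose operator has spectral radius $(1-\lambda_m)\max_i|\sigma_i(A)|^2$; this is strictly below $1$ exactly under hypothesis (i) (where $|\sigma_i(A)|<1$ makes the bound automatic) or (ii) ($\lambda_m>1-1/\max_i|\sigma_i(A)|^2$), yielding a unique positive semidefinite $P$. The same condition gives $\max_i|\sigma_i(A(I-\lambda_m I))|=(1-\lambda_m)\max_i|\sigma_i(A)|<1$, so $K_m=I$ is itself admissible. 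Existence then follows by direct substitution: $P_{0m}=\bar P^s_m$ satisfies the middle equation (shown above), $K_m=I$ satisfies the gain equation, and the Lyapunov solution $P$ satisfies the first equation. The main obstacle is the gain-independent cancellation establishing $P_{0m}=\bar P^s_m$; once that identity and the Sylvester uniqueness are in hand, the remaining steps are routine verifications with the steady-state Kalman relations.
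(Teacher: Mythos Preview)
Your proposal is correct and follows essentially the same approach as the paper: establish that $P_{0m}=\bar P^s_m$ solves the middle equation for \emph{every} admissible $K_m$ via the steady-state Kalman identities, invoke Sylvester/Stein uniqueness (using stability of $A(I-K^s_m C_m)$ and the admissibility assumption on $K_m$) to force $P_{0m}=\bar P^s_m$, then substitute into the gain equation to obtain $K_m=I$, and finally reduce the first equation to the stated Lyapunov equation whose unique solvability follows from the hypothesis on $\lambda_m$. The paper presents the steps in a slightly different order (treating the Lyapunov equation first, assuming $K_m=I$), but the logical content and the key cancellation identities are identical; your explicit check that $K_m=I$ is itself admissible is a small addition not spelled out in the paper.
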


\begin{proof}
See Appendix \ref{fixed_point_theorem_proof}
\end{proof}
By Theorem \ref{fixed_point_theorem}, and in particular the fact that $K_m=I$ for each $m \in \{1,\dots,M\}$, the constant gain estimator $\tilde{x}_k$ with gains chosen by solving (\ref{fixed_point_equations}) is easily seen to simplify to the following: 
\begin{equation}
\label{remote_estimator_eqns_multi_sensor}
\begin{split}
\tilde{x}_{k|k} & = \left\{\begin{array}{ccc}  A \tilde{x}_{k-1|k-1} & , & \nu_{m,k} \gamma_{m,k} = 0 \\ \hat{x}_{m,k|k}^s & , & \nu_{m,k} \gamma_{m,k} = 1 \end{array}  \right. \\
\tilde{P}_{k|k} & = \left\{\begin{array}{ccl}  f(\tilde{P}_{k-1|k-1}) & ,  & \nu_{m,k} \gamma_{m,k} = 0\\ P_{m,k|k}^s & , &  \nu_{m,k} \gamma_{m,k} = 1  \end{array} \right. 
\end{split}
\end{equation} 
where 
\begin{equation}
\label{f_defn}
f(X) \triangleq A X A^T + Q.
\end{equation}
For the case of two sensors estimating independent Gauss-Markov systems, a similar estimator to (\ref{remote_estimator_eqns_multi_sensor}) was also studied in \cite{ShiZhang}. 
We now give some examples comparing the performance of the suboptimal estimator (\ref{remote_estimator_eqns_multi_sensor}) with the optimal estimator (\ref{multi_sensor_optimal_estimator}). 
Consider a two sensor system with parameters 
\begin{equation}
\label{estimator_comparison_params}
\begin{split}
&A = \left[\begin{array}{cc} 1.1 & 0.2 \\ 0.2 & 0.8  \end{array}\right],\, Q=I 
\end{split}
\end{equation} 
The other parameters are randomly generated: $C_1$ and $C_2$ are $1\times 2$ matrices with entries drawn from the uniform distribution $U(0.5,2)$, $R_1$ and $R_2$ are scalars drawn from $U(1,10)$, $\lambda_1$ and $\lambda_2$ are drawn from $U(0.5,1)$. 
The sensor that transmits is randomly chosen, with each sensor equally likely to be chosen.  Table \ref{estimator_comparison_table} gives  $\mathbb{E}[P_{k|k}]$ for the optimal (Opt.) and suboptimal (Subopt.) estimators for $20$ different randomly generated sets of parameters, where $\mathbb{E}[P_{k|k}]$ are obtained by taking the time average over a Monte Carlo simulation of length 100000. We also give values of $\mathbb{E}[P_{k|k}]$ for the case where measurements are transmitted (Tx. Meas.), which will be studied in Section \ref{tx_meas_sec}. We see that the suboptimal estimator often gives good performance when compared to the optimal estimator. 

\begin{table}
\caption{$\mathbb{E}[P_{k|k}]$ for different randomly generated sets of parameters}
\centering
\begin{tabular}{|c|c|c||c|c|c|} \hline
 Opt. &  Subopt.  &  Tx. Meas. &  Opt. &  Subopt.  &  Tx. Meas.  \\ \hline \hline
 3.1410 & 3.2216 & 3.2441   & 2.9906 & 3.0736 & 3.0705 \\ \hline
 3.9206 & 4.1434 & 4.2254   & 3.4654 & 3.6203 & 3.5358 \\ \hline
 3.6410 & 3.6990 & 3.8116   & 4.3822 & 4.6349 & 4.7211 \\ \hline
 3.2056 & 3.3040 & 3.3117   & 3.1704 & 3.2737 & 3.2766 \\ \hline
 4.9104 & 5.0146 & 5.1417   & 5.5227 & 5.6810 & 5.7757 \\ \hline
 3.5692 & 3.7251 & 3.7227   & 4.5079 & 4.6076 & 4.8558 \\ \hline 
 4.1598 & 4.2227 & 4.2522   & 3.9006 & 4.0154 & 4.0603 \\ \hline
 3.8327 & 3.9082 & 3.9827   & 3.2849 & 3.3567 & 3.3775 \\ \hline
 2.9210 & 3.0015 & 2.9704   & 7.0825 & 7.4793 & 7.8819 \\ \hline 
 3.7504 & 3.9277 & 3.9376   & 3.9697 & 4.1427 & 4.1661 \\ \hline 
\end{tabular}
\label{estimator_comparison_table}
\end{table}
Due to  its simplicity which makes it amenable to analysis, and its good performance in many cases,  we will concentrate on the estimator (\ref{remote_estimator_eqns_multi_sensor}) in Sections \ref{optimization_prob_sec}-\ref{structural_properties_sec}.

\begin{remark}
For the case  of a single sensor ($M=1$), the estimator (\ref{remote_estimator_eqns_multi_sensor}) corresponds to the optimal estimator, see, e.g., \cite{XuHespanha,Schenato}. 
\end{remark}

\subsection{Imperfect Feedback Links}
\label{imperfect_feedback_sec}
We have assumed that the feedback links are perfect, which models the most commonly encountered situation where the remote estimator has more resources than the sensors  and can transmit on the feedback links with very low probability of error, e.g., the remote estimator can use more energy or can implement sophisticated channel coding. But interestingly, imperfect feedback links can also be readily incorporated  into our framework.

Recall that at each discrete time instant $k$, the remote estimator feeds back the values $(\nu_{1,k},\dots,\nu_{M,k})$ to notify which sensors should transmit, with at most one $\nu_{m,k}=1$ in order to avoid collisions. If the feedback command is lost, then the sensor $\breve{m}$ that may have been scheduled to transmit at time $k$ will no longer do so, while the other sensors not scheduled to transmit still remain silent. Thus, from an estimation perspective, a dropout in the feedback signal is equivalent to a dropout in the forward link from the sensor to the remote estimator.  
Assume that the feedback link from the remote estimator to sensor $m$ is an i.i.d. packet dropping link with packet reception probability $\lambda_{m}^{fb}, m=1,\dots,M$, with the packet drops occurring independently of the forward links from the sensors to the remote estimator. Then for the sensor $\breve{m}$ that is scheduled to transmit, the situation is mathematically equivalent to this sensor transmitting successfully with probability $\lambda_{\breve{m}} \lambda_{\breve{m}}^{fb}$. Thus, the case of imperfect feedback links can be modelled as the case of perfect feedback links with  lower packet reception probabilities $\lambda_m \lambda_{m}^{fb}, m=1,\dots,M$. 

\section{Optimization of transmission scheduling}
\label{optimization_prob_sec}
In this section we will formulate optimization problems for determining the transmission schedules, that minimize a convex combination of the expected error covariance and expected energy usage, and  describe some numerical techniques for solving them. Structural properties of the optimal solutions to these problems will then be derived in Section \ref{structural_properties_sec}. 

Define the countable set
\begin{equation}
\label{S_defn_multi_sensor}
\mathcal{S} \triangleq \{f^n(P_{m,k|k}^s) | m=1,\dots,M, n=0,1,\dots, k=1,2,\dots \},
\end{equation}
where $f^n(.)$ is the $n$-fold composition of $f(.)$, with the convention that $f^0(X) = X$. 
Then it is clear from (\ref{remote_estimator_eqns_multi_sensor}) that $\mathcal{S}$ consists of all possible values of $\tilde{P}_{k|k}$ at the remote estimator. 
Note that if the local Kalman filters are operating in steady state, then $\mathcal{S}$ simplifies to  
\begin{equation}
\label{S_steady_state}
\begin{split}
\mathcal{S} &= \{\bar{P}_1, f(\bar{P}_1), f^2(\bar{P}_1),  \dots, \dots, \bar{P}_M, f(\bar{P}_M), f^2(\bar{P}_M), \dots \}.
\end{split}
\end{equation}
As foreshadowed in Section \ref{model_sec}, we will consider transmission policies where $\nu_{m,k}(\tilde{P}_{k-1|k-1}), m=1,\dots,M$  depends only on $\tilde{P}_{k-1|k-1}$, similar to \cite{TrimpeDAndrea_journal}. From the way in which the error covariances at the remote estimator are updated, see (\ref{remote_estimator_eqns_multi_sensor}), such policies will not depend on $x_k$, cf. \cite{WuJiaJohanssonShi}. 
To take into account energy usage, we will assume a  transmission cost of $E_{m}$ for each scheduled transmission from sensor $m$ (i.e., when $\nu_{m,k}=1$).\footnote{The transmission cost $E_m$ could represent the energy use in each transmission, but can also be regarded as a tuning parameter to provide some control on how often different sensors will transmit, e.g. increasing $E_m$ will make sensor $m$ less likely to transmit.} We will consider the following finite horizon (of horizon $K$) optimization problem:
\begin{equation}
\label{finite_horizon_problem_multi_sensor}
\begin{split}
& \min_{\{(\nu_{1,k},\dots,\nu_{M,k})\}} \sum_{k=1}^{K}  \mathbb{E} \bigg[ \beta \textrm{tr} \tilde{P}_{k|k} + (1-\beta) \sum_{m=1}^M \nu_{m,k} E_m   \bigg] \\ &
=\min_{\{(\nu_{1,k},\dots,\nu_{M,k})\}} \sum_{k=1}^{K}  \mathbb{E} \bigg[\mathbb{E} \bigg[ \beta \textrm{tr} \tilde{P}_{k|k} + (1-\beta) \sum_{m=1}^M \nu_{m,k} E_m  \\ & \quad\quad\quad\quad\quad\quad\quad\quad\quad\quad\quad\quad\bigg| \tilde{P}_{0|0},  \mathcal{I}_{k-1}, \nu_{1,k},\dots,\nu_{M,k}  \bigg] \bigg] \\
&= \min_{\{(\nu_{1,k},\dots,\nu_{M,k})\}} \sum_{k=1}^{K}  \mathbb{E}\bigg[ \mathbb{E}\bigg[  \beta \textrm{tr} \tilde{P}_{k|k} + (1-\beta) \sum_{m=1}^M \nu_{m,k} E_m  \\ & \quad\quad\quad\quad\quad\quad\quad\quad\quad\quad\quad\quad\bigg|\tilde{P}_{k-1|k-1},\nu_{1,k},\dots,\nu_{M,k}   \bigg] \bigg]
 \end{split}
 \end{equation}
for some design parameter $\beta \in (0,1)$, where the last line holds since  $\tilde{P}_{k-1|k-1}$ is a deterministic function of $\tilde{P}_{0|0}$ and $\mathcal{I}_{k-1}$, and $\tilde{P}_{k|k}$ is a function of $\tilde{P}_{k-1|k-1}, \nu_{1,k},\dots,\nu_{M,k}$, and $ \gamma_{1,k},\dots,\gamma_{M,k}$. Problem (\ref{finite_horizon_problem_multi_sensor}) minimizes a convex combination of the trace of the expected error covariance at the remote estimator and the expected sum of transmission energies of the sensors.  Due to collisions when more than one sensor is scheduled to transmit, we have
\begin{align*}
& \mathbb{E}[ \textrm{tr} \tilde{P}_{k|k} | \tilde{P}_{k-1|k-1},\nu_{1,k},\dots,\nu_{M,k}] \\ & \!\!\! = \!\sum_{m=1}^M \!\nu_{m,k} \!\prod_{n \neq m} \!(1\!-\!\nu_{n,k}) \! \left[\lambda_m \textrm{tr} P_{m,k|k}^s  \! +\! (1\!-\!\lambda_m) \textrm{tr} f(\tilde{P}_{k-1|k-1})\right]  \\ &\quad  
\!+\! \bigg(1\!-\!\sum_{m=1}^M \nu_{m,k} \prod_{n \neq m} (1\!-\!\nu_{n,k}) \bigg) \textrm{tr} f(\tilde{P}_{k-1|k-1}) \\
& = \sum_{m=1}^M \nu_{m,k} \prod_{n \neq m} (1-\nu_{n,k})  \lambda_m \textrm{tr} P_{m,k|k}^s
\\ &\quad 
+ \bigg(1-\sum_{m=1}^M \nu_{m,k} \prod_{n \neq m} (1-\nu_{n,k}) \lambda_m \bigg) \textrm{tr} f(\tilde{P}_{k-1|k-1})\\
\end{align*}
where $f(.)$ is defined in (\ref{f_defn}).

Let the functions $J_k(.): \mathcal{S} \rightarrow \mathbb{R}$ be defined recursively as: 
 \begin{align}
\label{J_fn_multi_sensor}
& J_{K+1}(\tilde{P})  =0 \nonumber \\
& J_k(\tilde{P})  = \min_{(\nu_{1},\dots,\nu_{M})} \Bigg\{ \beta \Big[ \sum_{m=1}^M \nu_{m} \prod_{n \neq m} (1-\nu_{n})  \lambda_m \textrm{tr} P_{m,k|k}^s   \nonumber \\ & \, + \Big(1-\sum_{m=1}^M \nu_{m} \prod_{n \neq m} (1-\nu_{n}) \lambda_m \Big) \textrm{tr} f(\tilde{P}) \Big] \nonumber \\ &\, + (1\!-\!\beta) \sum_{m=1}^M \nu_{m} E_m  \!+\! \sum_{m=1}^M \nu_{m} \prod_{n \neq m} (1\!-\!\nu_{n})  \lambda_m  J_{k+1}(P_{m,k|k}^s) \nonumber \\ &\, + \Big(1-\sum_{m=1}^M \nu_{m} \prod_{n \neq m} (1-\nu_{n}) \lambda_m \Big) J_{k+1}(f(\tilde{P})) \Bigg\}, \nonumber \\ & \quad \quad k=K,K-1,\dots,1.
\end{align}

Problem (\ref{finite_horizon_problem_multi_sensor}) can then solved using the dynamic programming algorithm by computing $J_k(\tilde{P}_{k-1|k-1})$ for $k=K,K-1,\dots,1$, providing the optimal $(\nu_{1,k}^*,\dots,\nu_{M,k}^*) = \textrm{argmin} J_k(\tilde{P}_{k-1|k-1}) $. 
Further call $ \mathbf{e}_0  \triangleq (0,0,\dots,0)$, $\mathbf{e}_1 \triangleq (1,0,\dots,0), \mathbf{e}_2 \triangleq (0,1,0,\dots,0), \dots, \mathbf{e}_M \triangleq (0,\dots,0,1)$,  and  
\begin{equation}
\label{V_defn}
\mathcal{V} \triangleq \{\mathbf{e}_0, \mathbf{e}_1,\dots, \mathbf{e}_M \}.
\end{equation} 
Then it is clear that the minimization in (\ref{J_fn_multi_sensor}) can be carried out over the set $\mathcal{V}$ (with cardinality $M+1$) instead of the larger set $\{0,1\}^M$ (with cardinality $2^M$). 

Note that the finite horizon problem (\ref{finite_horizon_problem_multi_sensor}) can be solved exactly via explicit enumeration, since for a given initial $\tilde{P}_{0|0}$, the number of possible values for $\tilde{P}_{k|k}, k=1,\dots,K$, is finite. When the problem has been solved (which only needs to be done once and  offline), a ``lookup table''  will be constructed at the remote estimator which allows for the transmit decisions $\nu_{m,k}$ (for different error covariances) to be easily determined in real time. 

We will also consider the infinite horizon problem:
\begin{equation}
\label{infinite_horizon_problem_multi_sensor}
\begin{split}
& \min_{\{(\nu_{1,k},\dots,\nu_{M,k})\}} \limsup_{K\rightarrow \infty}  \frac{1}{K} \sum_{k=1}^{K}  \mathbb{E} \bigg[ \mathbb{E} \bigg[  \beta \textrm{tr} \tilde{P}_{k|k} + (1-\beta) \\ & \quad\quad\quad\quad \times \sum_{m=1}^M \nu_{m,k} E_m \bigg|\tilde{P}_{k-1|k-1},\nu_{1,k},\dots,\nu_{M,k}   \bigg] \bigg]
\end{split}
 \end{equation}
 where we now assume that the local Kalman filters are operating in the steady state regime, with $P_{m,k|k}^s = \bar{P}_m, \forall k$.
Problem (\ref{infinite_horizon_problem_multi_sensor}) is a Markov decision process (MDP) based stochastic control problem with $(\nu_{1,k},\dots,\nu_{M,k})$ as the ``action'' and $\tilde{P}_{k-1|k-1}$ as the ``state'' at time $k$.\footnote{In (\ref{infinite_horizon_problem_multi_sensor}), ``limsup'' is used instead of ``lim'' since in some MDPs the limit may not exist. However, if the conditions of Theorem \ref{Bellman_eqn_lemma_multi_sensor} are satisfied then the limit will exist.}
 The Bellman equation for problem (\ref{infinite_horizon_problem_multi_sensor}) is 
 \begin{align}
 \label{Bellman_eqn_multi_sensor}
& \rho + h(\tilde{P})  = \min_{(\nu_{1},\dots,\nu_{M}) \in \mathcal{V}} \Bigg\{ \beta \Big[ \sum_{m=1}^M \nu_{m} \prod_{n \neq m} (1-\nu_{n})  \lambda_m \textrm{tr} \bar{P}_m  \nonumber \\ & \quad\quad + \Big(1-\sum_{m=1}^M \nu_{m} \prod_{n \neq m} (1-\nu_{n}) \lambda_m \Big) \textrm{tr} f(\tilde{P}) \Big] \nonumber \\ & \quad\quad+ (1-\beta) \sum_{m=1}^M \nu_{m} E_m  +\sum_{m=1}^M \nu_{m} \prod_{n \neq m} (1-\nu_{n})  \lambda_m  h(\bar{P}_m)  \nonumber \\ & \quad \quad + \Big(1-\sum_{m=1}^M \nu_{m} \prod_{n \neq m} (1-\nu_{n}) \lambda_m \Big) h(f(\tilde{P})) \Bigg\}
 \end{align}
where $\rho$ is the optimal average cost per stage and $h(.)$ is the differential cost or relative value function \cite[pp.388-389]{Bertsekas_DP1}. For the infinite horizon problem (\ref{infinite_horizon_problem_multi_sensor}), existence of optimal stationary policies can be ensured via the following result:
\begin{theorem} 
\label{Bellman_eqn_lemma_multi_sensor}
Suppose that $A$ is either (i) stable, or (ii) unstable but with  $\lambda_m > 1 - \frac{1}{\max_{i} |\sigma_i(A)|^2}$ for at least one $m \in \{1,\dots,M\}$, where $\sigma_i(A)$ is an eigenvalue of $A$.  Then there exist a constant $\rho$ and a function $h(.)$ satisfying the Bellman equation (\ref{Bellman_eqn_multi_sensor}).  
\end{theorem}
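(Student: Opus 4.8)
The plan is to establish the average-cost optimality equation (\ref{Bellman_eqn_multi_sensor}) by the \emph{vanishing discount} method (see e.g. \cite{Bertsekas_DP1}). For $\alpha \in (0,1)$ I would introduce the $\alpha$-discounted counterpart of (\ref{infinite_horizon_problem_multi_sensor}) and let $V_\alpha : \mathcal{S} \to \mathbb{R}$ be its optimal value function, i.e. the unique solution of the discounted optimality equation obtained from (\ref{J_fn_multi_sensor}) by inserting a discount factor $\alpha$ in front of the continuation terms and setting $P_{m,k|k}^s = \bar{P}_m$. Fix a reference state $z \triangleq \bar{P}_{m^*}$, where $m^*$ is an index for which $A$ is stable or $\lambda_{m^*} > 1 - 1/\max_i |\sigma_i(A)|^2$, and put $h_\alpha(\tilde{P}) \triangleq V_\alpha(\tilde{P}) - V_\alpha(z)$. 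The strategy is then to verify the hypotheses of a standard average-cost existence theorem for countable-state MDPs: (i) there is a stationary policy with finite average cost, and (ii) the relative value functions $\{h_\alpha\}_{\alpha \in (0,1)}$ are bounded below by a constant and above by a function $U(\cdot)$ that is summable against the transition kernel, i.e. $\sum_{\tilde{P}'} p(\tilde{P}' \mid \tilde{P}, \nu) U(\tilde{P}') < \infty$ for every state-action pair. Given (i)--(ii), one selects $\alpha_k \uparrow 1$ along which $h_{\alpha_k}(\cdot) \to h(\cdot)$ pointwise and $(1-\alpha_k) V_{\alpha_k}(z) \to \rho$, and passing to the limit in the discounted optimality equation produces a pair $(\rho, h)$ satisfying (\ref{Bellman_eqn_multi_sensor}).

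For (i), consider the stationary policy $\nu \equiv \mathbf{e}_{m^*}$ that always schedules sensor $m^*$. Under it, $\tilde{P}_{k|k}$ resets to $\bar{P}_{m^*}$ with probability $\lambda_{m^*}$ and is mapped by $f$ with probability $1 - \lambda_{m^*}$, which is exactly the Kalman filter with i.i.d. packet drops of \cite{Sinopoli}. Its expected error covariance stays bounded precisely when $(1-\lambda_{m^*}) \max_i |\sigma_i(A)|^2 < 1$, which is a rearrangement of condition (ii); when $A$ is stable boundedness is automatic because $\textrm{tr}\, f^n(\cdot)$ remains bounded. Hence $\mathbb{E}[\textrm{tr}\, \tilde{P}_{k|k}]$ is uniformly bounded in $k$ and, together with the per-stage energy $(1-\beta)E_{m^*}$, gives a finite average cost.

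The core of the argument is the uniform upper bound in (ii), which I would obtain by a first-passage comparison. For an arbitrary state $\tilde{P} = f^n(\bar{P}_m) \in \mathcal{S}$, start from $\tilde{P}$, follow the always-transmit-$m^*$ policy until the first successful reception, and behave optimally afterwards; the first success occurs after a $\mathrm{Geometric}(\lambda_{m^*})$ number of steps and returns the state to $z = \bar{P}_{m^*}$. Since the discounted cost of this (generally suboptimal) policy upper bounds $V_\alpha(\tilde{P})$ and its post-reset tail equals $V_\alpha(z)$, we get $h_\alpha(\tilde{P}) \le U(\tilde{P})$ with $U$ the expected pre-reset cost, which at $\alpha = 1$ (an upper bound uniform in $\alpha$) is
\[
U(\tilde{P}) = \beta \sum_{j \ge 0} (1-\lambda_{m^*})^{j}\, \textrm{tr}\, f^{j}(\tilde{P}) + \frac{(1-\beta)E_{m^*}}{\lambda_{m^*}}.
\]
Because $\textrm{tr}\, f^{j}(\tilde{P})$ grows like $\max_i |\sigma_i(A)|^{2j}$, this series converges exactly under condition (ii) (trivially when $A$ is stable), so $U(\tilde{P}) < \infty$, and monotonicity of $f$ in the positive-semidefinite order gives summability of $U$ against the kernel. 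The lower bound is routine: nonnegativity of the per-stage cost yields $V_\alpha \ge 0$, while $f(\bar{P}_m) \succeq \bar{P}_m$ and monotonicity of $V_\alpha$ in the positive-semidefinite order show that $\min_{\tilde{P} \in \mathcal{S}} V_\alpha(\tilde{P})$ is attained on the finite set $\{\bar{P}_1, \dots, \bar{P}_M\}$, so $h_\alpha \ge -L$ for some $\alpha$-independent constant $L$.

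I expect the main obstacle to be making this upper-bound step fully rigorous, namely verifying that the geometric-weighted trace series converges uniformly in $\alpha$ and that the resulting $U$ is genuinely summable against the transition kernel; this is precisely where the eigenvalue condition in case (ii) is indispensable. The remaining ingredients --- monotonicity of the discounted value functions, the finite-average-cost policy, and the limiting argument as $\alpha \uparrow 1$ --- are standard once the two-sided bound on $h_\alpha$ is established.
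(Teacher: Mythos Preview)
Your argument is correct and rests on the same estimate as the paper's: the expected first-passage cost to the reference state $\bar{P}_{m^*}$ under the ``always schedule sensor $m^*$'' policy, namely
\[
U(\tilde{P})=\beta\sum_{j\ge 0}(1-\lambda_{m^*})^{j}\,\textrm{tr}\,f^{j}(\tilde{P})+\frac{(1-\beta)E_{m^*}}{\lambda_{m^*}},
\]
whose convergence is exactly governed by the eigenvalue condition in case (ii). The paper arrives at this same series (their equation for $c_{i,z}$) but packages the argument differently: rather than running the vanishing-discount limit by hand, it simply verifies the conditions (CAV*1) and (CAV*2) of Sennott's Corollary~7.5.10, which require precisely that this ``always transmit $m^*$'' policy be a \emph{standard} policy (finite expected first-passage time and cost to a fixed reference state, with recurrent class equal to $\mathcal{S}$) and that sublevel sets of the stage cost be finite. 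Sennott's theorem is itself proved via vanishing discount, so the two routes are essentially the same; the paper's is shorter because the limiting argument and the lower bound on $h_\alpha$ are already absorbed into the cited result.

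One point in your sketch deserves more care: the lower bound $h_\alpha\ge -L$. Knowing that $\min_{\tilde{P}\in\mathcal{S}}V_\alpha(\tilde{P})$ is attained on $\{\bar{P}_1,\dots,\bar{P}_M\}$ does not by itself give an $\alpha$-independent constant, since you still need $V_\alpha(\bar{P}_{m^*})-V_\alpha(\bar{P}_m)$ bounded uniformly in $\alpha$ for each $m$. This can be obtained by a one-step coupling (from $\bar{P}_{m^*}$ transmit sensor $m$ once, then bound $V_\alpha(f(\bar{P}_{m^*}))$ via your $U$ and the always-$m^*$ policy), but it is not automatic from monotonicity and finiteness alone. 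Filling this in, your proof goes through; the paper sidesteps the issue entirely by appealing to Sennott.
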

 
\begin{proof}
See Appendix \ref{Bellman_eqn_lemma_multi_sensor_proof}.
\end{proof}

\begin{remark}
In the case of a single sensor and unstable $A$, the condition $\lambda_1 > 1 - \frac{1}{\max_{i} |\sigma_i(A)|^2}$ in Theorem \ref{Bellman_eqn_lemma_multi_sensor} corresponds to the necessary and sufficient condition for estimator stability when the sensor transmits local estimates over an i.i.d. packet dropping link, see\cite{Schenato,XuHespanha}. 
\end{remark}

\begin{remark}
Dynamic programming techniques have also been used to design event triggered estimation schemes in, e.g., \cite{ImerBasar,CogillLallHespanha,LiLemmonWang}. However, these works assume a priori that the transmission policy is of threshold-type, whereas here we don't make this assumption but instead prove in Section \ref{structural_properties_sec} that the optimal policy is of threshold-type. 
\end{remark}

As a consequence of Theorem \ref{Bellman_eqn_lemma_multi_sensor}, Problem (\ref{infinite_horizon_problem_multi_sensor}) can be solved using methods such as the relative value iteration algorithm \cite[p.391]{Bertsekas_DP1}. In computations, since the state space is (countably) infinite, one can first truncate $\mathcal{S}$ in (\ref{S_steady_state}) to 
\begin{equation}
\label{S_truncated}
\begin{split}
\mathcal{S}^N &\triangleq \{\bar{P}_1, f(\bar{P}_1),  \dots, f^{N-1}(\bar{P}_1), \bar{P}_2, f(\bar{P}_2),  \dots, f^{N-1}(\bar{P}_2),\\ & \quad\quad \dots,\dots, \bar{P}_M, f(\bar{P}_M),  \dots,f^{N-1}(\bar{P}_M) \},
\end{split}
\end{equation}
which will cover all possible error covariances with up to $N-1$ successive packet drops or non-transmissions. 
We then use the relative value iteration algorithm to solve the resulting finite state space MDP problem, as follows: 
For a given $N$, define for $l=0,1,2,\dots$ the value functions $V_l(.): \mathcal{S}^N \rightarrow \mathbb{R}$ by:
\begin{equation*}
\begin{split}
& V_{l+1} (\tilde{P})  \triangleq \min_{(\nu_{1},\dots,\nu_{M}) \in \mathcal{V}} \Bigg\{ \beta \Big[ \sum_{m=1}^M \nu_{m} \prod_{n \neq m} (1-\nu_{n})  \lambda_m \textrm{tr} \bar{P}_m  \\ & \quad \quad + \Big(1-\sum_{m=1}^M \nu_{m} \prod_{n \neq m} (1-\nu_{n}) \lambda_m \Big) \textrm{tr} f(\tilde{P}) \Big] \\ &\quad\quad + (1-\beta) \sum_{m=1}^M \nu_{m} E_m  +\sum_{m=1}^M \nu_{m} \prod_{n \neq m} (1-\nu_{n})  \lambda_m  V_l(\bar{P}_m)  \\ & \quad\quad + \Big(1-\sum_{m=1}^M \nu_{m} \prod_{n \neq m} (1-\nu_{n}) \lambda_m \Big) V_l(f(\tilde{P})) \Bigg\}.
\end{split}
\end{equation*}
Let $\tilde{P}_f \in \mathcal{S}^N$ be a fixed state (which can be chosen arbitrarily). The relative  value iteration algorithm  is then given by computing:
\begin{equation}
\label{relative_value_iteration}
\begin{split}
& h_{l+1}(\tilde{P})  \triangleq V_{l+1}(\tilde{P}) - V_{l+1}(\tilde{P}_f) 
\end{split}
\end{equation}
for $l=0,1,2,\dots$. 
As $l \rightarrow \infty$, we have $h_l(\tilde{P}) \rightarrow h(\tilde{P}), \forall \tilde{P} \in \mathcal{S}^N$, with $h(.)$ satisfying the Bellman equation (\ref{Bellman_eqn_multi_sensor}). In practice, the algorithm $(\ref{relative_value_iteration})$ terminates once the differences $h_{l+1}(\tilde{P})-h_l(\tilde{P})$ become smaller than a desired level of accuracy $\varepsilon$. 
One then compares the solutions obtained as $N$ increases to determine an appropriate value of $N$ for truncation of the state space $\mathcal{S}$, see Chapter~8 of \cite{Sennott_book} for further details.

\section{Structural properties of optimal transmission scheduling}
\label{structural_properties_sec}
Numerical solutions to the optimization problems (\ref{finite_horizon_problem_multi_sensor}) and (\ref{infinite_horizon_problem_multi_sensor}) via dynamic programming or solving MDPs do not provide much insight into the form of the optimal solution. In this section, we will derive some structural results on the optimal solutions to the finite horizon problem (\ref{finite_horizon_problem_multi_sensor}) and the infinite horizon problem (\ref{infinite_horizon_problem_multi_sensor}) in  Sections \ref{finite_horizon_structural_sec} and \ref{infinite_horizon_structural_sec} respectively. To be more specific,  we will prove that if the error covariance is large, then a sensor will always be scheduled to transmit (for unstable $A$), and show threshold-type behaviour in switching from one sensor to another. In Section \ref{single_sensor_sec}, we specialize these results to demonstrate that, in the case of a single sensor, a threshold policy is optimal, and derive simple analytical expressions for the expected energy usage and expected error covariance.

\emph{Preliminaries}: For symmetric matrices $X$ and $Y$, we say that $X\leq Y$ if $Y-X$ is positive semi-definite, and $X < Y$ if $Y-X$ is positive definite. In general, $``\leq"$ only gives a partial ordering on the set $\mathcal{S}$ defined in (\ref{S_defn_multi_sensor}).
Let $\bf{S}$ denote the set of all positive semi-definite matrices. In this section, we will say that a function $F(.): \bf{S} \rightarrow \mathbb{R}$ is \emph{increasing} if 
\begin{equation}
\label{increasing_fn_defn}
X \leq Y \Rightarrow F(X) \leq F(Y).
\end{equation}
Note that (\ref{increasing_fn_defn}) does not take into account the situations where neither $X \leq Y$ nor $Y \leq X$ holds under the partial order $``\leq"$. 

\begin{lemma}
\label{f_increasing_lemma}
The function $\textnormal{tr} f(X) =\textnormal{tr}( A X A^T +Q)$ is an increasing function of $X$. 
\end{lemma}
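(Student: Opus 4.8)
The plan is to reduce the statement to two elementary facts about positive semidefinite matrices: that a congruence transformation preserves positive semidefiniteness, and that the trace of a positive semidefinite matrix is nonnegative. Since the constant term $Q$ cancels and linearity of the trace does all the bookkeeping, no serious computation is required.

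First I would take two matrices $X, Y \in \mathbf{S}$ with $X \leq Y$, so that by the definition of the order $``\leq"$ the matrix $Y - X$ is positive semidefinite. Using linearity of the trace and the fact that $Q$ does not depend on the argument, I would write the difference as
\begin{equation*}
\textnormal{tr}\, f(Y) - \textnormal{tr}\, f(X) = \textnormal{tr}(A Y A^T + Q) - \textnormal{tr}(A X A^T + Q) = \textnormal{tr}\bigl(A (Y - X) A^T\bigr).
\end{equation*}
It then remains to show that the right-hand side is nonnegative. The next step is to observe that $A (Y-X) A^T$ is itself positive semidefinite: for any vector $z$ one has $z^T A (Y-X) A^T z = (A^T z)^T (Y-X) (A^T z) \geq 0$, since $Y - X \geq 0$. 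Finally, the trace of a positive semidefinite matrix equals the sum of its real, nonnegative eigenvalues and is therefore nonnegative, which yields $\textnormal{tr}\, f(Y) - \textnormal{tr}\, f(X) \geq 0$, i.e. $\textnormal{tr}\, f(X) \leq \textnormal{tr}\, f(Y)$. This is precisely the monotonicity condition (\ref{increasing_fn_defn}).

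There is no genuine obstacle here; the only points requiring care are to invoke the definition of $``\leq"$ as the positive semidefinite order (rather than an entrywise order), so that the congruence argument applies, and to note that although $``\leq"$ is only a partial order, the claim concerns only comparable pairs $X \leq Y$, so no separate treatment of incomparable arguments is needed.
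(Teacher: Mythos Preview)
Your proof is correct and is precisely the elementary verification the paper has in mind; the paper's own proof simply states that the result ``is easily seen from the definition,'' and your argument spells out that definition-chase in full.
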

\begin{proof}
This is easily seen from the definition.
\end{proof}

\subsection{Finite Horizon Costs}
\label{finite_horizon_structural_sec}

\begin{lemma}
\label{J_fn_increasing_lemma_multi_sensor}
The functions $J_k(\tilde{P})$ defined in (\ref{J_fn_multi_sensor}) are increasing functions of $\tilde{P}$.
\end{lemma}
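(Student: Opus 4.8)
The plan is to proceed by backward induction on $k$, running from $k=K+1$ down to $k=1$. The base case is immediate: $J_{K+1}(\tilde{P})\equiv 0$ is constant, hence trivially increasing in the sense of (\ref{increasing_fn_defn}). For the inductive step I would assume that $J_{k+1}$ is increasing and show that $J_k$ inherits this property from the recursion (\ref{J_fn_multi_sensor}).

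First I would fix an action $(\nu_1,\dots,\nu_M)\in\mathcal{V}$ and isolate the expression inside the braces in (\ref{J_fn_multi_sensor}), call it $Q_k(\tilde{P},\nu)$. The only terms that depend on $\tilde{P}$ are $\beta\,\omega(\nu)\,\textrm{tr}\,f(\tilde{P})$ and $\omega(\nu)\,J_{k+1}(f(\tilde{P}))$, where I abbreviate $\omega(\nu)\triangleq 1-\sum_{m=1}^M \nu_m \prod_{n\neq m}(1-\nu_n)\lambda_m$; every remaining term is a constant in $\tilde{P}$. I would then record three facts. (i) The scalar $\textrm{tr}\,f(\tilde{P})$ is increasing in $\tilde{P}$ by Lemma \ref{f_increasing_lemma}. (ii) The matrix map $f$ is itself monotone in the positive semidefinite order, since $X\leq Y$ yields $f(Y)-f(X)=A(Y-X)A^T\geq 0$; composing this monotone map with the increasing function $J_{k+1}$ (induction hypothesis) shows that $\tilde{P}\mapsto J_{k+1}(f(\tilde{P}))$ is increasing. (iii) The coefficient $\omega(\nu)$ is nonnegative: this is where restricting the minimization to $\mathcal{V}$ is convenient, because for $\nu=\mathbf{e}_0$ one has $\omega=1$, while for $\nu=\mathbf{e}_m$ one has $\omega=1-\lambda_m\in[0,1]$ as $\lambda_m$ is a probability. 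Consequently each $Q_k(\cdot,\nu)$ is a nonnegative combination of increasing functions of $\tilde{P}$ plus a constant, and is therefore increasing in $\tilde{P}$.

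Finally I would invoke the elementary fact that a pointwise minimum of a finite family of increasing functions is increasing. Concretely, if $X\leq Y$ then for every $\nu\in\mathcal{V}$ we have $J_k(X)\leq Q_k(X,\nu)\leq Q_k(Y,\nu)$; taking the minimum over $\nu$ of the right-hand side gives $J_k(X)\leq J_k(Y)$. This closes the induction and establishes the lemma.

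I do not expect a genuine obstacle here; the argument is routine monotonicity bookkeeping. The only two points I would state explicitly, since they are easy to overlook, are that Lemma \ref{f_increasing_lemma} supplies only monotonicity of the scalar $\textrm{tr}\,f$, whereas the recursion also feeds $\tilde{P}$ into $J_{k+1}$ through $f$, so I must separately note that $f$ is monotone in the matrix order itself; and that the coefficient $\omega(\nu)\geq 0$ must be verified, so that the weighting preserves rather than reverses monotonicity.
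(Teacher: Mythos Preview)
Your proposal is correct and follows essentially the same approach as the paper: backward induction on $k$, invoking Lemma~\ref{f_increasing_lemma} and the induction hypothesis, together with the observation that the coefficient $1-\sum_{m}\nu_m\prod_{n\neq m}(1-\nu_n)\lambda_m\geq 0$. You have simply been more explicit than the paper about the intermediate steps (in particular that $f$ itself is monotone in the positive semidefinite order so that $J_{k+1}\circ f$ is increasing, and that a pointwise minimum of increasing functions is increasing), which the paper's terse proof leaves implicit.
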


\begin{proof}
The proof is by induction. The case of $J_{K+1}(.)$ is clear. Now assume that $J_{K+1}(\tilde{P}),J_{K}(\tilde{P}),\dots,J_{k+1}(\tilde{P})$ are increasing functions of $\tilde{P}$. Then
$J_k(\tilde{P})$ given in (\ref{J_fn_multi_sensor}) is increasing in $\tilde{P}$ by Lemma \ref{f_increasing_lemma} and the induction hypothesis, noting that $\Big(1-\sum_{m=1}^M \nu_{m} \prod_{n \neq m} (1-\nu_{n}) \lambda_m \Big) \geq 0$. 
\end{proof}

Since the minimization in  (\ref{J_fn_multi_sensor})  is over the set $\mathcal{V}$ given in (\ref{V_defn}), $J_k(\tilde{P})$ can also be expressed as:
\begin{align}
\label{J_k_alternative}
J_k(\tilde{P})  & = \min \Big\{  \beta[ \lambda_1 \textrm{tr}P^s_{1,k|k} + (1-\lambda_1) \textrm{tr}f(\tilde{P})] + (1-\beta) E_1
\nonumber \\ & \quad \quad + \lambda_1 J_{k+1}(P^s_{1,k|k}) + (1-\lambda_1) J_{k+1} (f(\tilde{P})), 
\nonumber  \\ & \quad \quad \quad \vdots 
\nonumber \\& \quad \beta[ \lambda_M \textrm{tr}P^s_{M,k|k} + (1-\lambda_M) \textrm{tr}f(\tilde{P})] + (1-\beta) E_M \nonumber  \\ & \quad \quad + \lambda_M J_{k+1}(P^s_{M,k|k}) + (1-\lambda_M) J_{k+1} ( f(\tilde{P})), 
\nonumber \\ & \quad \beta \textrm{tr}f(\tilde{P}) +  J_{k+1} ( f(\tilde{P})) \Big\}.  
\end{align}

\begin{theorem}
\label{multi_sensor_structural_properties_theorem}
(i) The functions defined by
\begin{equation*}
\begin{split} 
& \phi_{m,k}(\tilde{P}) \triangleq \beta \textnormal{tr}f(\tilde{P})\! +\!  J_{k+1} (f(\tilde{P}))  \!-\! \beta[ \lambda_m \textnormal{tr} P^s_{m,k|k} \!+\! (1\!-\!\lambda_m) \\ & \times \textnormal{tr}f(\tilde{P})]  \!-\! (1\!-\!\beta) E_m   \!-\! \lambda_m J_{k+1}(P^s_{m,k|k})\! -\! (1\!-\!\lambda_m) J_{k+1} ( f(\tilde{P} )) 
\end{split}
\end{equation*}
for $ m=1,\dots,M$, $k=1,\dots,K$, are increasing functions of $\tilde{P}$. 
\\(ii) Define 
\begin{equation*}
\begin{split}
 \psi_{m,k}(\tilde{P}) & \triangleq \beta[ \lambda_m \textnormal{tr}P^s_{m,k|k} + (1-\lambda_m) \textnormal{tr}f( \tilde{P} )] + (1-\beta) E_m \\ & \quad \quad + \lambda_m J_{k+1}(P^s_{m,k|k}) + (1-\lambda_m) J_{k+1} ( f(\tilde{P} )) 
\end{split}
\end{equation*}
for $ m=1,\dots,M$, $k=1,\dots,K$.
Suppose that for some $m, n \in \{1,\dots,M\}$, and $\tilde{P}, \tilde{P}' \in \mathcal{S}$ with $\tilde{P}' \geq \tilde{P}$, we have
\begin{equation}
\label{inequality1-2}
 \psi_{m,k}(\tilde{P}) \leq \psi_{n,k}(\tilde{P}) \textrm{ and } \psi_{m,k}(\tilde{P}') \geq \psi_{n,k}(\tilde{P}').
\end{equation}
Then for $\tilde{P}'' \geq \tilde{P}'$, we have 
$
\psi_{m,k}(\tilde{P}'') \geq \psi_{n,k}(\tilde{P}'').
$
\end{theorem}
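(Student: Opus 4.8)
The plan is to observe that every $\tilde{P}$-dependent quantity in the statement is an affine function of the single scalar
\[
g_k(\tilde{P}) \triangleq \beta\,\textnormal{tr}f(\tilde{P}) + J_{k+1}(f(\tilde{P})),
\]
and that $g_k$ is increasing in $\tilde{P}$. Monotonicity of $g_k$ follows from three facts: $f(X)=AXA^T+Q$ preserves the partial order (if $X\le Y$ then $A(Y-X)A^T$ is positive semi-definite, so $f(X)\le f(Y)$); $\textnormal{tr}f$ is increasing by Lemma \ref{f_increasing_lemma}; and $J_{k+1}$ is increasing by Lemma \ref{J_fn_increasing_lemma_multi_sensor}, so $J_{k+1}(f(\cdot))$ is increasing as a composition. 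Since $\beta>0$, $g_k$ is a sum of increasing functions, hence increasing.

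The key algebraic step, from which everything else follows, is to collect the $(1-\lambda_m)$-weighted copies of $\textnormal{tr}f(\tilde{P})$ and $J_{k+1}(f(\tilde{P}))$ in the definition of $\psi_{m,k}$ and write
\[
\psi_{m,k}(\tilde{P}) = (1-\lambda_m)\,g_k(\tilde{P}) + c_m,
\]
where $c_m \triangleq \beta\lambda_m\,\textnormal{tr}P^s_{m,k|k} + (1-\beta)E_m + \lambda_m J_{k+1}(P^s_{m,k|k})$ is independent of $\tilde{P}$. Part (i) is then immediate: since the ``no transmission'' cost appearing in (\ref{J_k_alternative}) is exactly $g_k(\tilde{P})$, we get $\phi_{m,k}(\tilde{P})=g_k(\tilde{P})-\psi_{m,k}(\tilde{P})=\lambda_m g_k(\tilde{P})-c_m$, a nonnegative multiple of an increasing function plus a constant, hence increasing.

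For part (ii), the same representation gives
\[
\psi_{m,k}(\tilde{P})-\psi_{n,k}(\tilde{P})=(\lambda_n-\lambda_m)\,g_k(\tilde{P})+(c_m-c_n),
\]
so the relative preference between sensors $m$ and $n$ depends on $\tilde{P}$ only through an affine function of the single increasing scalar $g_k$. This collapses the comparison over the partially ordered set $\mathcal{S}$ into a one-dimensional single-crossing problem. Writing $D(\tilde{P})\triangleq\psi_{m,k}(\tilde{P})-\psi_{n,k}(\tilde{P})$, the hypotheses read $D(\tilde{P})\le 0\le D(\tilde{P}')$ with $\tilde{P}'\ge\tilde{P}$. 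Subtracting yields $(\lambda_n-\lambda_m)\big(g_k(\tilde{P}')-g_k(\tilde{P})\big)=D(\tilde{P}')-D(\tilde{P})\ge 0$, while monotonicity gives $g_k(\tilde{P}')\ge g_k(\tilde{P})$; hence whenever $g_k(\tilde{P}')>g_k(\tilde{P})$ we must have $\lambda_n\ge\lambda_m$. For any $\tilde{P}''\ge\tilde{P}'$, monotonicity gives $g_k(\tilde{P}'')\ge g_k(\tilde{P}')$, and with $\lambda_n-\lambda_m\ge 0$ we obtain $D(\tilde{P}'')\ge D(\tilde{P}')\ge 0$, which is the claim.

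The main obstacle is the degenerate tie case $g_k(\tilde{P}')=g_k(\tilde{P})$, where the sign of $\lambda_n-\lambda_m$ cannot be read off from the hypotheses. In that case the hypotheses force $D(\tilde{P})=D(\tilde{P}')=0$, and I would close the gap by noting that for invertible $A$ (the relevant regime for unstable systems) $\textnormal{tr}f$ is strictly monotone, so $g_k(\tilde{P}')=g_k(\tilde{P})$ together with $\tilde{P}'\ge\tilde{P}$ forces $\tilde{P}'=\tilde{P}$; the statement then reduces to the equality point and holds under the convention that ties are broken in favour of $n$. Everything else is routine cancellation and application of the two monotonicity lemmas already established.
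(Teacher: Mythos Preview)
Your argument is essentially the paper's own proof, just expressed more cleanly: the paper also simplifies $\phi_{m,k}$ to $\lambda_m\big[\beta\,\textnormal{tr}f(\tilde{P})+J_{k+1}(f(\tilde{P}))\big]$ minus a $\tilde{P}$-independent constant and invokes Lemmas~\ref{f_increasing_lemma} and~\ref{J_fn_increasing_lemma_multi_sensor} for part (i); for part (ii) it rewrites the two hypotheses, deduces $\lambda_m\le\lambda_n$, and then concludes directly for $\tilde{P}''\ge\tilde{P}'$. Your introduction of $g_k$ and the affine form $\psi_{m,k}=(1-\lambda_m)g_k+c_m$ is a tidy repackaging, not a different route.

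Where you go beyond the paper is in flagging the degenerate tie $g_k(\tilde{P}')=g_k(\tilde{P})$; the paper simply asserts $\lambda_m\le\lambda_n$ without treating this boundary case. However, your proposed patch does not actually close the gap: reducing to $\tilde{P}'=\tilde{P}$ (via strict monotonicity of $\textnormal{tr}f$ when $A$ is invertible) still leaves the conclusion $D(\tilde{P}'')\ge0$ unproved for $\tilde{P}''>\tilde{P}'$, since nothing then forbids $\lambda_n<\lambda_m$, and a tie-breaking convention at the single point $\tilde{P}'$ cannot control strictly larger $\tilde{P}''$. The clean fix is to note that for invertible $A$ and comparable but distinct $\tilde{P}\le\tilde{P}'$ in $\mathcal{S}$ one has $\textnormal{tr}f(\tilde{P}')>\textnormal{tr}f(\tilde{P})$ and hence $g_k(\tilde{P}')>g_k(\tilde{P})$, so the degenerate case only arises when $\tilde{P}'=\tilde{P}$, in which event the theorem's hypothesis is vacuous as a ``switch'' and the intended structural conclusion (Corollary~\ref{two_sensor_structural_lemma} and the discussion following the theorem) is unaffected.
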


\begin{proof}
(i) We can simplify the functions to
\begin{equation}
\label{phi_simplified}
\begin{split}
\phi_{m,k} (\tilde{P}) & =\beta \lambda_m \textrm{tr} f(\tilde{P} ) + \lambda_m J_{k+1} (f(\tilde{P} )) - [\beta \lambda_m \textrm{tr}P^s_{m,k|k}\\ & \quad \quad + (1-\beta) E_m  + \lambda_m J_{k+1} (P^s_{m,k|k})]
 \end{split}
 \end{equation}
which are increasing in $\tilde{P}$ by Lemmas \ref{f_increasing_lemma} and \ref{J_fn_increasing_lemma_multi_sensor}. 
\\(ii) Rewrite (\ref{inequality1-2})  as 
\begin{equation}
\label{inequality3}
\begin{split}
&(1-\lambda_m) [ \beta \textrm{tr} f(\tilde{P}) + J_{k+1} (f( \tilde{P}))] \\ & \quad + \beta \lambda_m \textrm{tr} P^s_{m,k|k} + (1-\beta) E_m + \lambda_m J_{k+1} (P^s_{m,k|k}) \\ 
& \leq (1-\lambda_n) [ \beta \textrm{tr} f(\tilde{P} ) + J_{k+1} (f( \tilde{P}))] \\ & \quad  + \beta \lambda_n \textrm{tr}P^s_{n,k|k} + (1-\beta) E_n + \lambda_n J_{k+1} (P^s_{n,k|k}) 
\end{split}
\end{equation}
and 
\begin{equation}
\label{inequality4}
\begin{split}
&(1-\lambda_m) [ \beta \textrm{tr} f( \tilde{P}') + J_{k+1} (f( \tilde{P}' ))] \\ & \quad + \beta \lambda_m \textrm{tr}P^s_{m,k|k} + (1-\beta) E_m + \lambda_m J_{k+1} (P^s_{m,k|k}) \\ 
& \geq (1-\lambda_n) [ \beta \textrm{tr} f( \tilde{P}' ) + J_{k+1} (f( \tilde{P}'))] \\ & \quad + \beta \lambda_n \textrm{tr} P^s_{n,k|k} + (1-\beta) E_n + \lambda_n J_{k+1} (P^s_{n,k|k}). 
\end{split}
\end{equation}
Since $\tilde{P}' \geq \tilde{P}$, expressions (\ref{inequality3})-(\ref{inequality4}) and Lemmas \ref{f_increasing_lemma} and \ref{J_fn_increasing_lemma_multi_sensor} imply that 
$\lambda_m \leq \lambda_n$. 
Thus, for $\tilde{P}'' \geq \tilde{P}'$ we have 
\begin{equation*}
\begin{split}
&(1-\lambda_m) [ \beta \textrm{tr} f( \tilde{P}'' ) + J_{k+1} (f( \tilde{P}''))] \\ & \quad + \beta \lambda_m \textrm{tr}P^s_{m,k|k} + (1-\beta) E_m + \lambda_m J_{k+1} (P^s_{m,k|k}) \\ 
& \geq (1-\lambda_n) [ \beta \textrm{tr} f(\tilde{P}'') + J_{k+1} (f( \tilde{P}'' ))] \\ & \quad + \beta \lambda_n \textrm{tr} P^s_{n,k|k} + (1-\beta) E_n + \lambda_n J_{k+1} (P^s_{n,k|k}). 
\end{split}
\end{equation*}
\end{proof}

Theorem \ref{multi_sensor_structural_properties_theorem} characterizes some structural properties of the optimal transmission schedule over a finite horizon. Theorem \ref{multi_sensor_structural_properties_theorem}(i) and expression (\ref{J_k_alternative}) allow one to conclude that for unstable $A$ and sufficiently large $\tilde{P}$ one will always schedule a sensor to transmit. This is because $\textrm{tr} f(\tilde{P}) \rightarrow \infty$ as $\tilde{P}$ increases, so that (\ref{phi_simplified})
is always positive for sufficiently large $\tilde{P}$. On the other hand, for stable $A$, we could encounter the situation where sensors are never scheduled to transmit if the costs of transmission $E_m$ are large, since now $\textrm{tr} f(\tilde{P})$ is always bounded (where the bound could depend on the initial error covariance). 

Theorem \ref{multi_sensor_structural_properties_theorem}(ii) and expression (\ref{J_k_alternative}) further show that the optimal schedule exhibits threshold-type behaviour in switching from one sensor to another: If for some $\tilde{P}$, sensor $m$ is scheduled to transmit, while for some larger $\tilde{P}'$, sensor $n$ (with $n \neq m$) is scheduled to transmit, then sensor $m$ will not transmit $\forall \tilde{P}'' > \tilde{P}'$.  Note however that Theorem \ref{multi_sensor_structural_properties_theorem} may not cover all possible situations, since the set $\mathcal{S}$ given by (\ref{S_defn_multi_sensor}) is in general not a totally ordered set.

For  scalar systems (or systems with scalar states $x_k$ and hence scalar $\tilde{P}_{k|k}$), the set $\mathcal{S}$ \emph{is} totally ordered, and Theorem \ref{multi_sensor_structural_properties_theorem} and (\ref{J_k_alternative}) can be used to provide a fairly complete characterization.\footnote{The set $\mathcal{S}$ is also totally ordered in the vector system, single sensor situation in steady state, see Section \ref{single_sensor_sec}.} For example, in the situation with two sensors, we have:
\begin{corollary}
\label{two_sensor_structural_lemma} 
For a scalar system with two sensors, for each $k \in \{1,\dots,K\}$, the behaviour of the optimal $\nu_{1,k}^*$ and $\nu_{2,k}^*$ falls into exactly one of the following four scenarios:
\\(i) There exists a $\tilde{P}_{1,k-1}^{\textrm{th}}$ such that $\nu_{2,k}^*=0, \forall \tilde{P}_{k-1|k-1}$,  $\nu_{1,k}^* = 0$ for $\tilde{P}_{k-1|k-1} < \tilde{P}_{1,k-1}^{\textrm{th}}$, and $\nu_{1,k}^* = 1$ for $\tilde{P}_{k-1|k-1} \geq \tilde{P}_{1,k-1}^{\textrm{th}}$.
\\(ii) There exists a $\tilde{P}_{2,k-1}^{\textrm{th}}$ such that $\nu_{1,k}^*=0, \forall \tilde{P}_{k-1|k-1}$, $\nu_{2,k}^* = 0$ for $\tilde{P}_{k-1|k-1} < \tilde{P}_{2,k-1}^{\textrm{th}}$, and $\nu_{2,k}^* = 1$ for $\tilde{P}_{k-1|k-1} \geq \tilde{P}_{2,k-1}^{\textrm{th}}$.
\\(iii) There exists some $\tilde{P}_{1,k-1}^{\textrm{th}}$ and $\tilde{P}_{2,k-1}^{\textrm{th}}$ such that $\nu_{2,k}^* = 0$ for $\tilde{P}_{k-1|k-1} < \tilde{P}_{2,k-1}^{\textrm{th}}$, $\nu_{2,k}^* = 1$ for $\tilde{P}_{2,k-1}^{\textrm{th}} \leq \tilde{P}_{k-1|k-1} < \tilde{P}_{1,k-1}^{\textrm{th}}$, and $\nu_{1,k}^* = 1$ for $\tilde{P}_{k-1|k-1} \geq \tilde{P}_{1,k-1}^{\textrm{th}}$.
\\(iv) There exists some $\tilde{P}_{1,k-1}^{\textrm{th}}$ and $\tilde{P}_{2,k-1}^{\textrm{th}}$ such that $\nu_{1,k}^* = 0$ for $\tilde{P}_{k-1|k-1} < \tilde{P}_{1,k-1}^{\textrm{th}}$, $\nu_{1,k}^* = 1$ for $\tilde{P}_{1,k-1}^{\textrm{th}} \leq \tilde{P}_{k-1|k-1} < \tilde{P}_{2,k-1}^{\textrm{th}}$, and $\nu_{2,k}^* = 1$ for $\tilde{P}_{k-1|k-1} \geq \tilde{P}_{2,k-1}^{\textrm{th}}$.
\end{corollary}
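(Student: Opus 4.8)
The plan is to exploit the two features special to the scalar, two--sensor setting: that $\mathcal{S}$ is now \emph{totally} ordered, so that the ``increasing'' statements of Theorem~\ref{multi_sensor_structural_properties_theorem} become genuine threshold statements; and that the three quantities compared in the minimization (\ref{J_k_alternative}) are all \emph{affine} in the single scalar $g(\tilde{P}) \triangleq \beta\,\textrm{tr}f(\tilde{P}) + J_{k+1}(f(\tilde{P}))$. Writing $\psi_{m,k}(\tilde{P}) = C_m + (1-\lambda_m)\,g(\tilde{P})$, where $C_m \triangleq \beta\lambda_m \textrm{tr}P^s_{m,k|k} + (1-\beta)E_m + \lambda_m J_{k+1}(P^s_{m,k|k})$ does not depend on $\tilde{P}$ and is positive, and noting that the no--transmission cost in (\ref{J_k_alternative}) equals $g(\tilde{P})$ itself, I would observe that the optimal action at each $\tilde{P}$ is just the index attaining the lower envelope of three lines in the variable $g$, with slopes $1$, $1-\lambda_1$, $1-\lambda_2$ and intercepts $0$, $C_1$, $C_2$. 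Since $g$ is increasing in $\tilde{P}$ by Lemmas~\ref{f_increasing_lemma} and \ref{J_fn_increasing_lemma_multi_sensor} and $\mathcal{S}$ is totally ordered, as $\tilde{P}$ increases $g$ increases, so the whole problem reduces to reading off the argmin of this lower envelope as $g$ sweeps an interval.

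The two structural facts I would then invoke are: (a) the lower envelope of finitely many affine functions is convex and piecewise linear, so the slope of the active (minimizing) line is non--increasing in $g$, whence each line is optimal on a single contiguous range of $g$; this is exactly the ``no switch back'' content of Theorem~\ref{multi_sensor_structural_properties_theorem}(ii). (b) The no--transmission line has the strictly largest slope, namely $1 > 1-\lambda_m$ because $\lambda_m > 0$, so once any sensor is scheduled it is never again optimal to stay silent at a larger covariance; this is exactly Theorem~\ref{multi_sensor_structural_properties_theorem}(i), since by (\ref{phi_simplified}) the gap $\phi_{m,k} = \lambda_m g - C_m$ is increasing. Together these force the optimal action, as $\tilde{P}$ increases, to visit the three lines in order of strictly decreasing slope, skipping any line that never attains the envelope, and with the silent region (if any) appearing only at the bottom.

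From here the four scenarios fall out by a case split on $\lambda_1$ versus $\lambda_2$. If $\lambda_1 > \lambda_2$ then $1-\lambda_2 > 1-\lambda_1$, so after leaving no--transmission the envelope can only pass through the sensor~$2$ line before the sensor~$1$ line; the admissible active sequences are (silent, $1$), (silent, $2$), and (silent, $2$, $1$), which are precisely scenarios (i), (ii), (iii). The case $\lambda_1 < \lambda_2$ is symmetric and gives (i), (ii), (iv), while $\lambda_1 = \lambda_2$ makes $\psi_{1,k} - \psi_{2,k} = C_1 - C_2$ constant, so only the sensor with the smaller $C_m$ is ever used, giving (i) or (ii). Identifying the thresholds $\tilde{P}^{\textrm{th}}_{m,k-1}$ with the (at most two) breakpoints of the envelope, exactly one scenario holds in every case, and mutual exclusivity is immediate from the disjointness of the described transmit patterns.

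The main obstacle, and the part demanding the most care, is the bookkeeping of the \emph{degenerate} configurations rather than the generic three--piece envelope: when a sensor line never touches the envelope (that sensor transmits on the empty set), and when the no--transmission line is already dominated at the smallest element of $\mathcal{S}$ (the silent region is empty). Both must be shown to collapse consistently into one of (i)--(iv), which I would handle by allowing the relevant $\tilde{P}^{\textrm{th}}_{m,k-1}$ to lie at or below $\min\mathcal{S}$, or above all realized covariances. The remaining point to pin down is that the convexity/monotone--slope argument remains valid when restricted to the discrete, totally ordered set of realized values of $g$ rather than a continuum; this is straightforward because upward-- and downward--closedness of the optimal--action sets on a totally ordered set is exactly what ``increasing'' delivers, but it should be stated explicitly so that the breakpoints are well defined on $\mathcal{S}$.
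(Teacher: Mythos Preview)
Your argument is correct and lands in the same place as the paper, which simply states that the corollary follows from Theorem~\ref{multi_sensor_structural_properties_theorem} together with~(\ref{J_k_alternative}) and the total ordering of $\mathcal{S}$, without spelling out details. Your route is a modest but genuine refinement: by observing that each candidate cost in~(\ref{J_k_alternative}) is affine in the single scalar $g(\tilde P)=\beta\,\textrm{tr}f(\tilde P)+J_{k+1}(f(\tilde P))$, you recast the problem as reading off the lower envelope of three lines with slopes $1,\,1-\lambda_1,\,1-\lambda_2$. This reproduces Theorem~\ref{multi_sensor_structural_properties_theorem}(i)--(ii) in this special case (largest slope for the silent line; non-increasing active slope along the envelope) without needing to invoke them as black boxes, and it buys something the paper does not state: the case split on $\lambda_1\lessgtr\lambda_2$ tells you \emph{which} of scenarios (iii) or (iv) can occur for given reception probabilities, since the order in which sensors can appear is dictated by the slope ordering. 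The paper's route, by contrast, is shorter to write and extends verbatim to $M>2$ sensors, whereas your envelope picture would require tracking $M+1$ lines.

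Your handling of the degenerate cases (empty silent region, a sensor line never touching the envelope, the stable-$A$ case where the range of $g$ is bounded) is the right level of care; these all collapse into one of (i)--(iv) by placing the relevant threshold at or beyond the extremes of $\mathcal{S}$, exactly as you say. One small point: the positivity of $C_m$ is not actually needed for the argument---what matters is only the slope ordering $1>1-\lambda_m$, which requires $\lambda_m>0$.
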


From numerical simulations, one finds that each of the above four scenarios can occur (for different parameter values), see Section \ref{numerical_multi_sensor_sec}.

\subsection{Infinite Horizon Costs}
\label{infinite_horizon_structural_sec}
For the infinite horizon problem (\ref{infinite_horizon_problem_multi_sensor}), we have the following counterpart to Theorem \ref{multi_sensor_structural_properties_theorem}.

\begin{lemma}
\label{multi_sensor_structural_properties_theorem_inf_horizon}
(i) The functions defined by
\begin{equation*}
\begin{split} 
\phi_m (\tilde{P}) &\triangleq  \beta \textnormal{tr}f(\tilde{P}) +  h (f(\tilde{P}))  - \beta[ \lambda_m \textnormal{tr}\bar{P}_m + (1-\lambda_m) \textnormal{tr}f(\tilde{P})] \\ & \quad \quad - (1-\beta) E_m   - \lambda_m h(\bar{P}_m) - (1-\lambda_m)h ( f(\tilde{P} )) 
\end{split}
\end{equation*}
for $m=1,\dots,M$, are increasing functions of $\tilde{P}$. 
\\(ii) Define 
\begin{equation*}
\begin{split}
\psi_m(\tilde{P}) & \triangleq \beta[ \lambda_m \textnormal{tr}\bar{P}_m + (1-\lambda_m) \textnormal{tr}f( \tilde{P} )] + (1-\beta) E_m \\ & \quad + \lambda_m h(\bar{P}_m) + (1-\lambda_m) h ( f(\tilde{P} ))  
\end{split}
\end{equation*}
for $m=1,\dots,M$. 
Suppose that for some $m, n \in \{1,\dots,M\}$, and $\tilde{P}, \tilde{P}' \in \mathcal{S}$ with $\tilde{P}' \geq \tilde{P}$, we have
$\psi_m(\tilde{P}) \leq \psi_n(\tilde{P})$
and
$\psi_m(\tilde{P'}) \geq \psi_n(\tilde{P'}).$
Then for $\tilde{P}'' \geq \tilde{P}'$, we have 
$\psi_m(\tilde{P''}) \geq \psi_n(\tilde{P''}).$
\end{lemma}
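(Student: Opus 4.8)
The plan is to reduce the infinite-horizon statement to the finite-horizon argument of Theorem \ref{multi_sensor_structural_properties_theorem}, the only genuinely new ingredient being that the differential cost $h(.)$ inherits the monotonicity that $J_k(.)$ enjoyed in Lemma \ref{J_fn_increasing_lemma_multi_sensor}. So the first and main step is to establish that $h(.)$ is an \emph{increasing} function on $\mathbf{S}$. I would obtain this from the relative value iteration (\ref{relative_value_iteration}): initializing (without loss of generality) at a constant $V_0$, each $V_{l+1}$ is the Bellman operator applied to $V_l$, which has exactly the same form as the recursion (\ref{J_fn_multi_sensor}) for $J_k$. Hence by the induction of Lemma \ref{J_fn_increasing_lemma_multi_sensor} (using Lemma \ref{f_increasing_lemma} together with the fact that the weights $1-\sum_{m=1}^M \nu_m \prod_{n\neq m}(1-\nu_n)\lambda_m$ are nonnegative) every $V_l$ is increasing. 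Since $h_l(\tilde P)=V_{l+1}(\tilde P)-V_{l+1}(\tilde P_f)$ differs from $V_{l+1}$ only by the additive constant $V_{l+1}(\tilde P_f)$, each $h_l$ is increasing, and because (under the hypotheses of Theorem \ref{Bellman_eqn_lemma_multi_sensor}) $h_l\to h$ pointwise, the limit $h$ is increasing as well.

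With $h(.)$ increasing, part (i) is immediate and parallels Theorem \ref{multi_sensor_structural_properties_theorem}(i): cancelling the $(1-\lambda_m)$ terms simplifies $\phi_m$ to
\begin{equation*}
\phi_m(\tilde P) = \beta \lambda_m \textnormal{tr} f(\tilde P) + \lambda_m h(f(\tilde P)) - \big[\beta\lambda_m \textnormal{tr}\bar P_m + (1-\beta)E_m + \lambda_m h(\bar P_m)\big].
\end{equation*}
This is increasing in $\tilde P$: $\textnormal{tr} f$ is increasing by Lemma \ref{f_increasing_lemma}, while $h\circ f$ is increasing because $f$ is monotone in the positive semidefinite order ($X\le Y \Rightarrow AXA^T+Q\le AYA^T+Q$) composed with the increasing $h$; the bracketed term is a constant independent of $\tilde P$.

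For part (ii) I would write $\psi_m$ in the affine form $\psi_m(\tilde P)=(1-\lambda_m)g(\tilde P)+c_m$, where $g(\tilde P)\triangleq \beta\textnormal{tr}f(\tilde P)+h(f(\tilde P))$ is increasing (by the reasoning just used) and $c_m\triangleq \beta\lambda_m\textnormal{tr}\bar P_m+(1-\beta)E_m+\lambda_m h(\bar P_m)$ is independent of $\tilde P$. The two hypotheses then read $(\lambda_n-\lambda_m)g(\tilde P)\le c_n-c_m$ and $(\lambda_n-\lambda_m)g(\tilde P')\ge c_n-c_m$; subtracting and using $g(\tilde P')\ge g(\tilde P)$ (since $\tilde P'\ge\tilde P$) forces $\lambda_m\le\lambda_n$, exactly the step that yielded $\lambda_m\le\lambda_n$ in Theorem \ref{multi_sensor_structural_properties_theorem}(ii). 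Consequently $\psi_m-\psi_n=(\lambda_n-\lambda_m)g-(c_n-c_m)$ is an increasing function of $\tilde P$, so its nonnegativity at $\tilde P'$ propagates to every $\tilde P''\ge\tilde P'$, giving $\psi_m(\tilde P'')\ge\psi_n(\tilde P'')$.

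The main obstacle is the first step. Unlike $J_k$, which is defined by an explicit finite backward recursion, $h$ is only characterized implicitly as a solution of the Bellman equation (\ref{Bellman_eqn_multi_sensor}), so the care needed is to justify that monotonicity survives the limiting procedure: that the $V_l$ generated by value iteration are genuinely increasing at every stage and that their constant-shifted limit $h$ exists pointwise and remains increasing. This leans on the convergence of relative value iteration guaranteed under the stability/packet-reception conditions of Theorem \ref{Bellman_eqn_lemma_multi_sensor}. Once monotonicity of $h$ is in hand, parts (i) and (ii) are essentially verbatim copies of the finite-horizon argument with $J_{k+1}$ replaced by $h$ and $P^s_{m,k|k}$ replaced by $\bar P_m$.
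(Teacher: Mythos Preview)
Your proposal is correct and follows essentially the same approach as the paper: the paper's proof also appeals to the relative value iteration (\ref{relative_value_iteration}), observes that the arguments of Theorem \ref{multi_sensor_structural_properties_theorem} carry over when $J_{k+1}(.)$ is replaced by $h_l(.)$, and then passes to the limit $h_l\to h$. Your write-up is in fact more explicit than the paper's---you spell out the monotonicity of $V_l$ and hence $h_l$ (the analogue of Lemma \ref{J_fn_increasing_lemma_multi_sensor}) and the affine decomposition $\psi_m=(1-\lambda_m)g+c_m$ used to extract $\lambda_m\le\lambda_n$---but the route is the same.
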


\begin{proof}
Recalling the relative value iteration algorithm (\ref{relative_value_iteration}), one can show using similar arguments as in the proof of Theorem \ref{multi_sensor_structural_properties_theorem}, that the  properties in Theorem \ref{multi_sensor_structural_properties_theorem} also hold when $J_{k+1}(.)$ is replaced with $h_l(.)$. Since $h_l(\tilde{P}) \rightarrow h(\tilde{P})$ as $ l \rightarrow \infty$, the result follows. 
\end{proof}

In the infinite horizon situation, any thresholds (which for the finite horizon situation are generally time-varying) become constant, i.e. do not depend on $k$. Thus for example, with the  scalar system, two sensor situation considered in Corollary \ref{two_sensor_structural_lemma}, one may replace $\tilde{P}_{1,k-1}^{\textrm{th}}$ and 
$\tilde{P}_{2,k-1}^{\textrm{th}}$ with $\tilde{P}_{1}^{\textrm{th}}$ and 
$\tilde{P}_{2}^{\textrm{th}}$ respectively, see also Theorem \ref{threshold_policy_theorem}.

 \begin{remark}
 \label{computational_remark}
The structural results derived above allow for significant reductions in the amount of computation required to solve problems (\ref{finite_horizon_problem_multi_sensor}) and (\ref{infinite_horizon_problem_multi_sensor}). For example, by Theorem \ref{multi_sensor_structural_properties_theorem} or  \ref{multi_sensor_structural_properties_theorem_inf_horizon}, if for some $P$ one has $\nu_m^*=1$, and for a larger $P'$ one has $\nu_m^*=0$, then one can automatically set $\nu_m^*=0$ for all $P'' \geq P'$. See also \cite{NgoKrishnamurthy} for a related discussion.

When the covariance matrices are not comparable in the positive semi-definite ordering, then the full dynamic programming or value iteration algorithm will need to be run in order to solve the optimization problems. Nevertheless, when the remote estimator (\ref{remote_estimator_eqns_multi_sensor}) is used the computational complexity  is not prohibitive.  In the case where the local Kalman filters have converged to steady state, which is likely when the horizon $K$ is large or if we're interested in the infinite horizon, the  ``state space'' $\mathcal{S}$ simplifies to 
(\ref{S_steady_state}),
which in numerical approaches is truncated to the set $\mathcal{S}^N$ defined in
(\ref{S_truncated}).  The cardinality of  $\mathcal{S}^N $ is $N  M$, which is not exponential in the number of sensors $M$ or the horizon $K$. Furthermore, the ``action space'' $\mathcal{V}$ defined in (\ref{V_defn}) has cardinality $M+1$, which is also linear in $M$.  
 \end{remark}

\subsection{Single Sensor Case}
\label{single_sensor_sec}
In this subsection we will focus on a vector system with a single sensor, and where the local Kalman filter operates in steady state, to further characterize the optimal solutions to problems (\ref{finite_horizon_problem_multi_sensor}) and (\ref{infinite_horizon_problem_multi_sensor}). 
For notational simplicity, we will drop the subscript ``1'' from quantities such as $\nu_{1,k}, \bar{P}_1,  P_{1,k-1|k-1}$. 
Recall the set $\mathcal{S}$ defined by (\ref{S_defn_multi_sensor}), which in the multi-sensor case is  not totally ordered in general.  For the single sensor case in steady state, $\mathcal{S}$ becomes:
\begin{equation}
\label{S_defn_single_sensor}
 \mathcal{S} \triangleq \{\bar{P}, f(\bar{P}), f^2(\bar{P}), \dots \}.
 \end{equation}
\begin{lemma}
\label{total_ordering_lemma}
In the single sensor case, there is a total ordering on the elements of $\mathcal{S}$ given by
$$ \bar{P} \leq f(\bar{P}) \leq f^2 (\bar{P}) \leq \dots .$$
\end{lemma}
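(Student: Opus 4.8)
The plan is to reduce the whole totally-ordered chain to a single base inequality $\bar{P} \le f(\bar{P})$ and then propagate it using monotonicity of $f$ in the positive semidefinite order. First I would record that $f$ is monotone: for symmetric $X \le Y$ we have $f(Y) - f(X) = A(Y-X)A^T \ge 0$, since $Y-X \ge 0$, so $X \le Y \Rightarrow f(X) \le f(Y)$. This is the matrix-valued counterpart of Lemma \ref{f_increasing_lemma}, and the argument is immediate from the definition $f(X) = AXA^T + Q$ in (\ref{f_defn}).

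The heart of the argument is the base case $\bar{P} \le f(\bar{P})$, which I would obtain from the standard steady-state Kalman filtering relations linking the predicted covariance $\bar{P}^s$ and the posterior covariance $\bar{P}$ of the single sensor. The measurement update gives $\bar{P} = (I - K^s C)\bar{P}^s = \bar{P}^s - \bar{P}^s C^T (C\bar{P}^s C^T + R)^{-1} C \bar{P}^s \le \bar{P}^s$, because the subtracted term is positive semidefinite, so the posterior never exceeds the predicted covariance. The time update in steady state reads $\bar{P}^s = A \bar{P} A^T + Q = f(\bar{P})$, i.e. applying $f$ to the steady-state posterior recovers the steady-state predicted covariance. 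Combining the two yields $\bar{P} \le \bar{P}^s = f(\bar{P})$.

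Finally I would conclude by induction and transitivity. Applying the monotonicity of $f$ to $\bar{P} \le f(\bar{P})$ repeatedly, the induction step is trivial: assuming $f^n(\bar{P}) \le f^{n+1}(\bar{P})$, monotonicity gives $f^{n+1}(\bar{P}) \le f^{n+2}(\bar{P})$, so $f^n(\bar{P}) \le f^{n+1}(\bar{P})$ holds for every $n \ge 0$. Since $``\le"$ is transitive on symmetric matrices, any two elements $f^i(\bar{P})$ and $f^j(\bar{P})$ with $i \le j$ satisfy $f^i(\bar{P}) \le f^j(\bar{P})$; hence the entire chain $\bar{P} \le f(\bar{P}) \le f^2(\bar{P}) \le \cdots$ holds and $\mathcal{S}$ in (\ref{S_defn_single_sensor}) is totally ordered. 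I do not expect a genuine obstacle here: the only nonroutine ingredient is recognizing the identity $\bar{P}^s = f(\bar{P})$ together with the elementary fact $\bar{P} \le \bar{P}^s$, after which monotonicity and induction finish the proof mechanically.
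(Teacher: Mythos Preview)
Your proof is correct and follows essentially the same route as the paper: establish the base inequality $\bar{P}\le f(\bar{P})$ and then propagate it by induction using the monotonicity of $f$ in the positive semidefinite order. The only difference is that the paper cites \cite{ShiEpsteinMurray} for the base case while you derive it directly from the steady-state Kalman identities $\bar{P}\le \bar{P}^s$ and $\bar{P}^s=f(\bar{P})$, and you are explicit that the induction step requires the matrix-valued monotonicity $X\le Y\Rightarrow f(X)\le f(Y)$ (the paper invokes Lemma~\ref{f_increasing_lemma}, which as stated concerns only $\mathrm{tr}\,f$); both points make your version slightly more self-contained, but the argument is otherwise identical.
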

\begin{proof}
We use induction. We have that $f(\bar{P}) \geq \bar{P}$ from, e.g., \cite{ShiEpsteinMurray}. Now assume that $f^n (\bar{P}) \geq f^{n-1} (\bar{P})$. Then
$$f^{n+1}(\bar{P}) = f(f^n(\bar{P})) \geq f(f^{n-1}(\bar{P})) = f^n (\bar{P})$$
where the inequality comes from Lemma \ref{f_increasing_lemma} and the induction hypothesis. Hence,  by induction,
$$ \bar{P} \leq f(\bar{P}) \leq f^2 (\bar{P}) \leq \dots .$$
\end{proof}

Using (\ref{J_k_alternative}), Theorems \ref{multi_sensor_structural_properties_theorem}, \ref{multi_sensor_structural_properties_theorem_inf_horizon}, and Lemma \ref{total_ordering_lemma}, we then conclude the following threshold behaviour of the optimal solution:
\begin{theorem}
\label{threshold_policy_theorem}
(i) In the single sensor case, the optimal solution to the finite horizon problem (\ref{finite_horizon_problem_multi_sensor}) is of the form: 
$$\nu_k^* = \left\{\begin{array}{ccc} 0 & , & P_{k-1|k-1} < P_{k-1|k-1}^{\textrm{th}} \\ 1 & , & P_{k-1|k-1} \geq P_{k-1|k-1}^{\textrm{th}} \end{array} \right.$$
for some thresholds $P_{k-1|k-1}^{\textrm{th}} , k=1,\dots,K$, where the thresholds may be infinite (meaning that  $\nu_k^*=0, \forall P_{k-1|k-1} \in \mathcal{S}$) when $A$ is stable. 
\\ (ii) In the single sensor case, the optimal solution to the infinite horizon problem (\ref{infinite_horizon_problem_multi_sensor}) is of the form: 
\begin{equation}
\label{infinite_horizon_soln}
\nu_k^* = \left\{\begin{array}{ccc} 0 & , & P_{k-1|k-1} < P^{\textrm{th}} \\ 1 & , & P_{k-1|k-1} \geq P^{\textrm{th}} \end{array} \right.
\end{equation}
for some constant threshold $P^{\textrm{th}}$, where the threshold may be infinite  when $A$ is stable. 
\end{theorem}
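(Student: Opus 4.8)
The plan is to reduce the single-sensor scheduling decision to a single scalar comparison and then read off the threshold from results already established. When $M=1$ the action set $\mathcal{V}$ contains only $\mathbf{e}_0$ (do not transmit) and $\mathbf{e}_1$ (transmit), so the minimization in (\ref{J_k_alternative}) selects the smaller of the ``transmit'' cost $\psi_{1,k}(\tilde{P})$ and the ``no-transmit'' cost $\beta\,\textrm{tr}f(\tilde{P}) + J_{k+1}(f(\tilde{P}))$. Hence $\nu_k^*=1$ is optimal exactly when the no-transmit cost is at least the transmit cost, that is, precisely when $\phi_{1,k}(\tilde{P})\geq 0$, where $\phi_{1,k}$ is the difference of these two costs defined in Theorem \ref{multi_sensor_structural_properties_theorem}(i); choosing to transmit on the boundary $\phi_{1,k}=0$ fixes the tie-breaking convention used in the statement.

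For part (i) I would first invoke Theorem \ref{multi_sensor_structural_properties_theorem}(i), which asserts that $\phi_{1,k}$ is an increasing function of $\tilde{P}$, and then Lemma \ref{total_ordering_lemma}, which guarantees that in the single-sensor steady-state case the state space $\mathcal{S}=\{\bar{P},f(\bar{P}),f^2(\bar{P}),\dots\}$ is totally ordered. Combining these, the ``transmit region'' $\{\tilde{P}\in\mathcal{S}:\phi_{1,k}(\tilde{P})\geq 0\}$ is an up-set of the chain: once $\phi_{1,k}$ is nonnegative at some $f^{j}(\bar{P})$ it remains nonnegative at every $f^{i}(\bar{P})$ with $i\geq j$. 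This is exactly a threshold $P_{k-1|k-1}^{\textrm{th}}$ below which $\nu_k^*=0$ and at or above which $\nu_k^*=1$. The threshold is finite when $A$ is unstable, since $\textrm{tr}f(\tilde{P})\to\infty$ along $\mathcal{S}$ forces $\phi_{1,k}$ eventually positive (cf. the discussion following Theorem \ref{multi_sensor_structural_properties_theorem}); for stable $A$ the quantity $\textrm{tr}f(\tilde{P})$ stays bounded, so a large enough $E$ can keep $\phi_{1,k}<0$ throughout $\mathcal{S}$, which is captured by allowing $P_{k-1|k-1}^{\textrm{th}}=\infty$.

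For part (ii) I would rerun the same argument with the infinite-horizon objects: the differential cost $h(\cdot)$ and the function $\phi_1(\cdot)$ of Lemma \ref{multi_sensor_structural_properties_theorem_inf_horizon}(i) replace $J_{k+1}$ and $\phi_{1,k}$, with existence of $h(\cdot)$ and of an optimal stationary minimizer of (\ref{Bellman_eqn_multi_sensor}) guaranteed by Theorem \ref{Bellman_eqn_lemma_multi_sensor}. Since $\phi_1$ is increasing and carries no dependence on $k$, the resulting up-set and hence the threshold $P^{\textrm{th}}$ in (\ref{infinite_horizon_soln}) is constant in $k$. The analytic heavy lifting has already been done in Theorem \ref{multi_sensor_structural_properties_theorem}, Lemma \ref{multi_sensor_structural_properties_theorem_inf_horizon}, and Lemma \ref{total_ordering_lemma}, so this theorem is essentially an assembly of those facts; the only point requiring care — and the one place I expect any genuine subtlety — is the degenerate ``never transmit'' case, where I must verify that the monotone $\phi_{1,k}$ (resp. $\phi_1$) failing to cross zero on the whole chain corresponds correctly to an infinite threshold rather than to a gap in the argument.
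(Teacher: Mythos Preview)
Your proposal is correct and follows essentially the same approach as the paper: the paper's proof is a one-line invocation of (\ref{J_k_alternative}), Theorem \ref{multi_sensor_structural_properties_theorem}, Lemma \ref{multi_sensor_structural_properties_theorem_inf_horizon}, and Lemma \ref{total_ordering_lemma}, and you have simply spelled out how these pieces fit together. Your treatment is in fact more detailed than the paper's own sketch, including the tie-breaking convention and the stable-$A$ degenerate case.
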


\begin{remark}
In Theorem \ref{threshold_policy_theorem}, we could have $P_{k-1|k-1}^{\textrm{th}}$ or $P^{\textrm{th}} $ equal to $\bar{P}$, in which case $\nu_k^*=1, \forall P_{k-1|k-1} \in \mathcal{S}$. 
\end{remark} 

\begin{remark}
As mentioned in the Introduction, Theorem \ref{threshold_policy_theorem} was proved in our conference contribution \cite{LeongDeyQuevedo_ECC} using the theory of submodular functions. 
Under a related setup that minimizes an expected error covariance measure subject to a constraint on the communication rate, the optimality of threshold policies over an infinite horizon was also proved using different techniques in \cite{MoSinopoliShiGarone}. 
\end{remark} 

Thus in the single sensor case the optimal policy is a threshold policy on the error covariance. This also allows us to derive simple analytical expressions for the expected energy usage and expected error covariance for the single sensor case over an infinite horizon. A similar analysis can be carried out for the finite horizon situation but the expressions will be more complicated due to the thresholds $P_{k-1|k-1}^{th}$ in Theorem \ref{threshold_policy_theorem} being time-varying in general. 

Let $t \in \mathbb{N}$ be such that $f^t(\bar{P}) = P^{\textrm{th}}\in \mathcal{S}$, see (\ref{infinite_horizon_soln}). Note that $t$ will depend on the value of $\beta$ chosen in problem (\ref{infinite_horizon_problem_multi_sensor}). 
Then the evolution of the error covariance at the remote estimator can be modelled as the (infinite) Markov chain shown in  Fig. \ref{Markov_chain}, where state $i$ of the Markov chain corresponds to the value $f^i(\bar{P}), i = 0, 1, 2, \dots$, with $f^0(\bar{P}) \triangleq \bar{P}$.
\begin{figure}[htb!]
\centering 
\includegraphics[scale=0.32]{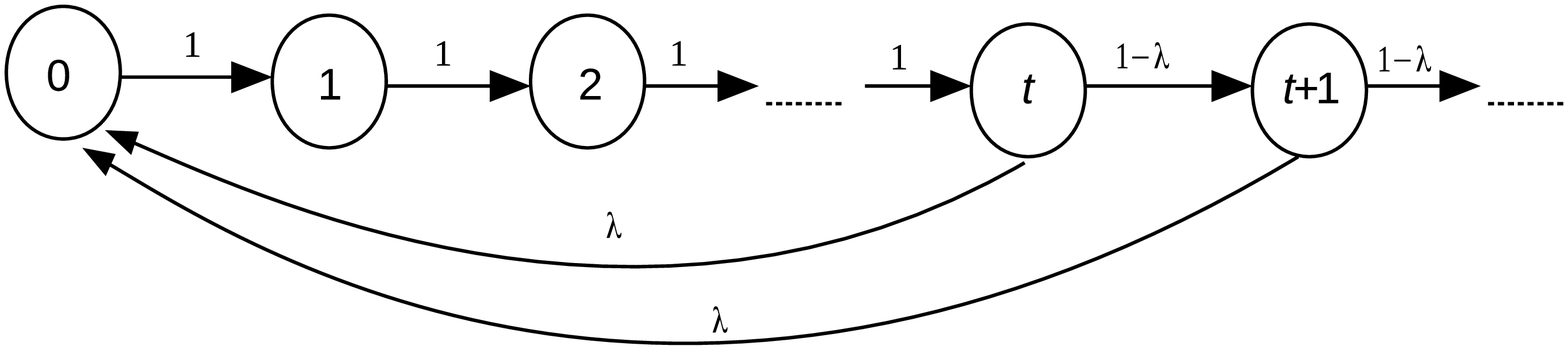} 
\caption{Markov chain for threshold policy}
\label{Markov_chain}
\end{figure}

The transition probability matrix $\mathbf{P}$ for the  Markov chain can be written as:
$$\mathbf{P} \!= \! \left[\! \begin{array}{ccccccccc}
0 & 1 & 0 & \dots & \dots & & & & \dots  \\
0 & 0 & 1 & 0 & \dots & & & & \dots \\
\vdots & & \ddots & \ddots & & & & & \\
0 & \dots & \dots & 0 & 1 & 0 & \dots & & \dots \\
\lambda & 0 & \dots & & 0 & 1-\lambda & 0 & \dots & \dots \\
\lambda & 0 & \dots & & & 0 & 1-\lambda & 0 & \dots \\
\vdots & \vdots & & & & & & & \ddots
 \end{array} \!\right].$$

For $\lambda \in (0,1)$, one can easily verify that this Markov chain is irreducible, aperiodic, and with all states being positive recurrent. 
Then the stationary distribution 
 $$\pi = \left[ \begin{array}{cccccccc} 
 \pi_0 & \pi_1 & \pi_2 & \dots & \pi_t & \pi_{t+1} & \pi_{t+2} & \dots
 \end{array}
 \right],$$
 where $\pi_j$ is the stationary probability of the Markov chain being in state $j$, exists and can be  computed using the relation $\pi=\pi \mathbf{P}$. We find after some calculations that $\pi_j=\pi_0, j=1,\dots,t$, and $\pi_j=(1-\lambda)^{j-t} \pi_0, j=t+1,t+2,\dots$, and so
 $$\pi_0 = \frac{1}{t+1/\lambda} = \frac{\lambda}{\lambda t + 1}.$$
Hence 
$$ \pi_j = \left\{\begin{array}{ccl} \frac{\lambda}{\lambda t + 1} & , & j=0,\dots,t \\
\frac{(1-\lambda)^{j-t} \lambda}{\lambda t + 1} & , & j=t+1,t+2,\dots .\end{array}   \right.$$

We can now derive analytical expressions for the expected energy usage and expected error covariance. 
 For the expected energy usage, since the sensor transmits only when the Markov chain is in states $t,t+1, \dots$, an energy amount of $E$ is used in reaching the states  corresponding to $\bar{P}, f^{t+1}(\bar{P}), f^{t+2}(\bar{P}), \dots$. Hence 
\begin{equation}
\label{expected_energy}
\begin{split}
\mathbb{E}[\textrm{energy}] & = E [ \pi_0 + \pi_{t+1} + \pi_{t+2} + \dots ]\\
& = E \pi_0 [ 1 + 1-\lambda + (1-\lambda)^2 + \dots]\\
& = \frac{E\pi_0}{\lambda} = \frac{E}{\lambda t + 1}.
\end{split}
\end{equation}
For the expected error covariance, we have
\begin{equation}
\label{expected_error_covariance}
\mathbb{E}[\textrm{tr} P_{k|k}] = \pi_0 \textrm{tr}(\bar{P}) + \pi_1 \textrm{tr} (f(\bar{P}))+\pi_2 \textrm{tr} (f^2(\bar{P})) + \dots 
\end{equation}
which can be computed numerically. Under the assumption that  $\lambda > 1 - \frac{1}{\max_{i} |\sigma_i(A)|^2}$, $\mathbb{E}[\textrm{tr} P_{k|k}]$ will be finite, by a similar argument as that used in the proof of Theorem \ref{Bellman_eqn_lemma_multi_sensor}.

\section{Transmitting measurements}
\label{tx_meas_sec}
In this section we will study the situation where sensor measurements instead of local state estimates are transmitted to the remote estimator. In particular, we wish to derive structural results on the optimal transmission schedule. An advantage with transmitting measurements is that detectability at each sensor is not required, but just the detectability of the overall system \cite{TrimpeDAndrea_journal}. In addition, local Kalman filtering at the individual sensors is not required. The optimal remote estimator when sending measurements also has a simpler form than the optimal remote estimator derived in (\ref{multi_sensor_optimal_estimator}) when sending state estimates (though not as simple as the suboptimal estimator (\ref{remote_estimator_eqns_multi_sensor})), which makes it amenable to analysis.  Our descriptions of the model and optimization problem below will be kept brief, in order to proceed quickly to the structural results. 

\subsection{System Model}
The process and measurements follow the same model as in (\ref{state_eqn})-(\ref{measurement_eqn}). 
Instead of assuming that the individual sensors are detectable, we will now merely assume that $(A,C)$ is detectable, where $C \triangleq \left[ \begin{array}{ccc} C_1^T & \dots & C_M^T \end{array} \right]^T$ is the matrix formed by stacking $C_1,\dots,C_M$ on top of each other. 

Let $\nu_{m,k} \in \{0,1\}, m=1,\dots,M$ be  decision variables such that $\nu_{m,k}=1$ if the measurement $y_{m,k}$ (rather than the local state estimate) is to be transmitted to the remote estimator at time $k$, and $\nu_{m,k}=0$ if there is no transmission. As before (see Fig. \ref{system_model}), the transmit decisions $\nu_{m,k}$ are to be decided at the remote estimator and assumed to only depend on the error covariance at the remote estimator. 

At the remote estimator, if no sensors are scheduled to transmit, then the state estimates and error covariances are updated by (\ref{remote_estimator_no_tx}). If sensor $\breve{m} \in \{1,\dots,M\}$ has been scheduled by the remote estimator to transmit at time $k$
then the state estimates and error covariances at the remote estimator
are now updated as follows:
\begin{equation}
\label{multi_sensor_optimal_estimator_tx_meas}
\begin{split}
\hat{x}_{k+1|k} & = A \hat{x}_{k|k} \\
\hat{x}_{k|k} & = \hat{x}_{k|k-1} + \gamma_{\breve{m},k} K_{\breve{m},k} (y_{\breve{m},k} - C_{\breve{m}} \hat{x}_{k|k-1}) \\
P_{k+1|k} & = A P_{k|k} A^T + Q \\
P_{k|k} & = P_{k|k-1} - \gamma_{\breve{m},k} K_{\breve{m},k} C_{\breve{m}} P_{k|k-1} 
\end{split}
\end{equation} 
where $K_{\breve{m},k} \triangleq P_{k|k-1} C_{\breve{m}}^T (C_{\breve{m}} P_{k|k-1} C_{\breve{m}}^T + R_{\breve{m}})^{-1} $.
We can thus write:
\begin{equation}
\label{remote_estimator_eqns_tx_meas}
\begin{split}
\hat{x}_{k\!+\!1|k} & \!= \!\left\{\!\!\!\!\begin{array}{ccc}  A \hat{x}_{k|k-1} & \!\!\!\!\!, \, \nu_{m,k} \gamma_{m,k} = 0 \\ A \hat{x}_{k|k\!-\!1} \!+\! A K_{m,k} (y_{m,k} \!-\! C_m \hat{x}_{k|k\!-\!1}) & \!\!\!\!\!, \, \nu_{m,k} \gamma_{m,k} = 1 \end{array}  \right. \\
P_{k\!+\!1|k} & \!=\! \left\{\begin{array}{cll} f(P_{k|k-1}) & , \,  \nu_{m,k} \gamma_{m,k} = 0 \\ g_m(P_{k|k-1}) & , \,  \nu_{m,k} \gamma_{m,k} = 1, \end{array} \right. 
\end{split}
\end{equation} 
where $f(X) \triangleq A X A^T + Q$ as before, and 
\begin{equation}
\label{gm_defn}
g_m(X) \triangleq  A X A^T - A X C_m^T (C_m X C_m^T + R_m)^{-1} C_m X A^T + Q, 
\end{equation}
for $ m=1,\dots,M$.
In (\ref{remote_estimator_eqns_tx_meas}) the recursions are given in terms of $\hat{x}_{k+1|k}$ and $P_{k+1|k}$ rather than $\hat{x}_{k|k}$ and $P_{k|k}$, since the resulting expressions are more convenient to work with. 
 
\subsection{Optimization of Transmission Scheduling}
\label{optimization_prob_sec_tx_meas}
We consider transmission policies $\nu_{m,k}(P_{k|k-1}), m=1,\dots,M$ that depend only on $P_{k|k-1}$. 
The finite horizon  optimization problem is:
\begin{equation}
\label{finite_horizon_problem_multi_sensor_tx_meas}
\begin{split}
\min_{\{(\nu_{1,k},\dots,\nu_{M,k})\}} \sum_{k=1}^{K}  \mathbb{E}\bigg[ \mathbb{E} \bigg[ \beta & \textrm{tr} P_{k+1|k} + (1-\beta) \sum_{m=1}^M \nu_{m,k} E_m \\ 
&  \bigg|P_{k|k-1},\nu_{1,k},\dots,\nu_{M,k}   \bigg]\bigg]
 \end{split}
 \end{equation}
where we can compute
\begin{equation*}
\begin{split}
& \mathbb{E}[ \textrm{tr} P_{k+1|k} | P_{k|k-1},\nu_{1,k},\dots,\nu_{M,k}] \\ 
& = \sum_{m=1}^M \nu_{m,k} \prod_{n \neq m} (1-\nu_{n,k})  \lambda_m \textrm{tr} g_m(P_{k|k-1})
\\ &\quad 
+ \bigg(1-\sum_{m=1}^M \nu_{m,k} \prod_{n \neq m} (1-\nu_{n,k}) \lambda_m \bigg) \textrm{tr} f(P_{k|k-1})\\
 \end{split}
\end{equation*}
with $f(.)$ defined in (\ref{f_defn}) and $g_m(.)$ defined in (\ref{gm_defn}).
  Let the functions $J_k(.)$ be defined as: 
 \begin{align}
\label{J_fn_multi_sensor_tx_meas}
&J_{K+1}(P)  =0 \nonumber \\
&J_k(P)  = \min_{(\nu_{1},\dots,\nu_{M}) \in \mathcal{V}} \Bigg\{ \beta \Big[ \sum_{m=1}^M \nu_{m} \prod_{n \neq m} (1-\nu_{n})  \lambda_m \textrm{tr} g_m(P)  \nonumber  \\ & \, + \Big(1-\sum_{m=1}^M \nu_{m} \prod_{n \neq m} (1-\nu_{n}) \lambda_m \Big) \textrm{tr} f(P) \Big] \nonumber \\ &\,+ (1\!-\!\beta) \sum_{m=1}^M \nu_{m} E_m \! +\!\sum_{m=1}^M \nu_{m} \prod_{n \neq m} (1\!-\!\nu_{n})  \lambda_m  J_{k+1}(g_m(P)) \nonumber \\ &\, + \Big(1-\sum_{m=1}^M \nu_{m} \prod_{n \neq m} (1-\nu_{n}) \lambda_m \Big) J_{k+1}(f(P)) \Bigg\} , \nonumber\\ & \quad\quad \quad k=K,K-1,\dots,1.
\end{align}
Problem (\ref{finite_horizon_problem_multi_sensor_tx_meas}) can be solved using the dynamic programming algorithm by computing $J_k(P_{k|k-1})$ for $k=K,K-1,\dots,1$, with the optimal $(\nu_{1,k}^*,\dots,\nu_{M,k}^*) = \textrm{argmin} J_k(P_{k|k-1})$. 

The infinite horizon problem can be formulated in a similar manner but will be omitted for brevity. 

\subsection{Structural Properties of Optimal Transmission Scheduling}
Much of this subsection is devoted to proving Theorem \ref{multi_sensor_structural_properties_thm_tx_meas}, which is the counterpart of Theorem \ref{multi_sensor_structural_properties_theorem}(i) for scalar systems, and in particular establishes the optimality of threshold policies in the single sensor, scalar case. 
However, for vector systems we will give a counterexample (Example \ref{counterexample1}) to show that, in general, the optimal policy is not a simple threshold policy. 
The counterpart of Theorem \ref{multi_sensor_structural_properties_theorem}(ii) also turns out to be false when measurements are transmitted, and we will give another counterexample (Example \ref{counterexample2}) to illustrate this. 

The following results will assume scalar systems, thus $A, C_m, Q, R_m$, and $P$ are all scalar. 
\begin{lemma}
\label{composition_lemma}
Let $\mathcal{F}(.)$ be a function formed by composition (in any order) of any of the functions $f(.), g_1(.),\dots,g_M(.), \textnormal{id}(.)$
where 
$$f(P) \triangleq A^2 P + Q, \quad g_m(P) \triangleq A^2 P + Q - \frac{A^2 C_m^2 P^2}{C_m^2 P + R_m},$$ and $\textnormal{id}(.)$ is the identity function. Then:
\\(i) $\mathcal{F}(.)$ is either of the affine form 
\begin{equation}
\label{F_form1}
\mathcal{F}(P) = a P + b, \textrm{ for some } a, b \geq 0
\end{equation}
or the linear fractional form
\begin{equation}
\label{F_form2}
\mathcal{F}(P) = \frac{a P + b}{c P + d}, \textrm{ for some } a, b, c, d \geq 0 \textrm{ with } ad-bc \geq 0.
\end{equation}
\\(ii)  $ \mathcal{F}(f(P)) - \mathcal{F}(g_m(P))$ is an increasing function of $P$, for $m=1,\dots,M$. 
\end{lemma}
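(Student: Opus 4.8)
The plan is to exploit the fact that each generating map is a \emph{linear fractional transformation} (LFT) of the scalar $P$, and that LFTs compose by $2\times 2$ matrix multiplication. First I would write each generator in fractional form: $f(P)=\tfrac{A^2P+Q}{0\cdot P+1}$, $\mathrm{id}(P)=\tfrac{P}{1}$, and, after clearing the denominator, $g_m(P)=\tfrac{(A^2R_m+QC_m^2)P+QR_m}{C_m^2P+R_m}$. To each I attach its coefficient matrix,
\[ M_f = \begin{pmatrix} A^2 & Q \\ 0 & 1 \end{pmatrix}, \qquad M_{g_m} = \begin{pmatrix} A^2R_m+QC_m^2 & QR_m \\ C_m^2 & R_m \end{pmatrix}, \]
and $M_{\mathrm{id}}=I$. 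Each has nonnegative entries, a strictly positive bottom-right entry, and nonnegative determinant ($\det M_f=A^2$, $\det M_{g_m}=A^2R_m^2$, $\det M_{\mathrm{id}}=1$).

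For part (i), the key observation is that composition of LFTs corresponds to multiplication of the associated matrices, so any composition $\mathcal{F}$ corresponds to a product $M=\bigl(\begin{smallmatrix} a & b \\ c & d \end{smallmatrix}\bigr)$ of copies of $M_f,M_{g_m},M_{\mathrm{id}}$. Nonnegative matrices are closed under multiplication, so $a,b,c,d\ge 0$; the determinant is multiplicative, so $ad-bc=\det M\ge 0$; and a one-line induction (the bottom-right entry of a product of such matrices is a sum of products of positive bottom-right entries) shows $d>0$. If $c=0$ this gives the affine form (\ref{F_form1}) after dividing through by $d$, and if $c>0$ it is precisely the linear fractional form (\ref{F_form2}).

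For part (ii), I would first record the identity, valid for any LFT and any $u\ge v$,
\[ \mathcal{F}(u)-\mathcal{F}(v)=\frac{(ad-bc)(u-v)}{(cu+d)(cv+d)}, \]
which already exhibits $\mathcal{F}$ as increasing on $P\ge 0$ and isolates the crucial factor $ad-bc\ge 0$. Substituting $u=f(P)$, $v=g_m(P)$ reduces the claim to monotonicity of one explicit rational function, but rather than expand it I would differentiate, using $\mathcal{F}'(x)=\tfrac{ad-bc}{(cx+d)^2}$, $f'(P)=A^2$, and $g_m'(P)=\tfrac{A^2R_m^2}{(C_m^2P+R_m)^2}$. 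Factoring out the common nonnegative constant $(ad-bc)A^2$, the derivative has the sign of
\[ \frac{1}{(cf(P)+d)^2}-\frac{R_m^2}{\bigl(cg_m(P)+d\bigr)^2(C_m^2P+R_m)^2}, \]
and since all quantities are positive this is nonnegative iff $(cg_m(P)+d)(C_m^2P+R_m)\ge R_m(cf(P)+d)$. Clearing denominators gives $(cg_m(P)+d)(C_m^2P+R_m)=[cA^2R_m+(cQ+d)C_m^2]P+R_m(cQ+d)$ and $R_m(cf(P)+d)=cA^2R_mP+R_m(cQ+d)$, so almost everything cancels and the inequality collapses to $(cQ+d)C_m^2P\ge 0$, which is trivially true.

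The main obstacle is part (ii): a difference of two increasing functions need not be monotone, so the nonnegativity of the derivative is not automatic, and a brute-force expansion is unwieldy. The two decisive simplifications are the single-variable identity, which pulls out $ad-bc\ge 0$, and the cancellation above that reduces everything to the manifestly nonnegative term $(cQ+d)C_m^2P$. I would also flag the minor technical point that the denominators $cf(P)+d$ and $cg_m(P)+d$ are strictly positive on the relevant domain (guaranteed by $d>0$, and by $f(P),g_m(P)\ge 0$), so the differentiation and division are legitimate.
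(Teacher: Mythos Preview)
Your proof is correct and arrives at the same conclusions as the paper, but via a tidier route that is worth contrasting with the paper's argument.

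For part~(i), the paper proceeds by a direct case-by-case induction: it checks that $\mathrm{id}$, $f$, and each $g_m$ have one of the two admissible forms, and then verifies by hand that composing an admissible $\mathcal{F}$ on the left with $f$ or with any $g_l$ again yields an admissible form, tracking the sign of $ad-bc$ in each of the four sub-cases. Your matrix formulation encodes exactly this induction but collapses all the case analysis: nonnegativity of the coefficients follows from closure of nonnegative matrices under multiplication, $ad-bc\ge 0$ follows from multiplicativity of the determinant, and the affine case is simply the degenerate LFT with $c=0$. The content is the same, but your packaging makes the closure properties transparent and avoids four separate algebraic checks.

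For part~(ii), both arguments differentiate $\mathcal{F}(f(P))-\mathcal{F}(g_m(P))$ and show the derivative is nonnegative. The paper treats the affine and fractional cases separately and, in the fractional case, expands the derivative into a single large rational expression whose numerator is then seen to be a product of nonnegative factors. You instead use the chain rule together with $\mathcal{F}'(x)=(ad-bc)/(cx+d)^2$ and $g_m'(P)=A^2R_m^2/(C_m^2P+R_m)^2$, factor out $(ad-bc)A^2\ge 0$, and reduce the sign question to the linear inequality $(cg_m(P)+d)(C_m^2P+R_m)\ge R_m(cf(P)+d)$, which collapses to $(cQ+d)C_m^2P\ge 0$. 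This is the same computation in spirit but with much less algebra, and it handles the affine case $c=0$ automatically rather than as a separate branch. The paper's version has the virtue of being completely self-contained (no appeal to the LFT derivative formula), while yours isolates exactly where each hypothesis---nonnegativity of coefficients, $ad-bc\ge 0$, positivity of $d$---is used.
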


\begin{proof}
(i) We prove this by induction. Firstly, $\textrm{id}(P) = P$ has the form (\ref{F_form1}), $f(P) = A^2 P + Q$ has the form (\ref{F_form1}), and 
$$ g_m(P)\! =\! A^2 P + Q - \frac{A^2 C_m^2 P^2}{C_m^2 P \!+\! R_m} \! =\! \frac{(A^2 R_m \!+\! C_m^2 Q) P \!+\! R_m Q}{C_m^2 P \!+\! R_m}$$
has the form (\ref{F_form2}) since $(A^2 R_m + C_m^2 Q) R_m - R_m Q C_m^2 = A^2 R_m ^2 \geq 0$. 

Now assume that $\mathcal{F}(.)$, which is a composition of the functions  $f(.), g_1(.),\dots,g_M(.), \textrm{id}(.)$, has the form of either (\ref{F_form1}) or (\ref{F_form2}). Then we will show that $f(\mathcal{F}(P))$ and $g_l(\mathcal{F}(P)), l=1,\dots,M$ also has the form of either (\ref{F_form1}) or (\ref{F_form2}). For notational convenience, let us write $$f(P) = \bar{a} P + \bar{b}$$ for some $\bar{a}, \bar{b} \geq 0$, and $$g_l(P) = \frac{\bar{a} P + \bar{b}}{\bar{c} P + \bar{d}}$$ for  some $ \bar{a}, \bar{b}, \bar{c}, \bar{d} \geq 0$ with $\bar{a} \bar{d} - \bar{b} \bar{c} \geq 0$, which can be achieved as shown at the beginning of the proof. 

If $\mathcal{F}(.)$ has the form (\ref{F_form1}), then 
$$ f(\mathcal{F}(P)) = \bar{a} (a P + b) + \bar{b}$$
is of the form (\ref{F_form1}), and 
$$ g_l(\mathcal{F}(P)) = \frac{\bar{a} (a P + b) + \bar{b}}{\bar{c}(a P + b) + \bar{d}} = \frac{\bar{a} a P + \bar{a} b + \bar{b}}{\bar{c} a P + \bar{c} b + \bar{d}}$$ 
has the form (\ref{F_form2}), since $\bar{a} a (\bar{c} b + \bar{d}) - (\bar{a} b + \bar{b}) \bar{c} a = a (\bar{a} \bar{d} - \bar{b} \bar{c}) \geq 0$. 

If $\mathcal{F}(.)$ has the form (\ref{F_form2}), then 
$$  f(\mathcal{F}(P)) = \frac{\bar{a}(a P + b)}{c P + d} + \bar{b} = \frac{(\bar{a} a + \bar{b} c)P + \bar{a} b + \bar{b}{d}}{c P + d}$$ 
has the form (\ref{F_form2}), since $(\bar{a} a + \bar{b} c )d - (\bar{a} b + \bar{b} d) c = \bar{a}(a d - b c) \geq 0$. Finally, 
$$ g_l(\mathcal{F}(P)) = \frac{\bar{a} \left(\frac{a P + b}{c P + d} \right)+ \bar{b}}{\bar{c} \left(\frac{a P + b}{c P + d} \right)+ \bar{d}} = \frac{(\bar{a} a + \bar{b} c) P + \bar{a} b + \bar{b}{d}}{(\bar{c} a + \bar{d} c) P + \bar{c} b + \bar{d} d}$$ has the form (\ref{F_form2}), since $(\bar{a} a + \bar{b} c)(\bar{c} b + \bar{d} d) -  (\bar{a} b + \bar{b} d) (\bar{c} a + \bar{d} c) = (ad-bc) (\bar{a} \bar{d} - \bar{b} \bar{c}) \geq 0$. 
\\(ii) By part (i), we know that $\mathcal{F}(.)$ is either of the form 
(\ref{F_form1}) or (\ref{F_form2}). 
If  $\mathcal{F}(.)$ has the form (\ref{F_form1}), then 
$$\mathcal{F}(f(P)) - \mathcal{F}(g_m(P)) = a (f(P) - g_m(P))$$
will be an increasing function of $P$, since 
$$f(P) - g_m(P) = \frac{A^2 C_m^2 P^2}{C_m^2 P + R_m}$$
can be easily checked to be an increasing function of $P$. 

If $\mathcal{F}(.)$ has the form (\ref{F_form2}), then it can be verified after some algebra that 
\begin{equation*}
\begin{split}
&\frac{d}{dP} \left(\mathcal{F}(f(P))\! -\! \mathcal{F}(g_m(P)) \right) \! =\! \frac{d}{dP} \left( \frac{a f(P) \!+\! b}{c f(P)\!+\!d} \!-\! \frac{a g_m(P)\!+\!b}{c g_m(P) \!+\! d}\right)
\\&\!\!=\! \frac{(ad\!-\!bc)A^2 C_m^2 \! P (d\!+\!cQ) \!\left(C_m^2\! P (d\!+\!cQ) \!+\! 2 (d \!+\! c(A^2 \!P\!+\!Q))R_m  \right)}{(d+c(A^2 P+Q))^2 \left(C_m^2 P (d+cQ) +  (d + c(A^2 P + Q))R_m \right)^2} \\ 
& \!\! \geq 0
\end{split}
\end{equation*}
since $ad-bc \geq 0$. Hence $\mathcal{F}(f(P)) - \mathcal{F}(g_m(P))$ is an increasing function of $P$. 
\end{proof}

\begin{theorem}
\label{multi_sensor_structural_properties_thm_tx_meas}
The functions
\begin{equation*}
\begin{split}
&\phi_{m,k}(P) \triangleq \beta f(P) \!+\!  J_{k+1} (f(P))  \!-\! \beta[ \lambda_m g_m(P) \!+\! (1\!-\!\lambda_m) f(P)] \\ & \quad -\! (1\!-\!\beta) E_m  \! - \!\lambda_m J_{k+1}(g_m(P))\! -\! (1\!-\!\lambda_m) J_{k+1} ( f(P))
\end{split}
\end{equation*}
for $m=1,\dots,M, k=1,\dots,K$, are increasing functions of $P$. 
\end{theorem}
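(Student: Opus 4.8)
The plan is to prove the claim by backward induction on $k$, using the representation of $J_k$ as a pointwise minimum over the action set $\mathcal{V}$ of (\ref{V_defn}) together with Lemma \ref{composition_lemma}. First I would simplify $\phi_{m,k}$: expanding the definition and cancelling the common contributions of $f(P)$ and $J_{k+1}(f(P))$ gives
\[
\phi_{m,k}(P) = \lambda_m\big[\Theta_{k+1}(f(P)) - \Theta_{k+1}(g_m(P))\big] - (1-\beta)E_m,
\]
where $\Theta_{k+1}(X)\triangleq \beta X + J_{k+1}(X)$. Since $\lambda_m>0$ and $(1-\beta)E_m$ is constant, it suffices to show that $P\mapsto \Theta_{k+1}(f(P)) - \Theta_{k+1}(g_m(P))$ is increasing. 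I would emphasise at the outset that the two pieces $\beta\big(f(P)-g_m(P)\big)$ and $J_{k+1}(f(P))-J_{k+1}(g_m(P))$ cannot be handled separately: the value-function difference alone need not be increasing, so the immediate-cost term must be retained alongside it.

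For the induction I would carry the stronger hypothesis that, for every function $\mathcal{F}$ of the composition type in Lemma \ref{composition_lemma} and every $m$, the map $P\mapsto \Theta_k(\mathcal{F}(f(P)))-\Theta_k(\mathcal{F}(g_m(P)))$ is increasing (and $J_k$ is increasing). Taking $\mathcal{F}=\mathrm{id}$ recovers exactly what Theorem \ref{multi_sensor_structural_properties_thm_tx_meas} asserts. The base case $k=K+1$ is immediate, since $J_{K+1}\equiv 0$ makes $\Theta_{K+1}=\beta\,\mathrm{id}$, and $\beta[\mathcal{F}(f(P))-\mathcal{F}(g_m(P))]$ is increasing by Lemma \ref{composition_lemma}(ii).

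For the inductive step I would use that each branch in (\ref{J_fn_multi_sensor_tx_meas}) can be written, via $\Theta_{k+1}$, as a convex combination of terms $\Theta_{k+1}(h(\cdot))$ with $h\in\{f,g_1,\dots,g_M\}$ plus a constant. Evaluating such a branch at $\mathcal{F}(f(P))$ and at $\mathcal{F}(g_m(P))$ and subtracting produces the same convex combination of differences $\Theta_{k+1}((h\circ\mathcal{F})(f(P)))-\Theta_{k+1}((h\circ\mathcal{F})(g_m(P)))$; since $h\circ\mathcal{F}$ is again a composition of the admissible type, each such difference is increasing by the induction hypothesis, and a nonnegative combination of increasing functions is increasing. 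Thus every individual branch satisfies the desired increasing-difference property, and the additive $\beta[\mathcal{F}(f(P))-\mathcal{F}(g_m(P))]$ term is increasing by Lemma \ref{composition_lemma}(ii).

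The hard part is that $J_k$ is the pointwise \emph{minimum} of these branches, and in general the minimum of functions each having the increasing-difference property need not inherit it. To close this gap I would exploit the convexity structure from Lemma \ref{composition_lemma}(i): every admissible composition is either affine or a concave increasing linear-fractional map, so each branch, and hence $J_k$, is concave and increasing. Combined with $f(P)\geq g_m(P)$ and the fact that $f(P)-g_m(P)=\tfrac{A^2C_m^2P^2}{C_m^2P+R_m}$ is itself increasing (Lemma \ref{composition_lemma}(ii) with $\mathcal{F}=\mathrm{id}$), I would analyse the difference piecewise according to which branch attains the minimum at the larger argument $\mathcal{F}(f(P))$ and at the smaller argument $\mathcal{F}(g_m(P))$. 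On the regions where the same branch is active at both arguments the property is inherited directly; the delicate regions are the ``windows'' where the active branch differs, where I would use continuity of $J_k$ at the switching points together with concavity to control the derivative. I expect this window analysis---and in the multi-sensor case the fact that the optimal transmitting sensor need not change monotonically (cf.\ the counterexample announced for the analogue of Theorem \ref{multi_sensor_structural_properties_theorem}(ii))---to be the main obstacle, whereas the single-sensor case reduces to a single monotone transmit/no-transmit threshold and is comparatively clean.
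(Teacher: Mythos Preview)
Your overall plan—simplify $\phi_{m,k}$, strengthen the inductive claim to all compositions $\mathcal{F}$ of Lemma~\ref{composition_lemma}, and run a backward induction—is exactly the architecture of the paper's proof. Two points of departure deserve comment.

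First, your assertion that the immediate-cost and value-function pieces ``cannot be handled separately'' is unfounded and in fact contrary to what the paper does. The paper proves the stronger statement that $P\mapsto J_k(\mathcal{F}(f(P)))-J_k(\mathcal{F}(g_m(P)))$ is increasing \emph{on its own}; the $\beta\lambda_m[f(P)-g_m(P)]$ piece in (\ref{tx_meas_thm_equivalent_fn}) is disposed of separately and immediately via Lemma~\ref{composition_lemma}(ii) with $\mathcal{F}=\mathrm{id}$. Bundling everything into $\Theta_{k+1}=\beta\,\mathrm{id}+J_{k+1}$ is harmless but unnecessary, and you end up separating the $\beta$-piece again when you analyse individual branches.

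Second—and this is the genuine gap—you correctly identify that a pointwise minimum over actions need not preserve the increasing-difference property, but your proposed remedy (concavity plus a ``window analysis'' at branch-switching points) is only sketched and, as you yourself flag, is where the obstacle lies. The paper does not go down that road. In the inductive step it instead writes the second difference
\[
J_k(\mathcal{F}(f(P')))-J_k(\mathcal{F}(g_m(P')))-J_k(\mathcal{F}(f(P)))+J_k(\mathcal{F}(g_m(P)))
\]
and lower-bounds it by the minimum, over a \emph{single} action $(\nu_1,\dots,\nu_M)\in\mathcal{V}$, of the same four-term combination with each $J_k(\cdot)$ replaced by the corresponding branch function; this is inequality (\ref{Jk_induction_step_tx_meas}). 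It then checks the value of that single-action expression case by case: for $\mathbf{e}_0$ one obtains a $\beta$-term plus a $J_{k+1}$-difference evaluated through $f\circ\mathcal{F}$, both nonnegative by Lemma~\ref{composition_lemma}(ii) and the induction hypothesis (since $f\circ\mathcal{F}$ is again an admissible composition); for $\mathbf{e}_l$ the same reasoning applies to the convex combination of terms through $g_l\circ\mathcal{F}$ and $f\circ\mathcal{F}$. Remark~\ref{stronger_statement_remark} in the paper explains precisely why the strengthening to arbitrary $\mathcal{F}$ is needed here: without it the $J_{k+1}(g_l(\cdot))$ terms arising from action $\mathbf{e}_l$ would not be covered by the hypothesis. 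This single-action lower bound is the device your proposal is missing; with it, the concavity/window machinery becomes unnecessary.
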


\begin{proof}
The functions are equivalent to 
\begin{equation}
\label{tx_meas_thm_equivalent_fn}
\begin{split}
 \phi_{m,k}(P)& =\beta \lambda_m [f(P) - g_m(P)] - (1-\beta) E_m \\ & \quad \quad+ \lambda_m [J_{k+1} (f(P)) - J_{k+1} (g_m(P))].
 \end{split}
 \end{equation}
As stated in the proof of Lemma \ref{composition_lemma}(ii), $f(P) - g_m(P)$ can be easily verified to be an increasing function of $P$. Thus Theorem \ref{multi_sensor_structural_properties_thm_tx_meas} will be proved if we can show that $J_k(f(P)) - J_k(g_m(P))$ is an increasing function of $P$ for all $k$ and $m$. 

In fact, we will prove the stronger statement (see Remark \ref{stronger_statement_remark}) that $J_k(\mathcal{F}(f(P))) - J_k(\mathcal{F}(g_m(P)))$ is an increasing function of $P$ for all $k$ and $m$, where $\mathcal{F}(.)$ is a function formed by composition of any of the functions $f(.), g_1(.),\dots,g_M(.), \textrm{id}(.)$.
The proof is by induction.  The case of $J_{K+1}(\mathcal{F}(f(.))) - J_{K+1}(\mathcal{F}(g_m(.))) = 0$ is clear. Now assume that for $P' \geq P$, 
\begin{equation*}
\begin{split}
&J_{k'}(\mathcal{F}(f(P'))) - J_{k'}(\mathcal{F}(g_m(P'))) \\ &\quad \quad -  J_{k'}(\mathcal{F}(f(P))) + J_{k'}(\mathcal{F}(g_m(P))) \geq 0
\end{split}
\end{equation*}
holds for $k'=K+1,K,\dots,k+1$. We have
\begin{align}
\label{Jk_induction_step_tx_meas}
& J_k(\mathcal{F}(f(P'))) - J_k(\mathcal{F}(g_m(P'))) \nonumber \\ & \quad\quad  -  J_k(\mathcal{F}(f(P))) + J_k(\mathcal{F}(g_m(P))) \nonumber \\
 & \geq \min_{(\nu_{1},\dots,\nu_{M}) } \Bigg\{\beta \Big[ \sum_{l=1}^M \nu_{l} \prod_{n \neq l} (1-\nu_{n})  \lambda_l  g_l(\mathcal{F}(f(P'))) \nonumber \\ & \quad\quad   + \Big(1-\sum_{l=1}^M \nu_{l} \prod_{n \neq l} (1-\nu_{n}) \lambda_l \Big)  f(\mathcal{F}(f(P'))) \Big] \nonumber \\
 & \quad \quad  + \sum_{l=1}^M \nu_{l} \prod_{n \neq l} (1-\nu_{n})  \lambda_l J_{k+1}( g_l(\mathcal{F}(f(P')))) \nonumber \\ & \quad\quad + \Big(1-\sum_{l=1}^M \nu_{l} \prod_{n \neq l} (1-\nu_{n}) \lambda_l \Big) J_{k+1}( f(\mathcal{F}(f(P')))) \nonumber \\
& \quad  - \beta \Big[ \sum_{l=1}^M \nu_{l} \prod_{n \neq l} (1-\nu_{n})  \lambda_l  g_l(\mathcal{F}(g_m(P'))) \nonumber \\ & \quad\quad  + \Big(1-\sum_{l=1}^M \nu_{l} \prod_{n \neq l} (1-\nu_{n}) \lambda_l \Big)  f(\mathcal{F}(g_m(P'))) \Big] \nonumber \\
& \quad \quad  - \sum_{l=1}^M \nu_{l} \prod_{n \neq l} (1-\nu_{n})  \lambda_l J_{k+1}( g_l(\mathcal{F}(g_m(P')))) \nonumber \\ & \quad\quad  - \Big(1-\sum_{l=1}^M \nu_{l} \prod_{n \neq l} (1-\nu_{n}) \lambda_l \Big) J_{k+1}( f(\mathcal{F}(g_m(P')))) \nonumber \\
& \quad  - \beta \Big[ \sum_{l=1}^M \nu_{l} \prod_{n \neq l} (1-\nu_{n})  \lambda_l  g_l(\mathcal{F}(f(P)))  \nonumber \\ & \quad\quad  + \Big(1-\sum_{l=1}^M \nu_{l} \prod_{n \neq l} (1-\nu_{n}) \lambda_l \Big)  f(\mathcal{F}(f(P))) \Big] \nonumber \\
& \quad  \quad - \sum_{l=1}^M \nu_{l} \prod_{n \neq l} (1-\nu_{n})  \lambda_l J_{k+1}( g_l(\mathcal{F}(f(P)))) \nonumber \\ & \quad\quad - \Big(1-\sum_{l=1}^M \nu_{l} \prod_{n \neq l} (1-\nu_{n}) \lambda_l \Big) J_{k+1}( f(\mathcal{F}(f(P)))) \nonumber \\
& \quad + \beta \Big[ \sum_{l=1}^M \nu_{l} \prod_{n \neq l} (1-\nu_{n})  \lambda_l  g_l(\mathcal{F}(g_m(P))) \nonumber \\ & \quad\quad  + \Big(1-\sum_{l=1}^M \nu_{l} \prod_{n \neq l} (1-\nu_{n}) \lambda_l \Big)  f(\mathcal{F}(g_m(P)))  \Big] \nonumber \\
& \quad  \quad + \sum_{l=1}^M \nu_{l} \prod_{n \neq l} (1-\nu_{n})  \lambda_l J_{k+1}( g_l(\mathcal{F}(g_m(P)))) \nonumber \\ & \quad\quad  + \Big(1-\sum_{l=1}^M \nu_{l} \prod_{n \neq l} (1-\nu_{n}) \lambda_l \Big) J_{k+1}( f(\mathcal{F}(g_m(P))))  \Bigg\}. 
\end{align}
In the minimization of (\ref{Jk_induction_step_tx_meas}) above, if the optimal $(\nu_{1}^*,\dots,\nu_{M}^*) = \mathbf{e}_0$ (recall the notation of (\ref{V_defn})), then  
\begin{align*}
& J_k(\mathcal{F}(f(P'))) - J_k(\mathcal{F}(g_m(P'))) \\ & \quad \quad -  J_k(\mathcal{F}(f(P))) + J_k(\mathcal{F}(g_m(P))) \\
 & \geq \beta \big[ f( \mathcal{F}(f(P'))) - f(\mathcal{F}(g_m(P'))) \\ & \quad \quad -  f(\mathcal{F}(f(P))) + f(\mathcal{F}(g_m(P))) \big] \\
& \quad + J_{k+1} (f( \mathcal{F}(f(P')))) - J_{k+1}(f(\mathcal{F}(g_m(P')))) \\ & \quad \quad  -  J_{k+1}(f(\mathcal{F}(f(P)))) + J_{k+1}(f(\mathcal{F}(g_m(P)))) 
 \geq 0
\end{align*}
where the last inequality holds by  Lemma \ref{composition_lemma} (ii), the induction hypothesis, and the fact that $f \circ \mathcal{F}(.)$ is a composition of functions of the form $f(.), g_1(.),\dots,g_M(.), \textrm{id}(.)$. If instead the optimal $(\nu_{1}^*,\dots,\nu_{M}^*) = \mathbf{e}_l, l=1,\dots,M$, then by a similar argument
\begin{align}
\label{induction_step_el}
& J_k(\mathcal{F}(f(P'))) - J_k(\mathcal{F}(g_m(P'))) \nonumber \\ & \quad \quad -  J_k(\mathcal{F}(f(P))) + J_k(\mathcal{F}(g_m(P))) \nonumber \\
 & \geq \beta \lambda_l \big[ g_l( \mathcal{F}(f(P'))) - g_l(\mathcal{F}(g_m(P'))) \nonumber \\ & \quad \quad -  g_l(\mathcal{F}(f(P))) + g_l(\mathcal{F}(g_m(P))) \big]  \nonumber \\
 & \quad + \beta (1-\lambda_l) \big[ f( \mathcal{F}(f(P'))) - f(\mathcal{F}(g_m(P'))) \nonumber \\ & \quad \quad -  f(\mathcal{F}(f(P))) + f(\mathcal{F}(g_m(P))) \big] \nonumber \\
& \quad + \lambda_l \big[ J_{k+1} (g_l( \mathcal{F}(f(P')))) - J_{k+1}(g_l(\mathcal{F}(g_m(P')))) \nonumber \\ & \quad \quad  -  J_{k+1}(g_l(\mathcal{F}(f(P)))) + J_{k+1}(g_l(\mathcal{F}(g_m(P)))) \big] \nonumber \\
& \quad + (1-\lambda_l) \big[ J_{k+1} (f( \mathcal{F}(f(P')))) - J_{k+1}(f(\mathcal{F}(g_m(P')))) \nonumber \\ & \quad \quad  -  J_{k+1}(f(\mathcal{F}(f(P)))) + J_{k+1}(f(\mathcal{F}(g_m(P)))) \big]
 \geq 0
\end{align}
\begin{remark}
\label{stronger_statement_remark}
The reason for proving in Theorem \ref{multi_sensor_structural_properties_thm_tx_meas} the stronger statement that $J_k(\mathcal{F}(f(P))) - J_k(\mathcal{F}(g_m(P)))$ is an increasing function of $P$, is that if we carry out the arguments in (\ref{Jk_induction_step_tx_meas}) using just $J_{k}(f(P')) - J_{k}(g_m(P')) -  J_{k}(f(P)) + J_{k}(g_m(P)) $, then in (\ref{induction_step_el}) we end up needing to show statements such as $J_{k+1} (g_l(f(P'))) - J_{k+1}(g_l(g_m(P')))  -  J_{k+1}(g_l(f(P))) + J_{k+1}(g_l(g_m(P))) \geq 0$ and $J_{k+1} (f( f(P'))) - J_{k+1}(f(g_m(P')))  -  J_{k+1}(f(f(P))) + J_{k+1}(f(g_m(P))) \geq 0$, neither of which are covered by the weaker induction hypothesis that $J_{k'}(f(P')) - J_{k'}(g_m(P')) -  J_{k'}(f(P)) + J_{k'}(g_m(P)) \geq 0$ holds for $k'=K+1,K,\dots,k+1$.
\end{remark}
\end{proof}

Theorem \ref{multi_sensor_structural_properties_thm_tx_meas} is the counterpart of Theorem \ref{multi_sensor_structural_properties_theorem}(i), for estimation schemes where measurements are transmitted.  Referring back to (\ref{J_fn_multi_sensor_tx_meas}), $ \beta f(P) \!+\!  J_{k+1} (f(P)) $ is the cost function when no sensors transmit, while $ \beta[ \lambda_m g_m(P) \!+\! (1\!-\!\lambda_m) f(P)]  +\! (1\!-\!\beta) E_m  \! + \!\lambda_m J_{k+1}(g_m(P))\! +\! (1\!-\!\lambda_m) J_{k+1} ( f(P))$ is the cost function when sensor $m$ transmits. Theorem \ref{multi_sensor_structural_properties_thm_tx_meas} thereby establishes that the cost difference  between not transmitting and sensor $m$ transmitting increases with $P$, and in particular  implies the optimality of threshold policies in the single sensor, scalar case.  This provides a theoretical justification for the variance based triggering strategy proposed in \cite{TrimpeDAndrea_journal}.

For vector systems, it is well known from Kalman filtering that when measurements are transmitted, the error covariance matrices are only partially ordered. One might hope that Theorem \ref{multi_sensor_structural_properties_thm_tx_meas} will still hold for vector systems, but  
in  general this is not the case, as the following counterexample shows. 

\begin{example}
\label{counterexample1}
 Consider the case $k=K$ and $M=1$ sensor, so that we are interested in  the function (\ref{tx_meas_thm_equivalent_fn})  with $J_{K+1}(.)=0$: 
\begin{equation*}
\begin{split}
&\phi_{1,K}(P)  =\beta \lambda_1 \textnormal{tr} [f(P) - g_1(P)] - (1-\beta) E_1 \\& = \beta \lambda_1 \textnormal{tr}[ A P C_1^T (C_1 P C_1^T + R_1)^{-1} C_1 P A^T] - (1-\beta) E_1.
\end{split}
\end{equation*}
Suppose we have a system with parameters 
$$A = \left[\begin{array}{cc} 1.1 & 0.2 \\ 0.2 & 0.8 \end{array} \right], \quad C_1 = \left[\begin{array}{cc} 1 & -0.9 \end{array} \right], $$
$Q=I$, $R_1=1$. 
Let 
$$P = \bar{P}_1 = \left[\begin{array}{cc} 7.8328 & 7.3915 \\ 7.3915 & 7.7127 \end{array} \right], \quad P' = \left[\begin{array}{cc} 7.85 & 7.40 \\ 7.40 & 7.80 \end{array} \right].$$
Then one can easily verify that $P' > P$, but that 
\begin{equation*}
\begin{split}
& \textnormal{tr}[ A P' C_1^T (C_1 P' C_1^T + R_1)^{-1} C_1 P' A^T] = 1.1970  \\ & \quad < \textnormal{tr}[ A P C_1^T (C_1 P C_1^T + R_1)^{-1} C_1 P A^T] = 1.2862,
\end{split}
\end{equation*}
so the function $\phi_{1,K}(P)$
is not an increasing function of $P$. 

For vector systems with scalar measurements,  a threshold policy was considered in \cite{TrimpeDAndrea_journal}, where a sensor $m$ would transmit if $C_m P C_m^T$ exceeded a threshold. Since $P' > P$ implies $C_m P' C_m^T > C_m P C_m^T$, the  above example also shows that such a threshold policy is in general not optimal when measurements are transmitted (under our problem formulation of minimizing a convex combination of the expected error covariance and expected energy usage). 
\end{example}

Recall the property implied by Theorem \ref{multi_sensor_structural_properties_theorem}(ii), namely that if for some $P$, sensor $m$ is scheduled to transmit, while for some larger $P'$, sensor $n$ (with $n \neq m$) is scheduled to transmit, then sensor $m$ will not transmit $\forall P'' > P'$. As illustrated below, this property  does not hold when measurements are transmitted, even for scalar systems. 

\begin{example}
\label{counterexample2}
Consider a system with 2 sensors, with parameters $A=1.1$, $C_1=1, C_2=1$, $R_1=1, R_2=2$, $Q=0.1$, $\lambda_1=0.6, \lambda_2=0.7$, $E_1=0.17, E_2=0.1$, $\beta=0.5$. Again look at the case $k=K$. Then comparing the functions $\beta f(P)$, $\beta(\lambda_1 g_1(P) + (1-\lambda_1) f(P)) + (1-\beta) E_1$, and $\beta(\lambda_2 g_2(P) + (1-\lambda_2) f(P)) + (1-\beta) E_2$ (corresponding respectively to the cases when no sensor transmits, sensor 1 transmits and sensor 2 transmits), we can  verify that the optimal strategy is for no sensor to transmit when $P < 0.5485$, sensor 2 to transmit when $0.5485 \leq P < 0.8642$, sensor 1 to transmit when $0.8642 \leq P < 3.9005$, but sensor 2 will again transmit when $P \geq 3.9005$. 
\end{example}
 
\subsection{Transmitting State Estimates or  Measurements}
There are advantages and disadvantages to both scenarios of transmitting state estimates or measurements, which we will summarize in this subsection. Sending measurements is more practical when the sensor has limited computation capabilities. Furthermore, detectability at individual sensors is not required. However, as mentioned in Section \ref{model_sec}, transmitting state estimates outperforms sending of measurements. From Table I, we can see that the optimal estimator when sending estimates outperforms sending of measurements in all cases, while the suboptimal estimator also outperforms the sending of measurements in many cases. 

The optimization problems in the infinite horizon situation are also less computationally intensive in the case where state estimates are transmitted and the remote estimator (\ref{remote_estimator_eqns_multi_sensor}) is used. As mentioned in Remark \ref{computational_remark}, the set $\mathcal{S}$  has a simple form in steady state, which in practice can be easily truncated to a finite set $\mathcal{S}^N$. On the other hand, when measurements are transmitted, the set of all possible values of the error covariance  is difficult to determine in the infinite horizon case. Hence it is difficult to discretize the set of all positive semi-definite matrices (which the error covariance matrices will be a subset of) efficiently, and the  computational complexity of the associated optimization problems can be very high. 
 
\section{Numerical studies}
\label{numerical_sec}

\subsection{Single Sensor}
\label{numerical_single_sensor_sec}
We consider an example with parameters 
$$A = \left[\begin{array}{cc} 1.1 & 0.2 \\ 0.2 & 0.8  \end{array}\right],\, C = \left[\begin{array}{cc} 1 & 1 \end{array} \right],\, Q=I,\, R=1,$$
in which case
 $$\bar{P} =  \left[\begin{array}{rr} 1.3762 & -0.9014 \\ -0.9014 & 1.1867  \end{array}\right].$$
The packet reception probability is chosen to be $\lambda=0.8$, and the transmission energy cost $E=1$. 

We first consider the finite horizon problem, with $K=5$ and $\beta=0.05$, and with the local Kalman filter operating in steady state. Figs. \ref{K5k1} and \ref{K5k2} plots respectively the optimal $\nu_1^*$ and $\nu_2^*$ (i.e. $k=1$ and $k=2$) for different values of $f^n(\bar{P})$, which we recall represents the different values that the error covariance can take. In agreement with Theorem \ref{threshold_policy_theorem}, we observe a threshold behaviour in the optimal $\nu_k^*$. In this example we have $P_{0|0}^{\textrm{th}} = f^3(\bar{P})$ and $P_{1|1}^{\textrm{th}} = f^2(\bar{P})$; the thresholds  are in general different for different values of $k$. 
\begin{figure}[htb!]
\centering 
\includegraphics[scale=0.5]{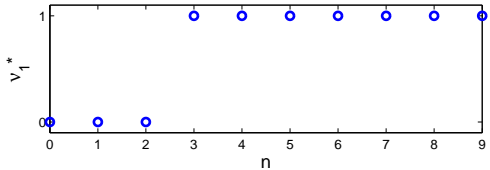} 
\caption{Finite horizon, $K=5$. $\nu_1^*$ for different values of $f^n(\bar{P})$. }
\label{K5k1}
\end{figure} 
\begin{figure}[htb!]
\centering 
\includegraphics[scale=0.5]{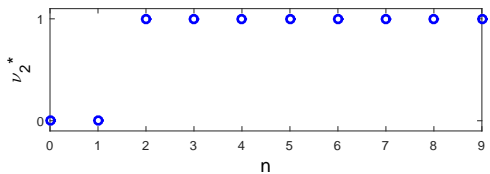} 
\caption{Finite horizon, $K=5$. $\nu_2^*$ for different values of $f^n(\bar{P})$. }
\label{K5k2}
\end{figure} 

We next consider the infinite horizon problem, with $\beta=0.05$.  Fig.  \ref{infinite_horizon} plots the optimal $\nu_k^*$ for different values of $f^n(\bar{P})$, where we again see a threshold behaviour, with $P^{\textrm{th}} = f^3(\bar{P})$.
\begin{figure}[htb!]
\centering 
\includegraphics[scale=0.5]{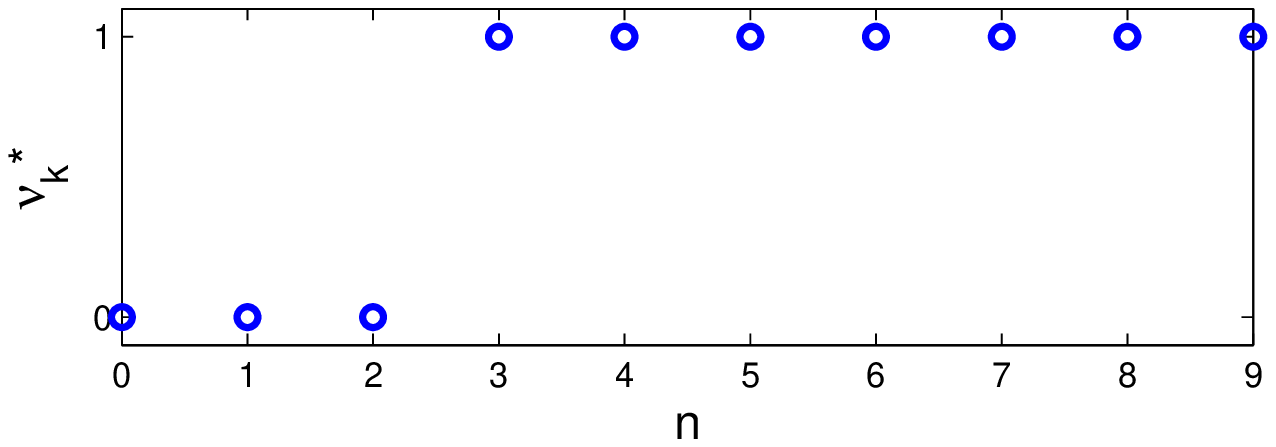} 
\caption{Infinite horizon. $\nu_k^*$ for different values of $f^n(\bar{P})$.}
\label{infinite_horizon}
\end{figure}  
In Fig.~\ref{beta_threshold_plot} we plot  the values of the thresholds for different values of $\beta$. As $\beta$ increases, the relative importance of minimizing the error covariance (vs the energy usage) is increased, thus one should transmit more often, leading to decreasing values of the thresholds. 
\begin{figure}[htb!]
\centering 
\includegraphics[scale=0.48]{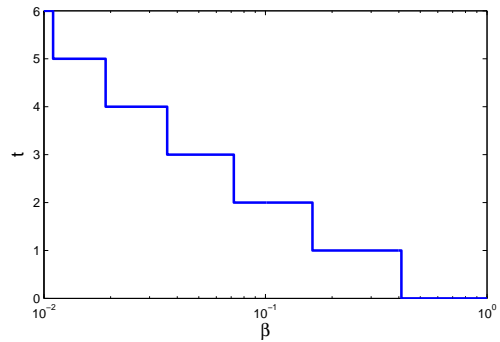} 
\caption{Infinite horizon. Threshold $P^{\textrm{th}}$ vs $\beta$, with $f^t(\bar{P})=P^{\textrm{th}}$.}
\label{beta_threshold_plot}
\end{figure}  

Finally, in Fig. \ref{energy_covariance_comparison_plot} we plot the trace of the expected error covariance vs the expected energy, obtained by solving the infinite horizon problem for different values of $\beta$, with the values computed using the expressions (\ref{expected_energy}) and (\ref{expected_error_covariance}). Note that the plot is discrete as $t \in \mathbb{N}$ in (\ref{expected_energy}) and (\ref{expected_error_covariance}), see also Fig.~\ref{beta_threshold_plot}. 
\begin{figure}[htb!]
\centering 
\includegraphics[scale=0.5]{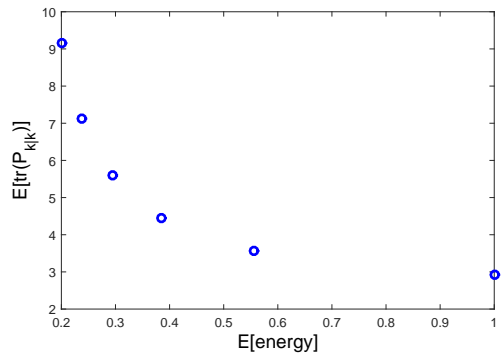} 
\caption{Infinite horizon. Expected error covariance vs expected energy.}
\label{energy_covariance_comparison_plot}
\end{figure}

\subsection{Multiple Sensors}
\label{numerical_multi_sensor_sec}
We first consider a two sensor, scalar system with parameters $A=1.1$, $C_1=1.5, C_2=1$, $Q=1$, $R_1=R_2=1$, $\lambda_1=0.8$, $\lambda_2=0.6$. We solve the infinite horizon problem with $\beta=0.2$. Fig. \ref{two_sensor_case1} plots the optimal $\nu_{1,k}^*$ and  $\nu_{2,k}^*$ for different values of $P_{k-1|k-1}$, with transmission energies $E_1=1, E_2=1$. The behaviour corresponds to scenario (i) of Corollary \ref{two_sensor_structural_lemma}.
\begin{figure}[htb!]
\centering 
\includegraphics[scale=0.5]{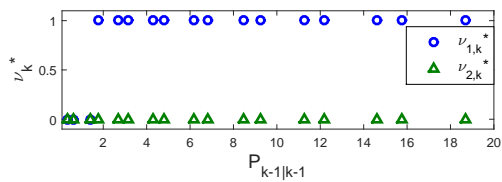} 
\caption{Infinite horizon, $\nu_{1,k}^*$ and  $\nu_{2,k}^*$ for different values of $P_{k-1|k-1}$. $E_1=1, E_2=1$.}
\label{two_sensor_case1}
\end{figure} 
\begin{figure}[htb!]
\centering 
\includegraphics[scale=0.5]{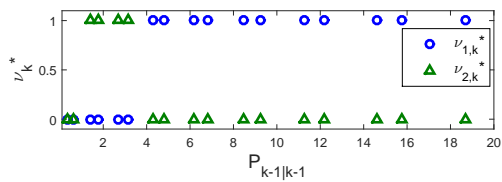} 
\caption{Infinite horizon, $\nu_{1,k}^*$ and  $\nu_{2,k}^*$ for different values of $P_{k-1|k-1}$. $E_1=1, E_2=0.4$.}
\label{two_sensor_case3}
\end{figure} 
Fig. \ref{two_sensor_case3} plots the optimal $\nu_{1,k}^*$ and  $\nu_{2,k}^*$ for different values of $P_{k-1|k-1}$, but with transmission energies $E_1=1, E_2=0.4$. With these parameters, the behaviour  corresponds to scenario (iii) of Corollary \ref{two_sensor_structural_lemma}. The remaining scenarios (ii) and (iv) of Corollary \ref{two_sensor_structural_lemma} can be illustrated  by, e.g., swapping the parameter values of sensors 1 and 2.

\subsection{Performance Comparison}
Here we will compare the performance of our approach with a scheme similar to that investigated in \cite{LiLemmonWang} (see also \cite{XiaGuptaAntsaklis}) that transmits when the difference between the state estimates at  sensor $m$ and the remote estimator exceeds a threshold $T_m$.\footnote{The scheme is not exactly the same as in \cite{LiLemmonWang} since here we also consider random packet drops.} In order to avoid collisions, which from simulation experience will greatly deteriorate performance, we allow each sensor to transmit (if it exceeds the threshold $T_m$) once every $M$ time steps in a round-robin fashion. Specifically,
\begin{equation}
\label{sensor_threshold_decision}
\begin{split}
\nu_{m,k} & = \left\{\begin{array}{lcl} 1   & , & ||\hat{x}_{m,k-1|k-1}^s - \check{x}_{k-1}|| > T_m \\ & &  \textrm{and it is sensor $m$'s turn to transmit}\\ 0 & , & \textrm{otherwise} \end{array}  \right. 
\end{split}
\end{equation} 
where 
$\check{x}_k$ is the remote estimate at time $k$.
 
  When the decisions $\nu_{m,k}$ depend on the state estimates, the optimal estimator is generally nonlinear \cite{SijsLazar,WuJiaJohanssonShi}. In the spirit of (\ref{remote_estimator_eqns_multi_sensor}), we consider a suboptimal estimator $\check{x}_k$ given by
\begin{equation}
\label{sensor_threshold_decision_estimator}
\begin{split}
\check{x}_{k} & = \left\{\begin{array}{ccc} \hat{x}_{m,k|k}^s & , & \nu_{m,k} \gamma_{m,k} = 1 \\ A \check{x}_{k-1} & , & \textrm{otherwise}.  \end{array}  \right. 
\end{split}
\end{equation} 

With this scheme the decision on whether to transmit is made by the sensor (rather than the remote estimator). The sensor has access to its local state estimate, but also requires knowledge of the remote estimate.
In the single sensor case, the sensor can reconstruct the remote estimate $\check{x}_{k-1}$ provided the values of $\gamma_{k-1}$ are fed back to the sensor before transmission at time $k$.  However, in the multiple sensor case simply feeding back $\gamma_{m,k-1}$ is not enough for the sensors to reconstruct the  remote estimate, and it appears that one requires the entire state estimate $\check{x}_{k-1}$ to be fed back to the sensors in order to implemement this scheme. Thus the scheme (\ref{sensor_threshold_decision})-(\ref{sensor_threshold_decision_estimator}) is not intended as a practical scheme for the multi-sensor case, but is only used here for performance comparison with our approach that schedules transmit decisions at the remote estimator. 

We consider the  two sensor, vector system with parameters $$A = \left[\begin{array}{cc} 1.1 & 0.2 \\ 0.2 & 0.8  \end{array}\right],\, C_1 = \left[\begin{array}{cc} 1.5 & 1.5 \end{array} \right],\,  C_2 = \left[\begin{array}{cc} 1 & 1 \end{array} \right],$$ 
$Q=I, R_1=R_2=1.$ The packet reception probabilities are  $\lambda_1=0.8$, $\lambda_2=0.6$, and the transmission energies are $E_1=1, E_2=0.4$. 
In Fig. \ref{energy_covariance_comparison_plot_two_sensor_round_robin} we plot the trace of the expected error covariance vs the expected total energy, obtained by solving the infinite horizon problem (\ref{infinite_horizon_problem_multi_sensor}) for different values of $\beta$. We compare the performance with the scheme (\ref{sensor_threshold_decision})-(\ref{sensor_threshold_decision_estimator}) for different values of the thresholds $T_1$ and $T_2$, with $T_1=T_2$. 
\begin{figure}[htb!]
\centering 
\includegraphics[scale=0.5]{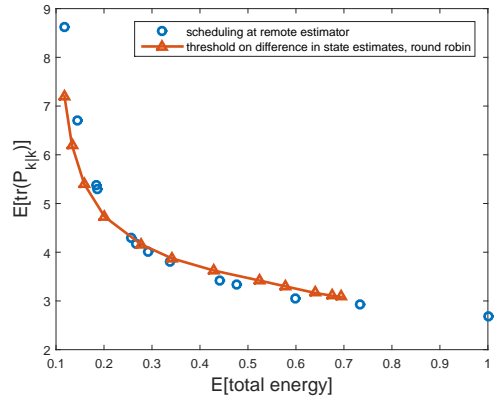} 
\caption{Infinite horizon, two sensors. Expected error covariance vs expected total energy.}
\label{energy_covariance_comparison_plot_two_sensor_round_robin}
\end{figure} 
For smaller expected energies, the scheme of (\ref{sensor_threshold_decision})-(\ref{sensor_threshold_decision_estimator}) performs better due to the utilization of additional information in the local state estimates, but as stated before requires feedback of the full remote state estimates in order to  implement. The approach proposed in Sections \ref{model_sec}-\ref{optimization_prob_sec} performs better when a smaller expected error covariance specification (with corresponding higher expected energy) is required. Furthermore, scheduling at the remote estimator doesn't require feedback of the remote estimates, but only feedback of  the decision variables $\nu_{m,k}$, which takes values of either 0 or 1 (i.e., one bit of information).

\section{Markovian packet drops}
\label{markovian_sec}
So far we have considered i.i.d. packet drops. In this section we briefly outline how our results extend to the case when state estimates are transmitted and the packet loss processes are Markovian. For notational simplicity, we restrict ourselves to the single sensor situation with the local Kalman filter operating in steady state, where the packet loss process $\{\gamma_k\}$ is a Markov chain, with transition probabilities $p \triangleq\mathbb{P}(\gamma_k=0|\gamma_{k-1}=1)$ and $q\triangleq\mathbb{P}(\gamma_k=1|\gamma_{k-1}=0)$. The probabilities $p$ and $q$ are also known as, respectively, the failure and recovery rates \cite{HuangDey}. We shall consider transmission decisions $\nu_k(P_{k-1|k-1},\gamma_{k-1})$ dependent only on $P_{k-1|k-1}$ and $\gamma_{k-1}$, in which case the remote estimator equations will still have the form 
\begin{equation*}
\begin{split}
\tilde{x}_{k|k} & = \left\{\begin{array}{ccc} \hat{x}_{k|k}^s & , & \nu_{k} \gamma_{k} = 1 \\ A \tilde{x}_{k-1|k-1} & , & \nu_{k} \gamma_{k} = 0 \end{array}  \right. \\
\tilde{P}_{k|k} & = \left\{\begin{array}{ccl} \bar{P} & , &  \nu_{k} \gamma_{k} = 1 \\ A \tilde{P}_{k-1|k-1} A^T + Q & ,  & \nu_{k} \gamma_{k} = 0. \end{array} \right. 
\end{split}
\end{equation*} 

The finite horizon problem becomes:
\begin{equation}
\label{finite_horizon_problem_markovian}
\begin{split}
 \min_{\{\nu_k\}} \sum_{k=1}^{K}  \mathbb{E}\left[  \beta \textrm{tr} P_{k|k} + (1-\beta) \nu_k E |P_{k-1|k-1}, \gamma_{k-1}, \nu_k   \right]
 \end{split}
 \end{equation}
for some $\beta \in (0,1)$, where now 
\begin{equation*}
\begin{split}
&\mathbb{E}[ \textrm{tr} P_{k|k} | P_{k-1|k-1}, \gamma_{k-1}, \nu_k] 
\\& 
= \nu_k \big(\gamma_{k-1} (1-p) 
 + (1-\gamma_{k-1})q \big) \textrm{tr} \bar{P}
 \\ & \quad + \big(1-\nu_k (\gamma_{k-1} (1-p)  + (1-\gamma_{k-1})q)\big)  \textrm{tr} f(P_{k-1|k-1}).
\end{split}
\end{equation*}
The infinite horizon problem is:
\begin{equation}
\label{infinite_horizon_problem_markovian}
 \min_{\{\nu_k\}} \limsup_{K\rightarrow \infty} \! \frac{1}{K} \! \sum_{k=1}^{K}  \mathbb{E}\!\left[ \beta \textrm{tr} P_{k|k} \! +\! (1\!-\!\beta) \nu_k E |P_{k\!-\!1|k\!-\!1},\gamma_{k\!-\!1}, \nu_k  \right].
 \end{equation}
 
The following results can be derived:
\begin{lemma}
\label{Lk_difference_lemma_markovian}
Let the functions $J_k(\cdot,\cdot): \mathcal{S} \times \{0,1\} \rightarrow \mathbb{R}$ be defined recursively for $k=1,\dots,K$ as:
\begin{equation*}
\begin{split}
&J_{K+1}(P,\gamma)  =0 \\
&J_k(P,\gamma)  = \min_{\nu \in \{0,1\}} \Big\{ \beta \big[ \nu (\gamma(1-p)+(1-\gamma)q) \textnormal{tr} \bar{P} 
\\ & \quad 
+ (1-\nu (\gamma(1-p)+(1-\gamma)q))  \textnormal{tr} f(P) \big] \\ & \quad + (1-\beta) \nu E  + \nu (\gamma(1-p)+(1-\gamma)q) J_{k+1}(\bar{P},1) 
\\ & \quad 
+ \big(1-\nu (\gamma(1-p)+(1-\gamma)q)\big) J_{k+1} (f(P),0) \Big\}.
\end{split}
\end{equation*}
and the functions $L_k^1(\cdot,\cdot) : \mathcal{S} \times \{0,1\} \rightarrow \mathbb{R}, k=1,\dots,K$ and $L_k^0(\cdot,\cdot) : \mathcal{S} \times \{0,1\} \rightarrow \mathbb{R}, k=1,\dots,K$ as: 
\begin{equation*}
\begin{split}
L_k^1(P,\nu) &\triangleq \beta \left[ \nu (1\!-\!p) \textnormal{tr} \bar{P} \! + \! (1\!-\!\nu (1\!-\!p))  \textnormal{tr} f(P) \right] 
\\ & \quad 
+ (1-\beta) \nu E  + \nu (1-p) J_{k+1}(\bar{P},1) \\ & \quad+ (1-\nu (1-p)) J_{k+1} (f(P),0) \\
L_k^0(P,\nu) &\triangleq \beta \left[ \nu q \textnormal{tr} \bar{P}  + (1-\nu q)  \textnormal{tr} (f(P)) \right] 
\\ &\quad
+ (1-\beta) \nu E  + \nu q J_{k+1}(\bar{P},1) 
\\ &\quad
+ (1-\nu q) J_{k+1} (f(P),0).
\end{split}
\end{equation*}
Then the functions $L_k^1(P,1) - L_k^1(P,0)$ and $L_k^0(P,1) - L_k^0(P,0)$ are decreasing functions of $P$. 
\end{lemma}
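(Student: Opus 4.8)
The plan is to reduce the whole statement to a single monotonicity property of the cost-to-go functions $J_k$ and then close by elementary algebra. The key observation is that, after cancelling the common $\nu=0$ contribution, each of the two differences depends on $P$ only through the quantities $\textrm{tr} f(P)$ and $J_{k+1}(f(P),0)$. So the crux is to first establish the auxiliary claim that, for each fixed $\gamma\in\{0,1\}$, the map $P\mapsto J_k(P,\gamma)$ is increasing on $\mathcal{S}$ for every $k=1,\dots,K+1$. This is the Markovian analogue of Lemma \ref{J_fn_increasing_lemma_multi_sensor}, and it is well posed because, in the single-sensor steady-state case, $\mathcal{S}$ is totally ordered by Lemma \ref{total_ordering_lemma}.

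I would prove the auxiliary claim by backward induction on $k$. The base case $J_{K+1}\equiv 0$ is trivial. For the inductive step, fix $\gamma$ and abbreviate $\alpha\triangleq\gamma(1-p)+(1-\gamma)q\in[0,1]$, which is a constant in $P$ (equal to $1-p$ when $\gamma=1$ and to $q$ when $\gamma=0$, both in $[0,1]$). Evaluating the bracketed expression inside the minimum at $\nu=0$ gives $\beta\,\textrm{tr} f(P)+J_{k+1}(f(P),0)$, which is increasing in $P$ by Lemma \ref{f_increasing_lemma} and the induction hypothesis applied to $J_{k+1}(\cdot,0)$ (since $f$ is itself increasing, $J_{k+1}(f(\cdot),0)$ inherits monotonicity). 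Evaluating at $\nu=1$ gives a term that is constant in $P$ (those involving $\bar P$) plus the nonnegatively weighted increasing quantities $(1-\alpha)\,\textrm{tr} f(P)$ and $(1-\alpha)J_{k+1}(f(P),0)$, using $1-\alpha\geq0$; hence this branch is increasing in $P$ as well. Since $J_k(P,\gamma)$ is the pointwise minimum of these two increasing functions, and a pointwise minimum of increasing functions is increasing, the claim follows.

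With the auxiliary claim in hand the lemma is immediate. Substituting $\nu=1$ and $\nu=0$ into $L_k^1$ and cancelling yields
\begin{equation*}
\begin{split}
L_k^1(P,1)-L_k^1(P,0) &= \beta(1-p)[\textrm{tr}\bar P-\textrm{tr} f(P)]+(1-\beta)E \\
&\quad +(1-p)[J_{k+1}(\bar P,1)-J_{k+1}(f(P),0)],
\end{split}
\end{equation*}
and the identical manipulation for $L_k^0$ gives
\begin{equation*}
\begin{split}
L_k^0(P,1)-L_k^0(P,0) &= \beta q[\textrm{tr}\bar P-\textrm{tr} f(P)]+(1-\beta)E \\
&\quad +q[J_{k+1}(\bar P,1)-J_{k+1}(f(P),0)].
\end{split}
\end{equation*}
In each expression the terms $\textrm{tr}\bar P$, $(1-\beta)E$ and $J_{k+1}(\bar P,1)$ are constants in $P$, while $\textrm{tr} f(P)$ is increasing by Lemma \ref{f_increasing_lemma} and $J_{k+1}(f(P),0)$ is increasing by the auxiliary claim. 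Since both increasing quantities enter with the nonpositive coefficients $-\beta(1-p),-(1-p)$ (respectively $-\beta q,-q$), a constant minus a nonnegatively weighted increasing function is a decreasing function of $P$, so both differences are decreasing, as asserted.

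The main obstacle is really only the auxiliary monotonicity of $J_k$; once that is available the remainder is bookkeeping. The sole new feature relative to the i.i.d. argument of Lemma \ref{J_fn_increasing_lemma_multi_sensor} is the extra Markov-state argument $\gamma$, but because the two post-decision states $(\bar P,1)$ and $(f(P),0)$ carry \emph{fixed} second coordinates, the dependence on $\gamma$ enters the recursion only through the constant $\alpha$, and the induction closes using only the monotonicity of $J_{k+1}(\cdot,0)$.
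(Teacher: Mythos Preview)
Your proof is correct and follows essentially the same route as the paper: first establish that $J_k(\cdot,\gamma)$ is increasing in $P$ for each fixed $\gamma$ by backward induction (the paper simply says ``similar to Lemma~\ref{J_fn_increasing_lemma_multi_sensor}''), then compute the two differences explicitly and observe that the $P$-dependent terms $\textrm{tr} f(P)$ and $J_{k+1}(f(P),0)$ enter with nonpositive coefficients. Your write-up is in fact more detailed than the paper's, which leaves most of these steps to the reader.
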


\begin{proof}
Similar to Lemma \ref{J_fn_increasing_lemma_multi_sensor}, we can show that $J_k(P,1)$ and $J_k(P,0)$ are both increasing functions of $P$. One can also easily verify that 
\begin{equation*}
\begin{split}
& L_k^1(P,1) \!-\! L_k^1(P,0)  \\ & = \beta (1\!-\!p) \textrm{tr} \bar{P} \!+\! (1\!-\!\beta) E  +\! (1\!-\!p) J_{k+1} (\bar{P},1) 
\\ & \quad - \beta (1-p) \textrm{tr} f(P) - (1-p) J_{k+1} (f(P),0)
\end{split}
\end{equation*}
and 
\begin{equation*}
\begin{split}
L_k^0(P,1) - L_k^0(P,0) & = \beta q \textrm{tr} \bar{P} + (1-\beta) E + q J_{k+1} (\bar{P},1) 
\\ & \quad 
- \beta q \textrm{tr} f(P) - q J_{k+1} (f(P),0)
\end{split}
\end{equation*}
which can both be shown to be decreasing functions of $P$. 
\end{proof}

Lemma \ref{Lk_difference_lemma_markovian} implies that in the finite horizon problem (\ref{finite_horizon_problem_markovian}), for each $k \in \{1,\dots,K\}$ there exist two (in general different) thresholds $P_{k-1|k-1}^{th,1}$ and $P_{k-1|k-1}^{th,0} \in \mathcal{S}$, $k=1,\dots,K$, such that when $\gamma_{k-1}=1$ then $\nu_k^*=0$ if and only if $P_{k-1|k-1} < P_{k-1|k-1}^{th,1}$; and when $\gamma_{k-1}=0$ then $\nu_k^*=0$ if and only if $P_{k-1|k-1} < P_{k-1|k-1}^{th,0}$. 

For the infinite horizon problem (\ref{infinite_horizon_problem_markovian}), arguing in a similar manner as in the proof of Lemma \ref{multi_sensor_structural_properties_theorem_inf_horizon}, the optimal policy will be such that when $\gamma_{k-1}=1$ then $\nu_k^*=0$ if and only if $P_{k-1|k-1} < P^{th,1}$; and when $\gamma_{k-1}=0$ then $\nu_k^*=0$ if and only if $P_{k-1|k-1} < P^{th,0}$, for some constant thresholds $P^{th,1}$ and $P^{th,0} \in \mathcal{S}$. 
Similar to Remark \ref{computational_remark}, knowing that the optimal policy is a threshold policy can lead to significant computational savings when solving problems (\ref{finite_horizon_problem_markovian}) and (\ref{infinite_horizon_problem_markovian}).

\section{Conclusion}
This paper has studied an event based remote estimation problem using multiple sensors, with sensor transmissions  over a shared packet dropping channel, where at most one sensor may transmit at a time. By considering an optimization problem for transmission scheduling   that minimizes a convex combination of the expected error covariance at the remote estimator and the expected energy across the sensors, we have derived structural properties on the form of the optimal solution, when either local state estimates or sensor measurements are transmitted. In particular, our results show that in the single sensor case a threshold policy is optimal.  Possible extensions of this work include the consideration of event triggered estimation with energy harvesting capabilities at the sensors \cite{NayyarBasar,Nourian_EH},  channels where multiple sensors can transmit at the same time,  and efficient ways to solve the optimal transmission scheduling problem in the case when measurements are transmitted. 
  
\begin{appendix}

\subsection{Derivation of Optimal Estimator Equations (\ref{multi_sensor_optimal_estimator})}
\label{optimal_estimator_derivation}

Note first that for the local Kalman filters at the sensors,  we have, $\forall m \in \{1,\dots,M\}$,
\begin{equation}
\label{local_KF_relation1}
\begin{split}
\hat{x}_{m,k+1|k}^s & = A \hat{x}_{m,k|k}^s \\
\hat{x}_{m,k|k}^s & = \hat{x}_{m,k|k-1}^s + K^s_{m,k} (y_{m,k} - C_m \hat{x}_{m,k|k-1}^s) 
\end{split}
\end{equation}
from which one can obtain
\begin{equation}
\label{local_KF_relation2}
\begin{split}
&x_{k+1} - \hat{x}_{m,k+1|k}^s \\& = A(I-K^s_{m,k} C_m) (x_k -\hat{x}_{m,k|k-1}^s) + w_k - A K^s_{m,k} v_{m,k}. 
\end{split}
\end{equation}

The remote estimator has the form
\begin{equation*}
\begin{split}
\hat{x}_{k+1|k} & = A \hat{x}_{k|k} \\
\hat{x}_{k|k} & = \hat{x}_{k|k-1} + \gamma_{\breve{m},k} K_{\breve{m},k} (\hat{x}_{\breve{m},k|k}^s - \hat{x}_{k|k-1})
\end{split}
\end{equation*}
when sensor $\breve{m} \in \{1,\dots,M\}$ is scheduled to transmit. We can write
\begin{equation}
\label{multi_sensor_error_relation}
\begin{split}
&x_{k+1} - \hat{x}_{k+1|k} \\ & = A x_k + w_k - A \hat{x}_{k|k-1} - \gamma_{\breve{m},k} A K_{\breve{m},k} (\hat{x}_{\breve{m},k|k}^s - \hat{x}_{k|k-1} ) \\
& = A (I -\gamma_{\breve{m},k} K_{\breve{m},k})(x_k - \hat{x}_{k|k-1}) + w_k \\ & \quad + \gamma_{\breve{m},k} A K_{\breve{m},k} (x_k - \hat{x}_{\breve{m},k|k}^s) \\
& = A (I-\gamma_{\breve{m},k} K_{\breve{m},k}) (x_k - \hat{x}_{k|k-1}) + w_k \\ & \quad \!+\! \gamma_{\breve{m},k} A K_{\breve{m},k} \left[ (I \!-\! K^s_{\breve{m},k} C_{\breve{m}}) (x_k \!-\! \hat{x}_{\breve{m},k|k-1}^s) \!-\! K^s_{\breve{m},k} v_{\breve{m},k} \right]
\end{split}
\end{equation}
where the last line comes from (\ref{local_KF_relation1}). Define $\mathcal{A}$ by (\ref{augmented_A_matrix}).
\begin{figure*}[!t]
\begin{equation}
\label{augmented_A_matrix}
\mathcal{A} = \left[\begin{array}{cccccc} 
A(I-\gamma_{\breve{m},k} K_{m,k}) & 0 & \dots &  \gamma_{\breve{m},k} A K_{\breve{m},k} (I-K^s_{\breve{m},k} C_{\breve{m}})  & \dots & 0 \\ 
0 & A(I-K^s_{1,k} C_1) & 0 & \dots &  &  0 \\
\vdots & & \ddots & &  & \vdots \\
0 & \dots & &  A (I-K^s_{\breve{m},k} C_{\breve{m}}) &  \dots & 0 \\
\vdots & & & & \ddots & \vdots \\
0 & \dots & & & & A(I-K^s_{M,k} C_M) 
 \end{array} \right]. 
 \end{equation}
\end{figure*}
 Using (\ref{multi_sensor_error_relation}) and (\ref{local_KF_relation2}), consider the augmented system
\begin{equation*}
\begin{split}
& \left[\!\! \begin{array}{c} x_{k\!+\!1} \!-\! \hat{x}_{k\!+\!1|k} \\  x_{k\!+\!1} \!-\! \hat{x}_{1,k\!+\!1|k}^s \\ \vdots \\  x_{k\!+\!1} \!-\! \hat{x}_{\breve{m},k\!+\!1|k}^s \\ \vdots \\ x_{k\!+\!1} \!-\! \hat{x}_{M,k\!+\!1|k}^s \end{array} \!\! \right] \! = \!
\mathcal{A} \!
 \left[\!\! \begin{array}{c} x_{k} \!-\! \hat{x}_{k|k\!-\!1} \\  x_{k} \!-\! \hat{x}_{1,k|k\!-\!1}^s \\ \vdots \\  x_{k} \!-\! \hat{x}_{\breve{m},k|k\!-\!1}^s \\ \vdots \\ x_{k} \!-\! \hat{x}_{M,k|k\!-\!1}^s \end{array} \!\! \right]
 \! + \!\left[\!\! \begin{array}{c} I \\ \vdots \\ \vdots \\ I \end{array}\!\! \right] \!w_k
 \\ & \quad +\! \left[ \!\!\! \begin{array}{c} \gamma_{\breve{m},k} A K_{\breve{m},k} K^s_{\breve{m},k} \\ 0 \\ \vdots \\ A K^s_{\breve{m},k} \\  \vdots \\ 0 \end{array} \!\!\! \right] \! v_{\breve{m},k} \!+\!  \left[\!\! \begin{array}{c} 0 \\ A K^s_{1,k} \\ 0 \\ \vdots \\ \vdots \\ 0 \end{array} \!\!\right] \! v_{1,k} \!+\! \dots \\ &  \quad +\! \left[\!\! \begin{array}{c} 0 \\ 0 \\ \vdots \\ \vdots \\ 0 \\ A K^s_{M,k} \end{array} \!\!\right]\! v_{M,k}.
\end{split}
\end{equation*}

Let us use the shorthand $P_k = P_{k|k-1}$. 
Then we have the recursion given in (\ref{multi_sensor_optimal_Pk_recursion}).
\begin{figure*}
\begin{equation}
\label{multi_sensor_optimal_Pk_recursion}
\begin{split}
& \left[ \begin{array}{cccc} 
P_{k+1} & P_{01,k+1} & \dots & P_{0M,k+1} \\ P_{10,k+1} & P_{11,k+1} & \dots & P_{1M,k+1} \\ \vdots &   \vdots & \ddots & \vdots \\  P_{M0,k+1} &  P_{M1,k+1} & \dots & P_{MM,k+1} \end{array} \right]  = \mathcal{A}
 \left[ \begin{array}{cccc} 
P_{k} & P_{01,k} & \dots & P_{0M,k} \\ P_{10,k} & P_{11,k} & \dots & P_{1M,k} \\ \vdots &   \vdots & \ddots & \vdots \\  P_{M0,k} &  P_{M1,k} & \dots & P_{MM,k} \end{array} \right] 
\mathcal{A}^T
 \\ & + \! \left[ \begin{array}{ccc} Q & \dots  & Q \\ \vdots  & \ddots &  \vdots  \\ Q &  \dots & Q \end{array} \right] 
\! + \! \left[ \begin{array}{cccccc} \gamma_{\breve{m},k} A K_{\breve{m},k} K^s_{\breve{m},k} R_{\breve{m}} K^{sT}_{\breve{m},k} K_{\breve{m},k}^T A^T & 0 & \dots & \gamma_{\breve{m},k} A K_{\breve{m},k} K^s_{\breve{m},k} R_{\breve{m}} K^{sT}_{\breve{m},k} A^T & \dots & 0  \\ 0 & 0 & \dots & 0 & \dots & 0 \\ \vdots & & & \vdots & & \vdots \\   \gamma_{\breve{m},k} A K^s_{\breve{m},k} R_{\breve{m}} K^{sT}_{\breve{m},k} K_{\breve{m},k}^T A^T &  0 & \dots & A  K^s_{\breve{m},k} R_{\breve{m}} K^{sT}_{\breve{m},k} A^T & \dots & 0 \\ \vdots & \vdots & & \vdots & & \vdots \\ 0 & 0 & \dots & 0 & \dots & 0 \end{array} \right] 
\\&  +  \left[ \begin{array}{cccc} 0 & 0 & \dots & 0 \\ 0 & A K^s_{1,k} R_1 K^{sT}_{1,k} A^T & \dots & 0  \\ \vdots & \vdots & \ddots & \vdots \\ 0 & 0 & \dots & 0 \end{array} \right]  + \dots + \left[ \begin{array}{cccc} 0 & 0 & \dots & 0 \\ 0 & 0 & \dots & 0 \\ \vdots & \vdots & \ddots & \vdots \\ 0 & 0 & \dots & A K^s_{M,k} R_M K^{sT}_{M,k} A^T \end{array} \right]. 
\end{split}
\end{equation}
\end{figure*}
The recursions for $P_k, P_{0m,k}, P_{mn,k}$ in the optimal estimator equations (\ref{multi_sensor_optimal_estimator}) can then be extracted from (\ref{multi_sensor_optimal_Pk_recursion}) and (\ref{augmented_A_matrix}). 
It remains to determine the optimal gains $K_{\breve{m},k}$. When $\gamma_{\breve{m},k}=0$, we have $P_{k|k} = P_k$ irrespective of $K_{\breve{m},k}$. When $\gamma_{\breve{m},k}=1$, we have 
\begin{equation*}
\begin{split}
&P_{k|k}  =  (I \!-\!  K_{\breve{m},k}) P_{k} (I \!-\!  K_{\breve{m},k})^T \!  + \! (I\!-\! K_{\breve{m},k}) P_{0\breve{m},k} \\ &\!\!\times\! (I \!-\! K^s_{\breve{m},k} C_{\breve{m}})^T \! K_{\breve{m},k}^T \!  + \!  K_{\breve{m},k} (I \!- \!K^s_{\breve{m},k} C_{\breve{m}}) P_{0\breve{m},k}^T (I \!-\! K_{\breve{m},k})^T  \\
& \quad+   K_{\breve{m},k} (I-K^s_{\breve{m},k} C_{\breve{m}}) P_{\breve{m},k|k}^s (I-K^s_{\breve{m},k} C_{\breve{m}})^T K_{\breve{m},k}^T \\ & \quad +   K_{\breve{m},k} K^s_{\breve{m},k} R_{\breve{m}} K^{sT}_{\breve{m},k} K_{\breve{m},k}^T  \\ 
& =  K_{\breve{m},k} \Big(P_k - P_{0\breve{m},k} (I - K^s_{\breve{m},k} C_{\breve{m}})^T - (I - K^s_{\breve{m},k} C_{\breve{m}}) P_{0\breve{m},k}^T \\ & \quad  + (I-K^s_{\breve{m},k} C_{\breve{m}}) P_{\breve{m},k|k}^s (I - K^s_{\breve{m},k} C_{\breve{m}})^T \\ & \quad + \! K^s_{\breve{m},k} R_{\breve{m}} K^{sT}_{\breve{m},k} \Big) K_{\breve{m},k}^T \\ & \quad + \! K_{\breve{m},k} \Big(\!-\!P_k \! + \!(I\!-\!K^s_{\breve{m},k} C_{\breve{m}}) P_{0\breve{m},k}^T  \Big) \\ & \quad +\! \Big(\!-\!  P_k \!+\!  P_{0\breve{m},k} (I\!-\!K^s_{\breve{m},k} C_{\breve{m}})^T  \Big) K_{\breve{m},k}^T \! +\!  P_k   
\end{split}
\end{equation*}
Choosing $K_{\breve{m},k}$ to minimize the expression for $P_{k|k}$, e.g. by differentiating $\textrm{tr} P_{k|k}$ with respect to $K_{\breve{m},k}$ (see \cite{Simon_book}), we  find that $ K_{\breve{m},k} = I$ if 
\begin{equation*}
\begin{split}
& P_{k|k-1} \!-\! P_{0\breve{m},k}(I\!-\!K^s_{\breve{m},k} C_{\breve{m}})^T \!-\! (I\!-\!K^s_{\breve{m},k} C_{\breve{m}}) P_{0\breve{m},k}^T \! \\ & \quad + \!(I\!-\!K^s_{\breve{m},k} C_{\breve{m}}) P_{\breve{m},k|k}^s (I\!-\!K^s_{\breve{m},k} C_{\breve{m}})^T \! +\! K^s_{\breve{m},k} R_{\breve{m}} K^{sT}_{\breve{m},k} \\&  = P_{k|k-1} - P_{0\breve{m},k} (I - K^s_{\breve{m},k} C_{\breve{m}})^T
\end{split}
\end{equation*}
and
\begin{equation*}
\begin{split}
& K_{\breve{m},k}  = \Big(P_k -  P_{0\breve{m},k}(I - K^s_{\breve{m},k} C_{\breve{m}})^T\Big) \Big(P_k - P_{0\breve{m},k} \\ & \quad \times (I - K^s_{\breve{m},k} C_{\breve{m}})^T - (I - K^s_{\breve{m},k} C_{\breve{m}}) P_{0\breve{m},k}^T \\ & \quad + (I-K^s_{\breve{m},k} C_{\breve{m}}) P_{\breve{m},k|k}^s (I - K^s_{\breve{m},k} C_{\breve{m}})^T  + K^s_{\breve{m},k} R_{\breve{m}} K^{sT}_{\breve{m}k} \Big)^{-1} 
\end{split}
\end{equation*}
otherwise.

The equations (\ref{remote_estimator_no_tx}) when no sensors are scheduled to transmit can be obtained by e.g. setting $\gamma_{\breve{m},k} = 0$ in (\ref{multi_sensor_optimal_estimator}). 

\subsection{Proof of Theorem \ref{fixed_point_theorem}}
\label{fixed_point_theorem_proof}
We first note that if $K_m=I$, then the first equation of (\ref{fixed_point_equations}) becomes
\begin{equation*}
\begin{split}
P &= (1-\lambda_m) A P A^T + Q  + \lambda_m A K^s_m R_m (K^s_m)^T A^T \\ & \quad + \lambda_m A (I-K^s_m C_m) \bar{P}_m^s (I-K^s_m C_m)^T A^T \\ & = \sqrt{(1\!-\!\lambda_m)} A P \sqrt{(1\!-\!\lambda_m)} A^T \!\!+\! Q \!+\! \lambda_m A K^s_m R_m (K^s_m)^T \!A^T \\ & \quad + \lambda_m A (I-K^s_m C_m) \bar{P}_m^s (I-K^s_m C_m)^T A^T,
\end{split}
\end{equation*}
which is a Lyapunov equation, that has a unique solution $P$ if either (i) $A$ is stable, or (ii) $A$ is unstable but with
$$\lambda_m > 1 - \frac{1}{\max_{i} |\sigma_i(A)|^2}.$$

Next, we will show that the second equation of (\ref{fixed_point_equations}) also has a solution $P_{0m} = \bar{P}_m^s$, irrespective of the value of $K_m$. 
We begin by recalling the following expressions for the error covariance and Kalman gain for the local Kalman filter at sensor $m$:
\begin{equation}
\label{local_KF_covariance_relations}
\begin{split}
\bar{P}_m^s & = A \bar{P}_m^s A^T \!+\! Q\! -\! A \bar{P}_m^s C_m^T (C_m \bar{P}_m^s C_m^T \!+\! R_m)^{-1} C_m^T \bar{P}_m^s A^T \\
K^s_m & = \bar{P}_m^s C_m^T (C_m \bar{P}_m^s C_m^T + R_m)^{-1}. 
\end{split}
\end{equation}
Since we can use (\ref{local_KF_covariance_relations}) to show that
\begin{equation*}
\begin{split}
&A \bar{P}_m^s (I \!-\!K^s_m C_m)^T A^T \!\! +\! Q =
  A \bar{P}_m^s A^T \!\!-\! A \bar{P}_m^s  C_m^T K^{sT}_{m} A^T \!\!+\! Q \\
& \!=\!  A \bar{P}_m^s A^T \!\!- \!A \bar{P}_m^s C_m^T (C_m \bar{P}_m^s C_m^T \!+\! R_m)^{-1} C_m^T \bar{P}_m^s A^T\!\! +\! Q \!=\! \bar{P}_m^s
\end{split}
\end{equation*}
and
\begin{equation*}
\begin{split}
& K^s_m C_m \bar{P}_m^s (I - K^s_m C_m)^T - K^s_m R K^{sT}_{m}  \\ &  = K^s_m C_m \bar{P}_m^s - K^s_m C_m \bar{P}_m^s  C_m^T K^{sT}_{m} - K^s_m R K^{sT}_{m}\\
& =\! K^s_m (C_m \bar{P}_m^s C_m^T \!+\! R_m) K^{sT}_{m} \!-\!  K^s_m C_m \bar{P}_m^s  C_m^T K^{sT}_{m} \\ & \quad \!-\! K^s_m R K^{sT}_{m} = 0,
\end{split}
\end{equation*}
we have
\begin{equation*}
\begin{split}
&A (I\!-\!\lambda_{m} K_{m}) P_{m}^s (I\!-\!K^s_m C_m)^T \! A^T \!\!+\! \lambda_{m} A K_{m} (I\!-\!K^s_m C_m) \\ & \quad \times \bar{P}_m^s (I \!-\! K^s_m C_m)^T \! A^T \!\!+\! Q\! +\! \lambda_{m} A K_{m} K^s_m R_m K^{sT}_{m} A^T \\
& =\! A \bar{P}_m^s (I \!-\!K^s_m C_m)^T \! A^T \!\!+\! Q \!-\! \lambda_m A K_m \big[ \bar{P}_m^s (I \!-\! K^s_m C_m)^T\!\! \\ & \quad- \!(I\!-\! K^s_m C_m) \bar{P}_m^s (I \!-\! K^s_m C_m)^T  \!-\! K^s_m R_m K^{sT}_{m} \big] A^T \\
& = \! A \bar{P}_m^s (I -K^s_m C_m)^T A^T  + Q - \lambda_m A K_m \big[K^s_m C_m \bar{P}_m^s \\ & \quad \times (I - K^s_m C_m)^T - K^s_m R K^{sT}_{m} \big] A^T  = \bar{P}_m^s.
\end{split}
\end{equation*}
Thus the equation
\begin{equation*}
\begin{split}
P_{0m} & = A (I-\lambda_{m} K_{m}) P_{0m} (I-K^s_m C_m)^T A^T \\ & \quad + \lambda_{m} A K_{m} (I-K^s_m C_m) \bar{P}_m^s (I - K^s_m C_m)^T A^T \\ &\quad + Q + \lambda_{m} A K_{m} K^s_m R_m K^{sT}_{m} A^T 
\end{split}
\end{equation*}
has $P_{0m} = \bar{P}_m^s$ as a fixed point (irrespective of the value of $K_m$). Since for the local Kalman filters $\max_{i} |\sigma_i(A(I-K^s_m C_m))| < 1$, and by assumption $\max_{i} |\sigma_i(A(I-\lambda_m K_m))| < 1$, uniqueness of the fixed point $P_{0m} = \bar{P}_m^s$ can be shown by a similar argument as in p.65 of \cite{AndersonMoore}. 

It remains to show that $K_m=I$. With $P_{0m} = \bar{P}_m^s$,  we now have from (\ref{fixed_point_equations}) that 
\begin{equation*}
\begin{split}
K_{m} & = \Big(P - \bar{P}_{m}^s (I - K^s_m C_m)^T\Big)   \Big(P - \bar{P}_m^s (I-K^s_m C_m)^T \\ & \quad - (I-K^s_m C_m) \bar{P}_m^s  + (I-K^s_m C_m) \bar{P}_m^s (I-K^s_m C_m)^T \\ & \quad + K^s_m R_m K^{sT}_{m} \Big)^{-1}.
\end{split}
\end{equation*}
Similar to above, we can show that
\begin{equation*}
\begin{split}
& - (I-K^s_m C_m) \bar{P}_m^s + (I-K^s_m C_m) \bar{P}_m^s (I-K^s_m C_m)^T \\ & \quad \quad + K^s_m R_m K^{sT}_{m} \\
&\quad= - (I-K^s_m C_m) \bar{P}_m^s C_m^T K^{sT}_{m} + K^s_m R_m K^{sT}_{m} = 0
\end{split}
\end{equation*}
and hence 
$$K_{m}  = (P - \bar{P}_{m}^s (I - K^s_m C_m)^T) (P - \bar{P}_{m}^s (I - K^s_m C_m)^T)^{-1} = I.$$

\subsection{Proof of Theorem \ref{Bellman_eqn_lemma_multi_sensor}}
\label{Bellman_eqn_lemma_multi_sensor_proof}
We will verify the conditions (CAV*1) and (CAV*2) given in Corollary 7.5.10 of \cite{Sennott_book}, which guarantee the existence of solutions to the Bellman equation for  average cost problems with countably infinite state space.  Condition (CAV*1) says that there exists a standard policy\footnote{$d$ is a \emph{standard policy} if there exists a state $z$ such that the expected first passage time $\tau_{i,z}$ from $i$ to $z$ satisfies $\tau_{i,z} < \infty, \forall i \in S$, and the expected first passage cost $c_{i,z}$ from $i$ to $z$ satisfies $c_{i,z} < \infty, \forall i \in S$.} $d$ such that the recurrent class $R_d$ of the Markov chain induced by $d$ is equal to the whole state space $S$.  Condition (CAV*2) says that given $U>0$, the set $D_U = \{i \in S| c(i,a) \leq U \textrm{ for some } a\}$ is finite, where $c(i,a)$ is the cost  at each stage when in state $i$ and using action $a$.

We first restrict ourselves to the case of a single sensor $m$. 
To verify (CAV*1), let $d$ be the policy that always transmits, i.e. $\nu_{m,k}=1, \forall k$. 
Let state $i$ of the induced Markov chain correspond to the value $f^i(\bar{P}_m), i = 0, 1, 2, \dots$, where we define $f^0(\bar{P}) \triangleq \bar{P}_m$. The state diagram of the induced Markov chain is given in Fig \ref{Markov_chain_always_transmit_sensor_m}, with state space $S = \{0,1,2,\dots\}$. 
\begin{figure}[htb!]
\centering 
\includegraphics[scale=0.4]{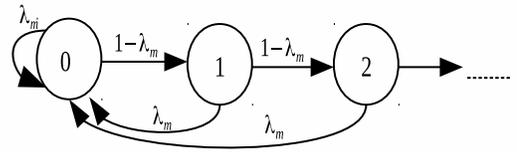} 
\caption{Markov chain for policy of always transmitting}
\label{Markov_chain_always_transmit_sensor_m}
\end{figure}

 Let $z= 0$. Then the expected first passage time from state $i$ to state $z=0$ is
\begin{equation*}
\begin{split}
\tau_{i,z} & = \lambda_m \!+\! 2 (1\!-\!\lambda_m) \lambda_m \!+\! 3 (1\!-\!\lambda_m)^2 \lambda_m \!+\! \dots = \frac{1}{\lambda_m} < \infty.
\end{split}
\end{equation*}
The expected cost of a first passage from state $i$ to state $z=0$ is
\begin{equation}
\label{first_passage_cost}
\begin{split}
&c_{i,z}  = \beta \textrm{tr} f^i(\bar{P}_m) + (1-\beta) E_m + (1-\lambda_m)  c_{(i+1),0}  \\
& = \beta \textrm{tr} f^i(\bar{P}_m) +  (1-\beta) E_m + (1-\lambda_m) \big[\beta \textrm{tr} f^{i+1}(\bar{P}_m) \\ & \,\, +  (1\!-\!\beta) E_m \big] \! + \!(1\!-\!\lambda_m)^2 \left[\beta \textrm{tr} f^{i+2}(\bar{P}_m) \!+\!  (1\!-\!\beta) E_m \right]\! +\! \dots \\
& = \beta \sum_{n=0}^\infty (1-\lambda_m)^n \textrm{tr} f^{i+n}(\bar{P}_m) + \frac{ (1-\beta) E_m}{\lambda_m}.
\end{split}
\end{equation}

For stable $A$, the infinite series above always converges. To show convergence of the infinite series for unstable $A$, note that the scenario where sensor $m$ always transmits to the remote estimator, with packet reception probability $\lambda_m$, corresponds to the situation studied in \cite{Schenato,XuHespanha}. 
By computing the stationary probabilities of the Markov chain in Fig. \ref{Markov_chain_always_transmit_sensor_m}, we can show that the expected error covariance  $\mathbb{E}[P_{k|k}]$ can be written as
$\mathbb{E}[P_{k|k}] = \sum_{n=0}^\infty (1-\lambda_m)^n \lambda_m  f^n (\bar{P}_m)$. 
From the stability results of  \cite{Schenato,XuHespanha}, we know that $\mathbb{E}[P_{k|k}]$ is bounded if and only if $\lambda_m > 1 - \frac{1}{\max_{i} |\sigma_i(A)|^2}$. Thus 
\begin{equation*}
\begin{split}
& \beta \sum_{n=0}^\infty (1-\lambda_m)^n \textrm{tr} f^{i+n} (\bar{P}_m) \\ &  = \frac{\beta}{(1-\lambda_m)^i \lambda_m} \sum_{n=0}^\infty (1-\lambda_m)^{i+n} \lambda_m \textrm{tr} f^{i+n} (\bar{P}_m) < \infty
\end{split}
\end{equation*}
when  $\lambda_m > 1 - \frac{1}{\max_{i} |\sigma_i(A)|^2}$.

Hence $d$ is a standard policy. Furthermore, one can see from Fig. \ref{Markov_chain_always_transmit_sensor_m} that the positive recurrent class $R_d$ of the induced Markov chain is equal to $S$, which verifies (CAV*1). 

Since the cost per stage $c(i,a)$ corresponds to
$\beta \textrm{tr}\tilde{P}_{k|k} + (1-\beta)  \nu_{m,k} E_m$, condition (CAV*2) can also be easily verified. This thus proves the existence of solutions to the infinite horizon problem in the case of a single sensor $m$. 

For the general case with multiple sensors, if at least one sensor $m'$ satisfies $\lambda_{m'} > 1 - \frac{1}{\max_{i} |\sigma_i(A)|^2}$, then solutions to the infinite horizon problem will exist, since restricting to this sensor $m'$ already guarantees the existence of solutions.




\end{appendix}

\bibliography{IEEEabrv,event_triggered}
\bibliographystyle{IEEEtran}

\end{document}